\theoremstyle{definition}
\newtheorem{definition}{Definition}[section]
\newtheorem{proposition}[definition]{Proposition}
\newtheorem{example}[definition]{Example}
\newtheorem{lemma}[definition]{Lemma}
\newtheorem{theorem}[definition]{Theorem}
\newtheorem{corollary}[definition]{Corollary}
\theoremstyle{remark}
\newtheorem{remark}{Remark}
\definecolor{ColBlack}{RGB}{0,0,0} % Black.
\definecolor{ColWhite}{RGB}{255,255,255} % White.
\definecolor{ColBlue}{RGB}{13,20,120} % Blue, edges.
\definecolor{ColCyan}{RGB}{51,122,121} % Cyan, vertices border.
\definecolor{ColLCyan}{RGB}{217,253,253} % Light cyan, vertices inside
\tikzstyle{Centering}=[{baseline={([yshift=-0.5ex]current bounding box.center)}}]
\tikzstyle{EdgeGraph}=[ColBlue,cap=round,very thick]
\tikzstyle{NodeHyper}=[circle, draw=black, fill= black!20,inner sep=1pt,minimum size=3mm, thick, font=\scriptsize]
\tikzstyle{EdgeHyper}=[black, cap=roud, thick]
\tikzstyle{ArcHyper}=[EdgeHyper,->]
\tikzstyle{NodeFree}=[circle,draw=black,fill=white,inner sep=1pt,minimum size=6mm, thick,font=\scriptsize]
\tikzstyle{EdgeFree}=[black,cap=round,very thick]
\tikzset{
  pics/carc/.style args={#1:#2:#3}{
    code={
      \draw[pic actions] (#1:#3) arc(#1:#2:#3);
    }
  }
}
\newcommand{\noemp}{\not = \emptyset}                                   % Non empty
\newcommand{\set}[1]{\left\{#1\right\}}                                 %%% Sets
\newcommand{\N}{\mathbb{N}}
\newcommand{\R}{\mathbb{R}}
\newcommand{\K}{\mathbb{K}}
\newcommand{\dom}{\text{D}}                                             % Domain
\newcommand{\p}{\mathcal{P}}                                            % Part
\newcommand{\acyclic}{\mathcal{A}}                                      % Acyclic orientations
\newcommand{\conv}{\text{conv}}                                         % Convex hull
\newcommand{\scone}[1]{\mathcal{N}^o_{#1}}                              % Strict normal cones
\newcommand{\cone}[1]{\mathcal{N}_{#1}}                                 % Normal cones
\newcommand{\colorings}[1]{\overline{C}_{#1}}                           % Compatible colorings
\newcommand{\scolorings}[1]{{C}_{#1}}                                   % Strict compatible colorings
\newcommand{\I}{I}                                                      % Isolated vertices
\newcommand{\NI}{NI}                                                    % Non isolated vertices
\newcommand{\bfor}[1]{{\operatorname{\mathit{#1-\mathcal{F}}}}}         % B-forest
\newcommand{\id}{Id}                                                    % Identity
\newcommand{\card}[1]{|#1|}                                             % Cardinality
\newcommand{\twist}{\tau}                                               % Twist morphism a \otimes b \mapsto b \otimes a
\newcommand{\antipode}{\mathsf{S}}                                      % Antipode
\newcommand{\1}{\mathbb{1}}                                             % Constant map equal to 1
\newcommand{\bij}{\text{bij}}                                           % Monotone bijection from A to [|A|]
\newcommand{\forget}{\text{forget}_{\emptyset}}                         % Send decomposition to composition
\newcommand{\cc}{\text{cc}}                                             % Number of connected components
\newcommand{\spe}[1]{#1}                                                % Generic species
\newcommand{\SG}{\spe{SG}}                                              % Simple graphs
\newcommand{\SHG}{\spe{SHG}}                                            % Simple hypergraphs
\newcommand{\G}{\spe{G}}                                                % Graphs
\newcommand{\HG}{\spe{HG}}                                              % Hypergraphs
\newcommand{\GP}{\spe{GP}}                                              % Genralized permutahedra
\newcommand{\HGP}{\spe{HGP}}                                            % Hypergraphic polytopes
\newcommand{\SC}{\spe{SC}}                                              % Simplicial complexes
\newcommand{\BS}{\spe{BS}}                                              % Building sets
\newcommand{\Path}{\spe{Path}}                                          % Set of paths
\newcommand{\Dcomp}{\spe{Dcomp}}                                        % Decompositions
\newcommand{\Part}{\spe{\Pi}}                                           % Partitions
\title{Combinatorial expressions of Hopf polynomial invariants}
\author{Théo Karaboghossian\\ 
    {\small Univ. Bordeaux, Bordeaux INP, CNRS, LaBRI, UMR5800, F-33400 Talence, France}\\
    {\small theo.karaboghossian@univ-littoral.fr}\\
    {\small ORCID:  0000-0001-8483-494X}
}
\date{}
\begin{document}

\maketitle

\begin{abstract}
In 2017 Aguiar and Ardila provided a generic way to construct polynomial invariants of combinatorial objects using the notions of
Hopf monoids and characters of Hopf monoids. 
% They show that it is possible to find a combinatorial interpretation of these polynomials 
% over negative integers using the antipode and give a cancellation-free grouping-free formula for the antipode on generalized permutahedra.
The polynomials constructed this way are often subject to reciprocity theorems depending on the antipode of the associated Hopf monoid, i.e. 
while they are defined over positive integers, 
it is possible to find them a combinatorial interpretation over negative integers.
In the same article Aguiar and Ardila then give a cancellation-free grouping-free formula for the antipode on generalized permutahedra and apply 
their constructions over some examples.
In this work, we give a combinatorial interpretation of these polynomials over both positive integers and negative integers for the Hopf monoids 
of generalized permutahedra and hypergraphs and for every character on these two Hopf monoids. 
In the case of hypergraphs, we present two different proofs for the interpretation on negative integers,
one using Aguiar and Ardila's antipode formula 
and one similar to the way Aval et al. defined a chromatic polynomial for hypergraphs in arXiv:1806.08546. 
We then deduce similar results on other combinatorial objects including graphs, simplicial complexes and building sets.
\end{abstract}
\bigskip
{\small
    \noindent\textbf{Keywords:} Hopf monoids, polynomial invariants, Generalized permutahedra, Hypergraphs, Colorings.
    \\
    \noindent\textbf{MSC classes:} 16T30, 05E99, 05C15, 05C31, 05C65, 05E45.
}

% \begin{center}
%   \small\textbf{Statements and Declarations}

%   No competing interests are related to this work.
% \end{center}

\tableofcontents

\section{Introduction}\label{intro}

The notion of Hopf monoid first appeared, albeit not under this name, in André Joyal's introductory work to the theory of species \cite{joy}, 
and was built in the continuity of works from Joni and Rota \cite{R} \cite{JR}. A full theory of Hopf monoids, as well as the denomination of Hopf monoid, 
was later developed by Aguiar and Mahajan in \cite{am2},\cite{am}.
As with Hopf algebras, Hopf monoids give an algebraic framework to deal with operations of merging (product) and splitting (co-product) combinatorial objects.  
Using the theory of Hopf monoids, Aguiar and Ardila showed in \cite{AA} how to construct polynomial invariants, recovering the chromatic 
polynomial of graphs, the Billera-Jia-Reiner polynomial of matroids and the strict order polynomial of posets. 
Their construction of such a polynomial invariant requires a character, that is a multiplicative linear map, so that on a fixed Hopf monoid,
we can construct a polynomial invariant $\chi^{\zeta}$ for every character $\zeta$ on this Hopf monoid.
Furthermore, Aguiar and Ardila give a way to compute these polynomial invariants on negative integers thereby recovering reciprocity 
theorems associated to these combinatorial objects. This computation is based on the notion of antipode of a Hopf monoid, which plays the role 
of inversion in Hopf theory. They complete this result by providing a cancellation-free and grouping-free formula for the antipode of all
the Hopf monoids they define: generalized permutahedra, matroids, posets, hypergraphs, simple hypergraphs, simplicial complexes, building sets, graphs 
(with two different Hopf structure), partitions and set of paths.
Their results were used in \cite{my1} to define a chromatic polynomial over hypergraphs subject to a reciprocity theorem and
in the author's Phd thesis \cite{thesis}, from which this work is in great part an extract.

In this work we apply Aguiar and Ardila's construction of polynomial invariants to the Hopf monoids of generalized permutahedra and all its 
sub-monoids defined in \cite{AA}: hypergraphs, simple hypergraphs, simplicial complexes, building sets, graphs, partitions and set of paths.
Our results are a direct extension to Aval et al. results \cite{my1} in the sense that we give a combinatorial description of $\chi^{\zeta}$, 
for all these Hopf monoids and all characters over them while they only do so for one character per Hopf monoid and do not consider generalized permutahedra. 
The descriptions we provide comes in the form of counting combinatorial objects in link with objects of the considered Hopf monoid.
For example, as we will show in Theorem~\ref{th_pol_inv_gp} and Theorem~\ref{th_chromatic_polynomial}, for $P$ a generalized 
permutahedron and $h$ a hypergraph 
we have the following general formulas for $\chi^{\zeta}$:
\begin{equation*}
    \chi^{\zeta}(P)(n) = \sum_{Q\leq P} \zeta(Q)\card{\scone{P}(Q)_n} \quad\text{ and }\quad  \chi^{\zeta}(h)(n) = \sum_{f\in\acyclic_h} \zeta(f(h))\card{\scolorings{h,f,n}},
\end{equation*}
where $\scone{P}$ is the closed normal fan of $P$, $\acyclic_h$ is the set of acyclic orientations of $h$ and $\scolorings{h,f,n}$ is the set of
colorings of $h$ with $n$ colors strictly compatible with the orientation $f$.
To obtain these kind of formulas for all the Hopf monoids we consider, we follow a similar process to the one Aguiar and Ardila used to obtain their antipode 
formulas: we begin by the case of the Hopf monoid of generalized permutahedra and deduce the results for other Hopf monoid from their injection in this first Hopf monoid.
Our results generalize many existing results in the literature e.g. \cite{my1}, \cite{sc} and \cite{gr}.

This work is organized as follows. In Section~\ref{intro} we give general context on Hopf monoids and some definitions on decompositions
which are standard objects when working with Hopf structures. 
In Section~\ref{poly_inv_gp} we present the Hopf monoid structure of generalized permutahedra defined in \cite{AA} and 
we provide a combinatorial interpretation to the polynomial invariants of the Hopf monoid of generalized permutahedra.
In Section~\ref{poly_inv_hyper}, we do the same with the Hopf monoid of hypergraphs but providing two proofs for the combinatorial
interpretation of the polynomial invariants over negative integers. Finally in Section~\ref{sec_other_hopf} we obtain similar results 
for the Hopf monoids defined in \cite{AA} over simple hypergraphs, simplicial complexes, building sets, graphs, partitions and set of paths.

%%%%%%%%%%%%%%%%%%%%%%%%%%%%%%%%%%%%%%%%%%%%%%%%%%%%%%%%%%%%%%%%%%%%%%%%%%%%%%%%%%%%%%%%%%%%%%%%%%%%%%%%%%%%%%%%%%%%%%%%%%%%%%%%%%%%%%%%%%%%%%%%%%%%%%%%%%%%%%%%%%%%%%%%%%%%%%
%%%%%%%%%%%%%%%%%%%%%%%%%%%%%%%%%%%%%%%%%%%%%%%%%%%%%%%%%%%%%%%%%%%%%%%%%%%%%%%%%%%%%%%%%%%%%%%%%%%%%%%%%%%%%%%%%%%%%%%%%%%%%%%%%%%%%%%%%%%%%%%%%%%%%%%%%%%%%%%%%%%%%%%%%%%%%%
%%%%%%%%%%%%%%%%%%%%%%%%%%%%%%%%%%%%%%%%%%%%%%%%%%%%%%%%%%%%%%%%%%%%%%%% Preliminaries %%%%%%%%%%%%%%%%%%%%%%%%%%%%%%%%%%%%%%%%%%%%%%%%%%%%%%%%%%%%%%%%%%%%%%%%%%%%%%%%%%%%%%%%
%%%%%%%%%%%%%%%%%%%%%%%%%%%%%%%%%%%%%%%%%%%%%%%%%%%%%%%%%%%%%%%%%%%%%%%%%%%%%%%%%%%%%%%%%%%%%%%%%%%%%%%%%%%%%%%%%%%%%%%%%%%%%%%%%%%%%%%%%%%%%%%%%%%%%%%%%%%%%%%%%%%%%%%%%%%%%%
%%%%%%%%%%%%%%%%%%%%%%%%%%%%%%%%%%%%%%%%%%%%%%%%%%%%%%%%%%%%%%%%%%%%%%%%%%%%%%%%%%%%%%%%%%%%%%%%%%%%%%%%%%%%%%%%%%%%%%%%%%%%%%%%%%%%%%%%%%%%%%%%%%%%%%%%%%%%%%%%%%%%%%%%%%%%%%

\subsection*{Context}

\subsubsection*{Definitions and background on Hopf monoids} \label{sec_species}
In all this paper, otherwise stated, $V$ will always denote a finite set and $V_1$ and $V_2$ two disjoint sets such that
$V=V_1\sqcup V_2$. The letter $n$ always denotes a non negative integer
and we denote by $[n]$ the set $\set{1,\dots, n}$.
All vector spaces appearing in this paper are defined over a field of characteristic 0 denoted by $\K$.

We recall here basic definitions on Hopf monoids. We refer the reader to~\cite{BLL98} for a more 
information on the theory of species and to \cite{am} for more information on Hopf monoids.

\begin{definition}
    A {\em linear species} $\spe{S}$ consists of the following data:
    \begin{itemize}
       \item For each finite set $V$, a vector space $\spe{S}[V]$ of finite dimension,
       \item For each bijection of finite sets $\sigma: V\to V'$, a linear map $\spe{S}[\sigma]:\spe{S}[V]\to \spe{S}[V']$.
       These maps should be such that $\spe{S}[\sigma_1\circ\sigma_2] = \spe{S}[\sigma_1]\circ \spe{S}[\sigma_2]$ and $\spe{S}[\id] = \id$.
    \end{itemize}
\end{definition}

Let $\spe{R}$ and $\spe{S}$ be two linear species. The linear species $\spe{R}$ is a {\em linear sub-species} 
of $\spe{S}$ if $\spe{R}[V]$ is a sub-space of $\spe{S}[V]$ for every finite set $V$  and $\spe{R}[\sigma] = \spe{S}[\sigma]$ 
for every bijection of finite sets $\sigma$. A {\em morphism of linear species} from $\spe{R}$ to $\spe{S}$
is a collection of linear maps $f_V : \spe{R}[V] \to \spe{S}[V]$ such that for each bijection 
$\sigma: V\to V'$, we have $f_{V'}\circ \spe{R}[\sigma] = \spe{S}[\sigma]\circ f_V$. For easier reading,
we will often forget the index $V$.

We will use the term {\em species} to refer to linear species.

\begin{definition}
A {\em connected Hopf monoid in linear species} is a linear species $\spe{M}$ where 
$\spe{M}[\emptyset] = \K$ and which is equipped with a {\em product} and a {\em co-product}
linear maps:
\begin{equation*}
    \mu_{V_1,V_2}: \spe{M}[V_1]\otimes \spe{M}[V_2] \to \spe{M}[V_1\sqcup V_2], \qquad
    \Delta_{V_1,V_2}: \spe{M}[V_1\sqcup V_2] \to \spe{M}[V_1]\otimes \spe{M}[V_2],
\end{equation*}
which are subject to the following axioms, where we use the classical infix notation $\_\cdot\_$ for
the product.
\begin{itemize}
    \item {\em Naturality}. The maps $\mu : \spe{M}\cdot\spe{M}\to\spe{M}$ and 
    $\Delta:\spe{M}\to\spe{M}\cdot\spe{M}$ are morphisms of linear species.

    \item {\em Unitality}. For $x\in \spe{M}[V]$,
    $
        x\cdot 1_{\K} = x = 1_{\K}\cdot x,
    $
     \textit{i.e.} merging with the unit does not change our objects.

    \item {\em Co-unity}.
    $
        \Delta_{V,\emptyset} = \id\otimes 1_{\K}$ and  $\Delta_{\emptyset, V} = 1_{\K}\otimes\id,
    $
     \textit{i.e.} splitting on the empty set does not change our objects. 

    \item\label{hopf_monoid_associativity} {\em Associativity}. For $x\in\spe{M}[V_1]$, $y\in\spe{M}[V_2]$ and $z\in\spe{M}[V_3]$,
    $
        x\cdot(y\cdot z) = (x\cdot y)\cdot z,
    $
     \textit{i.e.} the results of merging three objects does not depend on the order in 
    which we merge them.

    \item\label{hop_monoid_co-associativity} {\em Co-associativity}.
    $
        \Delta_{V_1,V_2}\otimes\id\circ\Delta_{V_1\sqcup V_2, V_3} = \id\otimes\Delta_{V_2,V_3}\circ\Delta_{V_1,V_2\sqcup V_3},
    $
     \textit{i.e.} the result of splitting an object in three does not depend in the order in
    which we split it. 

    \item {\em Compatibility}. Let $V=V_1\sqcup V_2 = V_3\sqcup V_4$ and for $i\in\set{1,2}$ and 
    $j\in\set{3,4}$ denote by $V_{ij}$ the set $V_i\cap V_j$. Then we have the following commutative 
    diagram:
    \begin{equation*}
        \begin{tikzcd}
            \spe{M}[V_1]\otimes\spe{M}[V_2] \arrow[r, "\mu_{V_1,V_2}"] \arrow[d, "\Delta_{V_{13},V_{14}}\otimes\Delta_{V_{23},V_{24}}" swap] &
            \spe{M}[V] \arrow[r, "\Delta_{V_3,V_4}"] &
            \spe{M}[V_3]\otimes\spe{M}[V_4] \\
            \spe{M}[V_{13}]\otimes\spe{M}[V_{14}]\otimes\spe{M}[V_{23}]\otimes\spe{M}[V_{24}] \arrow[rr, "\id\otimes\twist\otimes"] & & 
            \spe{M}[V_{13}]\otimes\spe{M}[V_{23}]\otimes\spe{M}[V_{14}]\otimes\spe{M}[V_{24}] \arrow[u, "\mu_{V_{13},V_{23}}\otimes\mu_{V_{14},V_{24}}"]
        \end{tikzcd},
    \end{equation*}
    \textit{i.e.} merging and splitting is the same than splitting and merging.
\end{itemize}
\end{definition}

We use the term {\em Hopf monoid} to refer to a connected Hopf monoid in linear species. 

A Hopf {\em sub-monoid} of a Hopf monoid $\spe{M}$ is a sub-species of $\spe{M}$ stable under the product and co-product maps. 

A {\em morphism of Hopf monoids} is a morphism of linear species which preserves the products, 
co-products and unit $1_{\K}$.

The {\em co-opposite Hopf monoid} of a Hopf monoid $\spe{M}$ is the Hopf monoid $\spe{M}^{cop}$ over the same species than $\spe{M}$ 
and with the product and co-product defined by: $\mu_{V_1,V_2}^{\spe{M}^{cop}}=\mu_{V_1,V_2}^{\spe{M}}$ and 
$\Delta^{\spe{M}^{cop}}_{V_1,V_2}=\Delta^{\spe{M}}_{V_2,V_1}$.

\begin{remark}
    The axioms of associativity~\eqref{hopf_monoid_associativity} 
    and co-associativity~\eqref{hop_monoid_co-associativity} make it so that we can naturally extend the definitions of the 
    structure maps over any decomposition of $V$. For $V_1,\dots,V_n$ a decomposition of $V$:
    \begin{equation*}\label{iter_prod}
        \mu_{V_1,\dots,V_n}:\spe{M}[V_1]\otimes\dots\otimes\spe{M}[V_n]\to\spe{M}[V]\qquad\Delta_{V_1,\dots,V_n}:\spe{M}[V]\to\spe{M}[V_1]\otimes\dots\otimes\spe{M}[V_n],
    \end{equation*}
    are respectively defined by iterating any kind of maps of the form $\id^{\otimes k}\otimes\mu_{V_i,V_{i+1}}\otimes\id^{\otimes l}$ and of the form $\id^{\otimes k}\otimes\Delta_{V_i,V_{i+1}}\otimes\id^{\otimes l}$, as long as the domains and co-domains coincide.
\end{remark}

A {\em decomposition} of a finite set $V$ is a sequence of pairwise disjoint subsets $S = (V_1,\dots, V_l)$ such that $V = \bigsqcup_{i=1}^l V_i$. 
A {\em composition} of a finite set $V$ is a decomposition of $V$ without empty parts. We will write $S \vdash V$ for $S$ a decomposition of $V$, $S\vDash V$ 
if $S$ is a composition, $l(S) = l$ the {\em length} of a decomposition and $\card{S} = \card{V}$ the number of elements in the decomposition.

\begin{definition}\label{def_antipode}
    The antipode of a Hopf monoid $\spe{M}$ is the species morphism $\antipode:\spe{M}\to\spe{M}$
    defined by $\antipode_{\emptyset} = \id$ and for $V\noemp$,
    \begin{equation*}\label{takeuchi_formula}
        \antipode_V(x) = \sum_{V_1,\dots,V_n\vDash V} (-1)^n\mu_{V_1,\dots, V_n}\circ\Delta_{V_1,\dots,V_n}(x).
    \end{equation*}
    This formula is known as {\em Takeuchi's formula}. 
\end{definition}

A {\em character} $\zeta$ of a Hopf monoid $\spe{M}$ is a collection of linear 
maps $\set{\zeta_V:\spe{M}[V]\to\K}_V$ compatible with the product and sending the unit on the unit.

We say that a character is a {\em characteristic function} if it takes values in the set $\set{0,1}$.
A {\em discrete} element of a Hopf monoid $\spe{M}$ is an element which can be obtained as a product
of elements of size 1. \textit{i.e.} $x\in\spe{M}[V]$ is discrete if $V=\set{v_1,\dots, v_n}$ and there exists
$x_1\in\spe{M}[\set{v_1}],\dots,x_n\in\spe{M}[\set{v_n}]$ such that 
$x=\mu_{\set{v_1},\dots,\set{v_n}} x_1\otimes\dots\otimes x_n$.
The {\em basic character} of any Hopf monoid $\spe{M}$ is then the characteristic function of discrete
elements of $\spe{M}$. We will denote it by $\zeta_1$.

The main object of study of this paper are the following maps.

\begin{definition}\label{def_polynomial_invariant}
    Let $\spe{M}$ be a Hopf monoid and $\zeta$ a character on $\spe{M}$. For $x\in\spe{M}[V]$
    define the map
    \begin{equation*}\label{polynomial_invariant_formula}
        \chi^{(\spe{M},\zeta)}_V(x)(n) = \sum_{V=V_1\sqcup\dots\sqcup V_n} \zeta_V\circ\mu_{V_1,\dots, V_n}\circ\Delta_{V_1,\dots,V_n}(x).
    \end{equation*}
\end{definition}

Depending on how clear it is from the context, we will not specify $\zeta$ and/or $\spe{M}$
and use the notations $\chi^{\spe{M}}$, $\chi^{\zeta}$ or $\chi$ to designate the map thus 
defined. 
These maps have very interesting properties as shown in the following theorem and proposition.

\begin{theorem}[Proposition 16.1 and Proposition 16.2 in \cite{AA}]\label{th_polynomial_invariant}
    Let $\spe{M}$ be a Hopf monoid, $\zeta$ a character on $\spe{M}$ and $\chi^{(M,\zeta)}$ 
    be the collection of maps of Definition~\ref{def_polynomial_invariant}. Then 
    $\chi^{(M,\zeta)}_V(x)$ is a polynomial invariant in $n$ such that:
    \begin{enumerate}
        \item $\chi^{(M,\zeta)}_V(x)(1) = \zeta(x)$,
        \item $\chi^{(M,\zeta)}_{\emptyset} = 1$ and $\chi^{(M,\zeta)}_{V_1\sqcup V_2}(x\cdot y)) = \chi^{(M,\zeta)}_{V_1}(x)\chi^{(M,\zeta)}_{V_2}(y)$,
        \item\label{th_polynomial_invariant_3} $\chi^{(M,\zeta)}_V(x)(-n) = \chi^{(M,\zeta)}_V(\antipode_V(x))(n)$.
    \end{enumerate}
\end{theorem}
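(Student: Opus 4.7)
My plan is to establish polynomiality first and then the three numbered claims in order. For polynomiality, I reindex the sum over $n$-part decompositions by the composition of non-empty parts. By co-unitality, empty parts $V_i = \emptyset$ contribute trivially in the iterated $\mu \circ \Delta$, so each decomposition with exactly $k$ non-empty parts yields the same summand as its underlying composition, and there are $\binom{n}{k}$ ways to insert the empty slots among the $n$ positions. This gives
\begin{equation*}
    \chi^{\zeta}_V(x)(n) = \sum_{k=0}^{\card{V}} \binom{n}{k} \sum_{(W_1,\dots,W_k) \vDash V} \zeta_V \circ \mu_{W_1,\dots,W_k} \circ \Delta_{W_1,\dots,W_k}(x),
\end{equation*}
a polynomial in $n$ of degree at most $\card{V}$. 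Claim (1) follows immediately: at $n=1$ only the term with $k=1$ and composition $(V)$ survives (or $k=0$ when $V = \emptyset$), and the one-part structure maps are identities, so the value is $\zeta_V(x)$.

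For claim (2), the empty statement reduces to $\zeta_\emptyset(1_{\K}) = 1$, and multiplicativity follows from the compatibility axiom of the Hopf monoid: given a decomposition $W_1 \sqcup \dots \sqcup W_n = V_1 \sqcup V_2$, iterating compatibility factors $\mu_{W_1,\dots,W_n} \circ \Delta_{W_1,\dots,W_n}(x \cdot y)$ as the product in $\spe{M}$ of the analogous expressions for $x$ along $(W_i \cap V_1)_i$ and for $y$ along $(W_i \cap V_2)_i$. Multiplicativity of $\zeta$ then splits each summand, and summing over all $n$-part decompositions of $V_1 \sqcup V_2$ factors into the product of independent sums over $V_1$ and $V_2$, yielding $\chi^{\zeta}_{V_1}(x)(n) \cdot \chi^{\zeta}_{V_2}(y)(n)$.

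For claim (3) the cleanest approach is through the convolution algebra on $\mathrm{Hom}(\spe{M}, \K)$, where $\zeta * \eta = \mu_{\K} \circ (\zeta \otimes \eta) \circ \Delta$. Multiplicativity of $\zeta$ yields $\chi^{\zeta}_V(x)(n) = \zeta^{*n}_V(x)$. The Hopf-monoid antipode axiom makes $\zeta$ invertible in this algebra with inverse $\zeta \circ \antipode$, so polynomial extension to negative integers gives $\chi^{\zeta}(x)(-n) = (\zeta \circ \antipode)^{*n}(x)$. The final identification with $\zeta^{*n}(\antipode(x)) = \chi^{\zeta}(\antipode(x))(n)$ follows from $\antipode$ being a coalgebra anti-morphism, $\Delta_{V_1,V_2} \circ \antipode = \twist \circ (\antipode \otimes \antipode) \circ \Delta_{V_2, V_1}$, iterated across the $n$-fold coproduct with the reversal of decompositions absorbed by the commutativity of multiplication in $\K$.

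The main obstacle is this last identity in claim (3). Proving it cleanly requires either induction on $n$ using the anti-coalgebra identity iteratively, or a careful bookkeeping of reversed iterated coproducts matched by reindexing the sum over decompositions; once this is in place, the reciprocity formula follows and the rest of the theorem reduces to the elementary arguments sketched above.
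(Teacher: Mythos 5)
The paper does not prove this theorem: it is imported verbatim from Aguiar--Ardila (Propositions 16.1 and 16.2 of \cite{AA}), so there is no in-paper proof to compare against. Your argument is essentially the standard one from that source and is correct in all its parts: the regrouping of length-$n$ decompositions by their underlying compositions (using co-unitality to discard empty blocks, with the $\binom{n}{k}$ count of placements) establishes polynomiality and gives (1) immediately; iterated compatibility plus multiplicativity of $\zeta$ gives (2); and the identification $\chi^{\zeta}_V(x)(n)=\zeta^{*n}_V(x)$ reduces (3) to convolution-algebra facts. Two remarks. First, the step ``polynomial extension to negative integers gives $\chi^{\zeta}(x)(-n)=(\zeta\circ\antipode)^{*n}(x)$'' deserves an explicit justification: one should note that $\zeta^{*m}*\zeta^{*n}=\zeta^{*(m+n)}$ holds for $m,n\geq 0$, that both sides evaluated at $x$ are polynomials in $(m,n)$, hence the identity persists under polynomial extension, and that setting $m=-n$ identifies the extension $x\mapsto\chi^{\zeta}_V(x)(-n)$ as the convolution inverse of $\zeta^{*n}$. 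Second, once you know that the convolution inverse of any character $\eta$ is $\eta\circ\antipode$, you can apply this directly to the character $\eta=\zeta^{*n}$ and obtain $\chi^{\zeta}_V(x)(-n)=\zeta^{*n}(\antipode_V(x))=\chi^{\zeta}_V(\antipode_V(x))(n)$ in one stroke; this bypasses the anti-coalgebra-morphism bookkeeping you flag as the main obstacle, which is therefore avoidable rather than essential.
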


\begin{proposition}[Proposition 16.3 in \cite{AA}]\label{prop_polynomial_invariant}
    Let $\spe{M}$ and $\spe{N}$ be two Hopf monoids, $\zeta^{\spe{M}}$ and $\zeta^{\spe{N}}$
    be two character on these Hopf monoids and $\phi:\spe{M}\to\spe{N}$ be a morphism of Hopf
    monoids compatible with the characters: $\zeta^{\spe{N}}\circ \phi = \zeta^{\spe{M}}$.
    Then $\chi^{\spe{N}}\circ \phi = \chi^{\spe{M}}$.
\end{proposition}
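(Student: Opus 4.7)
The plan is to verify the identity termwise in the defining sum for $\chi$. Fixing $x\in\spe{M}[V]$ and $n\geq 0$, I would expand
\begin{equation*}
\chi^{\spe{N}}_V(\phi(x))(n) = \sum_{V=V_1\sqcup\dots\sqcup V_n} \zeta^{\spe{N}}_V\circ\mu^{\spe{N}}_{V_1,\dots, V_n}\circ\Delta^{\spe{N}}_{V_1,\dots,V_n}(\phi(x)),
\end{equation*}
and show that every summand is equal to the corresponding summand in $\chi^{\spe{M}}_V(x)(n)$ indexed by the same decomposition $(V_1,\dots,V_n)$.

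First, because $\phi$ is a morphism of Hopf monoids, it intertwines the co-products: $\Delta^{\spe{N}}_{V_1,V_2}\circ\phi = (\phi\otimes\phi)\circ\Delta^{\spe{M}}_{V_1,V_2}$. Iterating this along a sequence of two-part splittings (justified by the co-associativity remark following Definition~\ref{def_antipode} and the fact that morphisms of Hopf monoids commute with such iterations), I get
\begin{equation*}
\Delta^{\spe{N}}_{V_1,\dots,V_n}\circ\phi = (\phi\otimes\dots\otimes\phi)\circ\Delta^{\spe{M}}_{V_1,\dots,V_n}.
\end{equation*}
Symmetrically, iterating the product-compatibility of $\phi$ gives $\mu^{\spe{N}}_{V_1,\dots,V_n}\circ(\phi\otimes\dots\otimes\phi) = \phi\circ\mu^{\spe{M}}_{V_1,\dots,V_n}$. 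Composing the two identities yields
\begin{equation*}
\mu^{\spe{N}}_{V_1,\dots,V_n}\circ\Delta^{\spe{N}}_{V_1,\dots,V_n}\circ\phi = \phi\circ\mu^{\spe{M}}_{V_1,\dots,V_n}\circ\Delta^{\spe{M}}_{V_1,\dots,V_n}.
\end{equation*}

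Finally, applying $\zeta^{\spe{N}}_V$ and using the character-compatibility hypothesis $\zeta^{\spe{N}}\circ\phi = \zeta^{\spe{M}}$ turns each summand indexed by $(V_1,\dots,V_n)$ into $\zeta^{\spe{M}}_V\circ\mu^{\spe{M}}_{V_1,\dots, V_n}\circ\Delta^{\spe{M}}_{V_1,\dots,V_n}(x)$, and summing over all decompositions gives $\chi^{\spe{M}}_V(x)(n)$. There is no real obstacle here; the only mildly subtle point is making sure the iterated product and coproduct are unambiguous and that $\phi$ commutes with them, but this follows immediately from associativity, co-associativity and the fact that $\phi$ is a morphism of Hopf monoids in the binary sense. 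The proof is essentially a diagram chase.
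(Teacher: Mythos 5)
Your argument is correct: the intertwining of $\phi$ with the iterated product and coproduct, followed by the character-compatibility hypothesis, is exactly what is needed, and the reduction of the $n$-fold case to the binary one via (co)associativity is legitimate since a morphism of Hopf monoids preserves products and coproducts by definition. Note that the paper itself gives no proof of this statement --- it is quoted as Proposition 16.3 of \cite{AA} --- so there is nothing to compare against beyond observing that your diagram chase is the standard argument one would find there.
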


\subsubsection*{Decomposition, compositions and colorings}\label{decomp}\label{decomp_op}
We briefly presented the species of decompositions and compositions in the previous subsection. We give here more details on these objects which will
often use thorough this paper.

A {\em coloring of $V$ with $[n]$} is a map from $V$ to $[n]$ and a {\em part ordering} of $V$ is a surjective map with domain $V$ 
and a co-domain of the form $[n]$.
There exists a canonical bijection between decompositions and colorings. In all this paper, 
we want to seamlessly pass from one notion to the other, so we give a few explanations on this bijection. Given an integer $n$, the canonical 
bijection between decompositions of $V$ of size $n$ and colorings of $V$ with $[n]$ is given by:
\begin{equation*}\begin{split}\label{bij_decomposition_colorings}
    b_{V,n}: \set{f:V\to [n]} &\to \set{P\in\Dcomp[V]\,|\, l(P)=n} \\
    f &\mapsto (f^{-1}(1),\dots,f^{-1}(n)).
\end{split}\end{equation*}
If it is clear from the context what are $V$ and $n$, we will write $b$ instead of $b_{V,n}$. If $P$ is a partition we will also refer to $b^{-1}(P)$ 
by $P$ so that instead of writing ``i such that $v\in P_i$'' and ``i and j such that $v\in P_i$, $v'\in P_j$ and $i<j$ '' we can just write $P(v)$ 
and $P(v)<P(v')$. Similarly, if $P$ is a function we will refer to $b(P)$ by $P$ so that $P_i = P^{-1}(i)$. Remark that $b_{V,n}$ induces a bijection 
between compositions and part orderings.
\medskip

Let us now give some usual operations and definitions over decompositions and compositions.
Let $V$ and $W$ be two disjoint sets and $P = (P_1,\dots,P_l)\vDash V$ and $Q = (Q_1,\dots, Q_k) \vDash W$ be two compositions. We call {\em product} of $P$ 
and $Q$ and we denote by $P\cdot Q$ the composition $(P_1,\dots,P_l,Q_1,\dots Q_k)$. We call {\em shuffle product} of $P$ and $Q$ the set $sh(P,Q)$ defined by
$sh(P,Q) = \set{R\,|\, (b^{-1}(R))_{|V}=b^{-1}(P)\text{ and }(b^{-1}(R))_{|V'}=b^{-1}(P')}$, where for $f$ a map with domain $V$ and $W\subseteq V$, the
map $f_{|W}$ is defined by $f_{|W}(v) = f(v)$ for all $v\in W$.

Let $P'=(P_{1,1},\dots,P_{1,k_1},P_{2,1},\dots,P_{2,k_2},\dots,P_{l,k_l})$ be another composition of $V$. We say that $P'$ {\em refines} 
$P$ and write $P'\prec P$ if $P_i = \bigcup_{j=1}^{k_i}P_{i,j}$ for $1\leq i\leq l$.

% \begin{example}
%     On the set $[6]$ we have
%     $({\color{Green}16},{\color{Green}4},{\color{Blue}3},{\color{Blue}5},{\color{Red}2}) \prec 
%     ({\color{Green}146},{\color{Blue}35},{\color{Red}2})$.
% \end{example

%%%%%%%%%%%%%%%%%%%%%%%%%%%%%%%%%%%%%%%%%%%%%%%%%%%%%%%%%%%%%%%%%%%%%%%%%%%%%%%%%%%%%%%%%%%%%%%%%%%%%%%%%%%%%%%%%%%%%%%%%%%%%%%%%%%%%%%%%%%%%%%%%%%%%%%%%%%%%%%%%%%%%%%%%%%%%%
%%%%%%%%%%%%%%%%%%%%%%%%%%%%%%%%%%%%%%%%%%%%%%%%%%%%%%%%%%%%%%%%%%%%%%%%%%%%%%%%%%%%%%%%%%%%%%%%%%%%%%%%%%%%%%%%%%%%%%%%%%%%%%%%%%%%%%%%%%%%%%%%%%%%%%%%%%%%%%%%%%%%%%%%%%%%%%
%%%%%%%%%%%%%%%%%%%%%%%%%%%%%%%%%%%%%%%%%%%%%%%%%%%%%%%%%%%%%%%%%%% polynomial inv gene prermu  %%%%%%%%%%%%%%%%%%%%%%%%%%%%%%%%%%%%%%%%%%%%%%%%%%%%%%%%%%%%%%%%%%%%%%%%%%%%%%%
%%%%%%%%%%%%%%%%%%%%%%%%%%%%%%%%%%%%%%%%%%%%%%%%%%%%%%%%%%%%%%%%%%%%%%%%%%%%%%%%%%%%%%%%%%%%%%%%%%%%%%%%%%%%%%%%%%%%%%%%%%%%%%%%%%%%%%%%%%%%%%%%%%%%%%%%%%%%%%%%%%%%%%%%%%%%%%
%%%%%%%%%%%%%%%%%%%%%%%%%%%%%%%%%%%%%%%%%%%%%%%%%%%%%%%%%%%%%%%%%%%%%%%%%%%%%%%%%%%%%%%%%%%%%%%%%%%%%%%%%%%%%%%%%%%%%%%%%%%%%%%%%%%%%%%%%%%%%%%%%%%%%%%%%%%%%%%%%%%%%%%%%%%%%%

\section{Generalized permutahedra}\label{poly_inv_gp}
Using Aguiar and Ardila's results in \cite{AA}, we give here an explicit combinatorial interpretation of the polynomial
invariant in Definition~\ref{def_polynomial_invariant} and its reciprocity theorem over the Hopf monoid of generalized 
permutahedra. The two theorems of this section are direct generalizations of Proposition 17.3 and 17.4 \cite{AA}.

\subsection{Hopf monoid structure}
Let us first give the Hopf monoid structure of generalized permutahedra defined in \cite{AA} while introducing some classical
definitions over generalized permutahedra. We refer the reader interested in generalized permutahedra to the vast literature on 
the subject (see for example \cite{Post09}, \cite{Fuji05}, \cite{GMZ})

A {\em generalized permutahedron} in $\R V$ is a polytope in $\R V$ whose edges are parallel to the vectors $v-v'$ for $v,v'\in V$. We
denote by $\GP$ the species of generalized permutahedra.
We call the elements of the dual $(\R V)^*=\R^V$ {\em directions}. For $P\in \GP[V]$ and $y$ a direction in $\R^V$,
we call {\em maximum face of $P$ in the direction $y$} or {\em $y$-maximum face} of $P$ the generalized permutahedron 
$P_y = \set{p\in P\,|\, y(p)\geq y(q)\text{ for all }q\in P}$. In particular, the faces of dimension 1 are the {\em edges} of $P$. 

We denote by $Q\leq P$ for 
$Q$ a face of $P$ and $Q<P$ if additionally $Q\not=P$. We denote by $L(P)$ the {\em face lattice} of $P$, which is the poset of faces of $P$ 
ordered with the previously defined order. We denote by $[Q;P]$ the interval $\set{P'\,|\,Q\leq P'\leq P}$ and by $[Q;P[$ the same interval 
but with strict inequality on the right side. 

For each face $Q$ of $P$ define the {\em normal cones} as:
\begin{equation*}\begin{split}
    \scone{P}(Q) &= \set{y\in\R^V\,|\, P_y=Q}\\
    \cone{P}(Q) &= \set{y\in \R^V\,|\,\text{$Q$ is a face of $P_y$}}.
\end{split}\end{equation*}

\begin{proposition}[Theorem 3.15 in \cite{Fuji05}] 
    Let $V$ be a finite set, $W\subseteq V$ and $P$ be a generalized permutahedra in $\R V$. Denote by $\1_W$ the direction defined by
    $\1_W(v) = 1$ if $v\in W$ and $\1_W(v) = 0$ if $v\in V\setminus W$. Then there exists two generalized permutahedra $P|_W$ in $\R W$ and
    $P/_W$ in $\R V\setminus W$ such that $P_{\1_W} = P|_W + P/_W$.
\end{proposition}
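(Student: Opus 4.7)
The plan is to invoke the standard bijection between generalized permutahedra in $\R V$ and submodular functions $z \colon 2^V \to \R$ with $z(\emptyset) = 0$, under which $P$ corresponds to
$P(z) = \set{x \in \R V \,:\, x(V) = z(V),\ x(S) \leq z(S) \text{ for all } S \subseteq V}$,
where $x(S) = \sum_{v \in S} x_v$. A short computation shows that the linear functional $x \mapsto x(W) = \langle \1_W, x\rangle$ attains its maximum $z(W)$ over $P(z)$, so one has the tidy description $P_{\1_W} = \set{x \in P(z) \,:\, x(W) = z(W)}$.

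First I would produce the two candidate polytopes. Define $z|_W(S) := z(S)$ for $S \subseteq W$, and $z/_W(T) := z(T \cup W) - z(W)$ for $T \subseteq V \setminus W$. Submodularity of $z|_W$ is immediate by restriction; that of $z/_W$ follows from the submodular inequality for $z$ applied to $T_1 \cup W$ and $T_2 \cup W$. These data give rise to generalized permutahedra $P|_W := P(z|_W) \subseteq \R W$ and $P/_W := P(z/_W) \subseteq \R^{V \setminus W}$, which will be the desired summands.

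Next I would verify the set identity $P_{\1_W} = P|_W + P/_W$, viewing both summands as subsets of $\R V$ via zero-extension. Given $x \in P_{\1_W}$, set $x_1 := x|_W$ and $x_2 := x|_{V \setminus W}$: then $x_1 \in P|_W$ directly from the inequalities of $P(z)$ restricted to $S \subseteq W$, while $x_2 \in P/_W$ because $x_2(T) = x(T \cup W) - x(W) \leq z(T \cup W) - z(W) = z/_W(T)$, crucially using $x(W) = z(W)$. For the reverse inclusion I would take $x_1 \in P|_W$, $x_2 \in P/_W$, and check that $x := x_1 + x_2$ lies in $P_{\1_W}$ by splitting any $S \subseteq V$ as $S_1 \sqcup S_2$ with $S_1 := S \cap W$ and $S_2 := S \setminus W$.

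The main obstacle is the resulting numerical inequality $z|_W(S_1) + z/_W(S_2) \leq z(S_1 \sqcup S_2)$, which after unfolding becomes $z(S_1) + z(S_2 \cup W) \leq z(S_1 \sqcup S_2) + z(W)$; this is exactly the submodular inequality for $z$ applied to $S_1 \sqcup S_2$ and $W$, whose union is $S_2 \cup W$ and whose intersection is $S_1$ (using $S_1 \subseteq W$ and $S_2 \cap W = \emptyset$). Once this step is accepted, everything else reduces to unwinding definitions, and the equality of sets $P_{\1_W} = P|_W + P/_W$ follows.
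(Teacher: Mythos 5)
Your proof is correct. The paper does not actually prove this proposition --- it is quoted from Fujishige (Theorem 3.15 in \cite{Fuji05}), and that reference establishes it precisely through the submodular-function description of generalized permutahedra that you use; so your argument is essentially the standard one behind the citation. All the key steps check out: $z|_W$ and $z/_W$ are submodular with value $0$ on $\emptyset$, the face $P_{\1_W}$ is cut out by the equality $x(W)=z(W)$, and the only nontrivial inequality in the reverse inclusion is exactly the submodular inequality for $S_1\sqcup S_2$ and $W$, as you identify.
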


\begin{theorem}[Theorem 5.3 in \cite{AA}]
    The species $\GP$ as a Hopf monoid structure given by
    \begin{align*}
        \mu_{V_1,V_2}:\GP[V_1]\otimes\GP[V_2]&\to\GP[V] & \Delta_{V_1,V_2}:\GP[V]&\to\GP[V_1]\otimes\GP[V_2] \\
        P\otimes Q &\mapsto P+Q & P &\mapsto P|_{V_1}\otimes P/_{V_1}, \nonumber
    \end{align*}
    where $P+Q=\set{p+q\,|\, p\in P, q\in Q}$ is the {\em Minkowski sum} of $P$ and $Q$
\end{theorem}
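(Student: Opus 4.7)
The plan is to verify the Hopf monoid axioms one by one, reducing each to a geometric statement about generalized permutahedra. The product is given by Minkowski sum in the ambient space $\R V = \R V_1 \oplus \R V_2$ (identifying $\R V_i$ with the coordinate subspace indexed by $V_i$), and the coproduct is given by the decomposition of the face $P_{\1_{V_1}}$ provided by the cited result of Fujishige. Well-definedness of $\mu$ amounts to checking that every edge of $P + Q$ is parallel to some $v - v'$; this follows because edges of a Minkowski sum are translates of edges of the summands, and the edges of $P$ and $Q$ already lie in the required set. Well-definedness of $\Delta$ is exactly the content of the proposition from \cite{Fuji05}.

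Naturality is immediate since Minkowski sums and face operations are defined intrinsically from the ambient vector space, hence commute with relabellings. Unitality is clear: $\GP[\emptyset] = \K$ corresponds to the one-point polytope $\{0\}$, and $P + \{0\} = P$. Co-unity follows from the fact that $\1_V$ is constant on $P \in \GP[V]$, so $P_{\1_V} = P$, and the decomposition in the cited proposition degenerates to $P \otimes \{0\}$. Associativity of $\mu$ is the associativity of Minkowski sum.

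For co-associativity, given $V = V_1 \sqcup V_2 \sqcup V_3$, both iterations of $\Delta$ amount to picking the face of $P$ that maximizes first $\1_{V_1 \sqcup V_2}$ then, within that face, $\1_{V_1}$ (or equivalently, first $\1_{V_1}$ then within that $\1_{V_1 \sqcup V_2}$). These coincide because maximizing the lexicographic objective $(\1_{V_1}, \1_{V_1 \sqcup V_2})$ on $P$ yields a well-defined face independent of the order in which we iterate the maxima, which in turn gives the unique triple decomposition $P \mapsto P_1 \otimes P_2 \otimes P_3$ with $P_i \in \GP[V_i]$.

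The main obstacle is the compatibility axiom. The plan is to derive it from the single geometric fact that for any direction $y \in \R^V$ and any $P \in \GP[V_1]$, $Q \in \GP[V_2]$ (embedded in $\R V$), one has $(P + Q)_y = P_{y|_{V_1}} + Q_{y|_{V_2}}$; this holds because the support function of a Minkowski sum is the sum of support functions, and the $V_1$- and $V_2$-coordinates are independent. Applying this identity with $y = \1_{V_3}$ yields
\begin{equation*}
    (P + Q)|_{V_3} = P|_{V_{13}} + Q|_{V_{23}}, \qquad (P + Q)/_{V_3} = P/_{V_{13}} + Q/_{V_{23}},
\end{equation*}
which is exactly the commutativity of the compatibility diagram once one traces through the tensor factors and applies the twist $\twist$ to reorder them. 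With this identity in hand, the compatibility axiom reduces to routine bookkeeping of the four intersections $V_{ij} = V_i \cap V_j$.
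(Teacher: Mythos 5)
The paper offers no proof of this statement --- it is quoted verbatim from Aguiar and Ardila --- so a direct verification of the axioms, as you attempt, is the right thing to do and is essentially what the cited source does. Most of your sketch is sound: closure of $\GP$ under Minkowski sums via the fact that every edge of $P+Q$ is a translate of an edge of $P$ or of $Q$; well-definedness of $\Delta$ via Fujishige's decomposition; co-unity from $\1_V$ being constant on $P$ (which itself uses the edge condition, since $\1_V(v-v')=0$); and compatibility reduced to the identity $(P+Q)_y=P_y+Q_y$ together with uniqueness of the splitting into summands lying in complementary coordinate subspaces.

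The genuine gap is in co-associativity. You assert that ``maximizing the lexicographic objective $(\1_{V_1},\1_{V_1\sqcup V_2})$ on $P$ yields a well-defined face independent of the order in which we iterate the maxima.'' As a statement about arbitrary polytopes this is false: for the segment from $(0,0)$ to $(1,-2)$ in $\R^{\set{1,2}}$, maximizing $\1_{\set{1}}$ first selects $(1,-2)$, while maximizing $\1_{\set{1,2}}$ first selects $(0,0)$. The two iterated coproducts correspond to the faces $P_{(N+1)\1_{V_1}+\1_{V_2}}$ and $P_{(N+1)\1_{V_1}+N\1_{V_2}}$ for $N$ large, and the reason these coincide is not a formal property of iterated optimization but precisely the defining feature of generalized permutahedra: their normal fans coarsen the braid fan, so $P_y$ depends only on the ordered set partition of $V$ cut out by the level sets of $y$, and both directions above induce the ordered partition $(V_1,V_2,V_3)$. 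This is the content of Proposition 5.2/5.4 of Aguiar--Ardila, which the present paper itself invokes later (in the proof of Theorem~\ref{th_pol_inv_gp}); your argument must either cite it or re-derive it from the edge condition at exactly this point. Once that is supplied, the rest of your co-associativity argument (uniqueness of the triple decomposition $P_1\otimes P_2\otimes P_3$ with $P_i\in\GP[V_i]$) goes through.
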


The Hopf monoid $\GP$ has the following formula for the antipode:

\begin{theorem}[Theorem 7.1 in \cite{AA}]\label{th_antipode_gp}
    The antipode of $\GP$ is given by the cancellation-free and grouping-free formula:
    \begin{equation*}
        \antipode_V(P)=\sum_{Q\leq P}(-1)^{\card{V}-\dim Q}Q.
    \end{equation*}
\end{theorem}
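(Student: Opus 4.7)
The plan is to start from Takeuchi's formula (Definition~\ref{def_antipode}), identify each of its terms as a single face of $P$, and then regroup the resulting sum by face. To each composition $C=(V_1,\dots,V_n)\vDash V$ I would associate the direction $y_C=\sum_{i=1}^{n}(n+1-i)\1_{V_i}\in\R^V$, which puts the highest weight on $V_1$ and the lowest on $V_n$, and prove by induction on $n$ that
\begin{equation*}
    \mu_{V_1,\dots,V_n}\circ\Delta_{V_1,\dots,V_n}(P)=P_{y_C}.
\end{equation*}
The base case $n=1$ is immediate. For the inductive step, the iterated coproduct begins by writing $\Delta_{V_1,V_2\sqcup\cdots\sqcup V_n}(P)=P|_{V_1}\otimes P/_{V_1}$; applying the induction hypothesis to $P/_{V_1}$ with composition $C'=(V_2,\dots,V_n)$ and using that the operation of taking a $y$-maximum face commutes with Minkowski sums, one rewrites $P|_{V_1}+(P/_{V_1})_{y_{C'}}=(P_{\1_{V_1}})_{y_{C'}}=P_{y_C}$.

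Takeuchi's formula then rewrites as
\begin{equation*}
    \antipode_V(P)=\sum_{Q\leq P}\left(\sum_{\substack{C\vDash V\\ P_{y_C}=Q}}(-1)^{l(C)}\right)Q,
\end{equation*}
so the theorem reduces to showing that the inner coefficient equals $(-1)^{\card{V}-\dim Q}$ for every face $Q\leq P$. To handle this I would pass to the normal fan. A composition $C$ determines the relatively open braid cone $\sigma_C=\set{y\in\R^V\,|\, y\text{ is constant on each }V_i,\ y|_{V_1}>\cdots>y|_{V_n}}$, which has dimension $l(C)$ and contains $y_C$. An equivalent characterization of generalized permutahedra is that the normal fan of $P$ is refined by the braid arrangement fan, so any two directions lying in the same braid cone select the same face of $P$. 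Hence $P_{y_C}=Q$ if and only if $\sigma_C\subseteq\scone{P}(Q)$, and $\scone{P}(Q)$ decomposes as the disjoint union of the relatively open cells $\sigma_C$ it contains.

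The last step, which I expect to be the main obstacle, is to evaluate $\sum_{\sigma_C\subseteq\scone{P}(Q)}(-1)^{\dim\sigma_C}$. I would read this as the compactly supported Euler characteristic of $\scone{P}(Q)$ computed along its braid stratification: since $\scone{P}(Q)$ is a relatively open convex polyhedral cone of dimension $\card{V}-\dim Q$, it is homeomorphic to a Euclidean space of the same dimension, so $\chi_c(\scone{P}(Q))=(-1)^{\card{V}-\dim Q}$, which is exactly what is needed. Making this Euler-characteristic computation rigorous is the delicate point; one may invoke additivity of $\chi_c$ directly across the braid stratification (whose strata are locally closed), or equivalently intersect $\scone{P}(Q)$ with an affine hyperplane transverse to the cone and reduce to the classical fact that the reduced Euler characteristic of the boundary of a convex polytope of dimension $d$ equals $(-1)^{d-1}$. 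Either route relies essentially on the generalized permutahedron hypothesis, without which the braid fan would not stratify $\scone{P}(Q)$ at all and the identity would fail.
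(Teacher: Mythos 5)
Your proposal is correct, but note that the paper does not prove this statement itself: it imports it verbatim as Theorem~7.1 of Aguiar--Ardila. What you have written is essentially a reconstruction of their original argument --- Takeuchi's formula, the identification $\mu_C\circ\Delta_C(P)=P_{y_C}$ (their Proposition~5.4, which this paper also uses elsewhere), regrouping by faces, and the evaluation of the coefficient as the compactly supported Euler characteristic of the relatively open normal cone stratified by braid cones --- so there is nothing to compare beyond saying the approach and the delicate final step are the standard ones and are sound.
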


We end this subsection by mentioning a Hopf sub-monoid of $\GP$ which will use later on.
Denote by $\Delta_V = \conv(v\,|\,v\in V)$ the {\em standard simplex} of $\R V$. A {\em hypergraphic polytope} 
is a Minkowski sum of standard simplices. We denote by $\HGP$ the species of hypergraphic polytopes.

\begin{proposition}[Proposition 19.5 in \cite{AA}]
    The species $\HGP$ is a Hopf sub-monoid of $\GP$.
\end{proposition}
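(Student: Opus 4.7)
The plan is to verify the two axioms of a Hopf sub-species: that $\HGP$ is stable under the product $\mu_{V_1,V_2}$ (Minkowski sum) and under the co-product $\Delta_{V_1,V_2}$. The fact that $\HGP$ is a sub-species of $\GP$ is immediate, since every Minkowski sum of standard simplices is in particular a generalized permutahedron, and relabeling the ground set sends sums of simplices to sums of simplices. Stability under the product is also essentially tautological: if $P = \sum_{i} \Delta_{S_i}$ with $S_i \subseteq V_1$ and $Q = \sum_{j} \Delta_{T_j}$ with $T_j \subseteq V_2$, then $P \cdot Q = P + Q = \sum_i \Delta_{S_i} + \sum_j \Delta_{T_j}$ is again a Minkowski sum of standard simplices in $\R V$, so it belongs to $\HGP[V_1 \sqcup V_2]$.

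The substantive step is co-product stability. Let $P = \sum_{i=1}^{k} \Delta_{S_i} \in \HGP[V]$ and fix a decomposition $V = V_1 \sqcup V_2$. The key observation is that taking the maximum face in a fixed direction commutes with Minkowski sums, so
\begin{equation*}
    P_{\1_{V_1}} = \sum_{i=1}^{k} (\Delta_{S_i})_{\1_{V_1}}.
\end{equation*}
For each $i$, the maximum of $\1_{V_1}$ over $\Delta_{S_i} = \conv(v : v \in S_i)$ is either $1$, attained by the vertices of $S_i \cap V_1$ (when this set is nonempty), or $0$, attained on all of $\Delta_{S_i}$ (when $S_i \subseteq V_2$). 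Partitioning the index set as $I = \{i : S_i \cap V_1 \neq \emptyset\}$ and $J = \{i : S_i \subseteq V_2\}$, this gives
\begin{equation*}
    P_{\1_{V_1}} = \Bigl(\sum_{i \in I} \Delta_{S_i \cap V_1}\Bigr) + \Bigl(\sum_{j \in J} \Delta_{S_j}\Bigr),
\end{equation*}
where the first summand lies in $\R V_1$ and the second in $\R V_2$. Matching this against the unique decomposition $P_{\1_{V_1}} = P|_{V_1} + P/_{V_1}$ provided by the restriction/contraction, I would conclude that $P|_{V_1} = \sum_{i \in I} \Delta_{S_i \cap V_1}$ and $P/_{V_1} = \sum_{j \in J} \Delta_{S_j}$, both manifestly in $\HGP$.

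The only mild obstacle is justifying the identification of the two summands above with $P|_{V_1}$ and $P/_{V_1}$: a priori the restriction/contraction is characterized by the decomposition into polytopes on $\R V_1$ and $\R V_2$, so one needs the uniqueness of this decomposition of a face of a generalized permutahedron as a Minkowski sum of a polytope in $\R V_1$ and one in $\R V_2$ (up to translation), which is part of the content of the restriction/contraction statement invoked from \cite{Fuji05}. Once this identification is granted, both axioms hold and $\HGP$ is a Hopf sub-monoid of $\GP$.
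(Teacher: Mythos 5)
The paper states this proposition without proof, simply citing Aguiar and Ardila, so there is no in-paper argument to compare against; your proof is correct and is essentially the standard one. Closure under the product is indeed immediate, and for the coproduct your two key facts both hold: $(P+Q)_y = P_y + Q_y$ for any direction $y$, and $(\Delta_S)_{\1_{V_1}}$ equals $\Delta_{S\cap V_1}$ when $S\cap V_1\noemp$ and $\Delta_S$ when $S\subseteq V_2$; the uniqueness you flag as the only gap is harmless, since a polytope of the form $A+B$ with $A\subseteq\R V_1$ and $B\subseteq\R V_2$ determines $A$ and $B$ as its coordinate projections onto $\R V_1$ and $\R V_2$, so $P|_{V_1}$ and $P/_{V_1}$ are exactly your two summands (allowing the empty index set, which yields the origin, i.e.\ the empty Minkowski sum, still a hypergraphic polytope).
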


% An important property of hypergraphic polytopes is that they are in bijection with hypergraphs. Indeed for $h\in\HG[V]$ a hypergraph, 
% denote by $\Delta_h$ the Minkowski sum $\sum_{e\in h}\Delta_e$. Then the map $\Delta:h\mapsto\Delta_h$ clearly is a species isomorphism 
% from $\HG$ to $\HGP$.

\subsection{Polynomial invariant and reciprocity theorem}

We begin by giving an interpretation of $\chi$ over non negative integers. Most definitions necessary to state our theorem were 
given in previous subsection, but we still need to introduce two simple notions. Let $P\in\GP[V]$ be a generalized 
permutahedron and remark that as maps from $V$ to $[n]$, colorings are elements of $\R^V$ \textit{i.e.} directions.
\begin{itemize}
    \item For $\zeta$ a character of $\GP$, $P$ is a {$\zeta$-face} if $\zeta(P)\not=0$. 
    \item For $Q$ a face of $P$ and $c$ a coloring of $V$, $Q$ and $c$ are said to be {\em strictly compatible} if $P_c = Q$. They are
    said to be {\em compatible} if $Q\leq P_c$. We respectively denote by $\scone{P}(Q)_n=[n]^V\cap\scone{P}(Q)$ and 
    $\cone{P}(Q)_n=[n]^V\cap\cone{P}(Q)$ the set of colorings with $[n]$ strictly compatible with $Q$ and compatible with $Q$.
\end{itemize}

\begin{theorem}\label{th_pol_inv_gp}
    Let $\zeta$ be a character of $\GP$, $V$ be a finite set and $P\in\GP[V]$ a generalized permutahedra. Then
    \begin{equation*}
        \chi^{\zeta}_V(P)(n) = \sum_{Q\leq P} \zeta(Q)\card{\scone{P}(Q)_n}.
    \end{equation*}
    In particular, if $\zeta$ is a characteristic function, then $\chi^{\zeta}_V(P)(n)$ is the number of strictly 
    compatible pairs of $\zeta$-faces of $P$ and colorings with $[n]$.
\end{theorem}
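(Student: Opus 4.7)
The plan is to unfold the definition of $\chi^\zeta_V(P)(n)$, identify each summand (indexed by a decomposition of $V$) with $\zeta$ evaluated on a face of $P$, and then reindex the sum over colorings $c\in[n]^V$ and group terms by the face $P_c$.

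The key geometric step to establish first is the following identification: for any decomposition $S=(V_1,\ldots,V_n)\vdash V$,
\[
\mu_{V_1,\ldots,V_n}\circ\Delta_{V_1,\ldots,V_n}(P)=P_{y_S},
\]
where $y_S=\sum_{i=1}^n (n+1-i)\,\1_{V_i}$ (equivalently, any direction with $y_S(V_1)>y_S(V_2)>\cdots>y_S(V_n)$). I would prove this by induction on $n$. The case $n=1$ is immediate. For $n\geq 2$, apply $\Delta_{V_1,\,V_2\sqcup\cdots\sqcup V_n}$ first, obtaining $P|_{V_1}\otimes P/_{V_1}$ in $\GP[V_1]\otimes\GP[V_2\sqcup\cdots\sqcup V_n]$; the quoted proposition gives $P|_{V_1}+P/_{V_1}=P_{\1_{V_1}}$. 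Applying the inductive hypothesis to $P/_{V_1}$ with the sub-decomposition $(V_2,\ldots,V_n)$ and using the elementary fact that $(Q_1+Q_2)_y=Q_1+(Q_2)_y$ whenever $Q_1\subseteq\R W_1$, $Q_2\subseteq\R W_2$ with $W_1\cap W_2=\emptyset$ and $y$ is supported on $W_2$, the two steps glue into the face $P_{y_S}$.

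Once this is established, observe that $S\mapsto y_S$ is a bijection between decompositions of $V$ of length $n$ (empty parts allowed) and colorings in $[n]^V$, both sets having cardinality $n^{\card{V}}$. Reindexing the sum in Definition~\ref{def_polynomial_invariant} along this bijection and then grouping colorings by the face they cut out yields
\[
\chi^\zeta_V(P)(n)=\sum_{\substack{S\vdash V\\ l(S)=n}}\zeta\bigl(\mu\circ\Delta_S(P)\bigr)=\sum_{c\in[n]^V}\zeta(P_c)=\sum_{Q\leq P}\zeta(Q)\,\card{\scone{P}(Q)_n},
\]
using the definition $\scone{P}(Q)_n=[n]^V\cap\set{y\in\R^V:P_y=Q}$. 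The final clause on characteristic functions is then immediate, since $\zeta\in\set{0,1}$ turns the weighted sum into a count of strictly compatible pairs $(Q,c)$ with $Q$ a $\zeta$-face and $P_c=Q$.

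The main obstacle is the first step: one must be careful about the ambient vector spaces ($\R V$, $\R V_1$, $\R V_2\sqcup\cdots\sqcup V_n$), the implicit embeddings in Minkowski sums, and verifying that taking max faces commutes with Minkowski summands whose support is disjoint from the optimization direction. Once that lemma is in place, the rest is a clean combinatorial reindexing.
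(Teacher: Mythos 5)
Your proposal is correct and follows essentially the same route as the paper: unfold Definition~\ref{def_polynomial_invariant}, identify each term $\mu_{V_1,\dots,V_n}\circ\Delta_{V_1,\dots,V_n}(P)$ with a face of $P$ cut out by the associated direction, reindex the sum over colorings in $[n]^V$, and group by the face $P_c$. The only difference is that the paper simply cites Proposition 5.4 of \cite{AA} for the identity $\mu_{V_1,\dots,V_n}\circ\Delta_{V_1,\dots,V_n}(P)=P_{y}$, whereas you supply an inductive proof of it (and your $y_S$ reverses the orientation of the paper's bijection $b$, assigning the largest value to $V_1$; since either assignment is a bijection onto $[n]^V$, the reindexing and the final formula are unaffected).
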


\begin{proof}
    We have:
    \begin{equation*}\begin{split}
        \chi^{\zeta}_V(P)(n) &= \sum_{V=V_1\sqcup\dots\sqcup V_n} \zeta_V\circ\mu_{V_1,\dots, V_n}\circ\Delta_{V_1,\dots,V_n}(P)\\
        &= \sum_{\text{$D\vdash V$, $l(D)=n$}} \zeta_V\circ\mu_D\circ\Delta_D(P)\\
        &= \sum_{\text{$D$ coloring of $V$ with $[n]$}} \zeta_V(P_D)\\
        &= \sum_{Q\leq P} \sum_{\substack{\text{$D$ coloring of $V$ with $[n]$}\\P_D=Q}}\zeta_V(Q)\\
        &= \sum_{Q\leq P} \zeta_V(Q) \card{\scone{P}(Q)_n}.
    \end{split}\end{equation*}
    where the third equality comes from \cite{AA} Proposition 5.4 which states that for $D=(V_1,\dots,V_n)$ a decomposition/coloring of $V$,
    the $D$-maximum face of $P$ is given by $P_D=\mu_D\circ\Delta_D(P)$.
\end{proof}

\begin{example}\label{ex_pol_inv_gp}
    Let $\zeta$ be the characteristic function of {\em graphic polytopes} \textit{i.e.} polytopes which can be written as a Minkowski sum of standard simplices of dimension 1.
    Let $P=\Delta_{e_1,e_2,e_3}+\Delta_{e_1,e_4}$:
    \begin{equation*}
        P = \enspace \begin{tikzpicture}[Centering]
            \node[NodeFree, draw=white](aa)at(-0.9,-0.25){$(1,1,0,0)$};
            \node[NodeFree, draw=white](aa)at(0.9,-0.25){$(1,0,1,0)$};
            \node[NodeFree, draw=white](aa)at(-0.6,1){$(2,0,0,0)$};
            \node[NodeFree, draw=white](aa)at(1.5,1.5){$(1,0,0,1)$};
            \node[NodeFree, draw=white](aa)at(2.6,0.2){$(0,0,1,1)$};
            \coordinate(a)at(-0.6,0);
            \coordinate(b)at(0.6,0);
            \coordinate(c)at(0,1);
            \coordinate(d)at(1.5,1.3);
            \coordinate(e)at(2.1,0.3);
            \filldraw[draw=DarkGreen, fill=DarkGreen!20] (a) -- (b) -- (c) -- cycle;
            \filldraw[draw=DarkGreen, fill=DarkGreen!20] (b) -- (c) -- (d) -- (e) -- cycle;
            \draw[DarkGreen, dashed](a)--(0.9,0.3);
            \draw[DarkGreen, dashed](d)--(0.9,0.3);
            \draw[DarkGreen, dashed](e)--(0.9,0.3);
        \end{tikzpicture},
    \end{equation*}
    where the hidden point is of coordinate $(0,1,0,1)$.
    From Theorem~\ref{th_pol_inv_gp} we know that $\chi^{\zeta}(P)(3)$ is the number of strictly compatible pairs of $\zeta$-faces of $P$ and colorings with $[2]$.
    The $\zeta$-faces of $P$ are its three rectangular faces which correspond to the sums $\Delta_{e_1,e_2} + \Delta_{e_1,e_4}$, $\Delta_{e_1,e_3} + \Delta_{e_1,e_4}$
    and $\Delta_{e_2,e_3} + \Delta_{e_1,e_4}$. Each of these faces is strictly compatible with exactly one coloring with $[2]$. These colorings are respectively 
    $2\mathbf{e_1^*}+2\mathbf{e_2^*}+\mathbf{e_3^*}+2\mathbf{e_4^*}$, $2\mathbf{e_1^*}+\mathbf{e_2^*}+2\mathbf{e_3^*}+2\mathbf{e_4^*}$ and 
    $\mathbf{e_1^*}+2\mathbf{e_2^*}+2\mathbf{e_3^*}+\mathbf{e_4^*}$. Hence we have $\chi^{\zeta}(P)(2)=3$.
\end{example}

The basic character $\zeta_1$ of $\GP$ is equal to 1 on points and 0 elsewhere. There is a particular interpretation for this
character which is presented in \cite{AA}. For $P\in\GP[V]$ a generalized permutahedra and $y$ a direction in $\R V$, $y$ is said to be {\em $P$-generic} if 
$P_y$ is a point.

\begin{corollary}[Theorem 9.2 (v) in \cite{BJR09} and Proposition 17.3 in \cite{AA}]
    Let $V$ be a finite set and $P\in\GP[V]$ a generalized permutahedra. Then $\chi^{\zeta_1}(P)(n)$ is the number of $P$-generic 
    colorings of $V$ with $[n]$.
\end{corollary}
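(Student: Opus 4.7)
The plan is to deduce this corollary as a direct specialization of Theorem~\ref{th_pol_inv_gp} to the basic character $\zeta_1$. By Theorem~\ref{th_pol_inv_gp}, for any character $\zeta$ of $\GP$ we have
\begin{equation*}
    \chi^{\zeta}_V(P)(n) = \sum_{Q\leq P} \zeta(Q)\card{\scone{P}(Q)_n},
\end{equation*}
so the task reduces to identifying which faces $Q$ satisfy $\zeta_1(Q) = 1$ and then checking that the resulting count matches the definition of a $P$-generic coloring.

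First I would identify the discrete elements of $\GP$. Since the product on $\GP$ is the Minkowski sum, a discrete element of $\GP[V]$ is by definition a Minkowski sum of generalized permutahedra living in each $\R\{v\}$ for $v\in V$. Each factor is a single point (the only generalized permutahedra on a singleton set are points), hence the Minkowski sum is itself a single point. Conversely every point of $\R V$ is obviously such a sum of points over singletons. Therefore $\zeta_1(Q)=1$ if and only if $Q$ is a $0$-dimensional face of $P$, and $\zeta_1(Q)=0$ otherwise.

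Substituting this into the formula of Theorem~\ref{th_pol_inv_gp} collapses the sum to
\begin{equation*}
    \chi^{\zeta_1}_V(P)(n) = \sum_{\substack{Q\leq P \\ \dim Q = 0}} \card{\scone{P}(Q)_n}.
\end{equation*}
To finish, I would invoke the definition of $\scone{P}(Q)_n$ as the set of colorings $c\in[n]^V$ that are strictly compatible with $Q$, meaning $P_c = Q$. Since every coloring $c$ uniquely determines its maximum face $P_c$, the sets $\scone{P}(Q)_n$ for $Q$ ranging over the vertices of $P$ are pairwise disjoint, and their union is exactly the set of colorings $c$ for which $P_c$ is a point, i.e.\ the set of $P$-generic colorings of $V$ with $[n]$.

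There is really no hard step here, just a bookkeeping verification; the only point that deserves attention is the identification of discrete elements with vertices, which uses that the Hopf monoid product is Minkowski sum and that singletons carry only point-polytopes. Everything else is a direct rewriting of Theorem~\ref{th_pol_inv_gp}.
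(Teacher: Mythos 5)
Your proposal is correct and follows the same route as the paper: specialize Theorem~\ref{th_pol_inv_gp} to $\zeta_1$, note that the $\zeta_1$-faces are exactly the points of $P$, and observe that the strictly compatible colorings are those $c$ with $P_c$ a point, i.e.\ the $P$-generic ones. Your extra justification that discrete elements are points (Minkowski sums of point polytopes over singletons) is a detail the paper takes for granted but is entirely consistent with its argument.
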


\begin{proof}
    By definition the $\zeta_1$-faces are the points and a coloring $c$ is strictly compatible with a face $Q$ if $P_c=Q$.
    $\chi^{\zeta_1}(P)$ is then counting the number of colorings $c$ such that $P_c$ is a point \textit{i.e.} the number of $P$-generic colorings.
\end{proof}

We now give the reciprocity theorem associated with these polynomials. A character $\zeta$ of $\GP$ is said to be {\em even} if 
the $\zeta$-faces are of even dimension.
\begin{theorem}\label{rec_pol_inv_gp}
    Let $\zeta$ be a character of $\GP$, $V$ be a finite set and $P\in\GP[V]$ a generalized permutahedra then
    \begin{equation*}
        \chi^{\zeta}_V(h)(-n) = \sum_{Q\leq P} (-1)^{\card{V}-\dim Q}\zeta(Q)\card{\cone{P}(Q)_n}.
    \end{equation*}
    Furthermore if $\zeta$ is an even characteristic function then $(-1)^{\card{V}}\chi^{\zeta}(P)(-n)$ 
    is the number of compatible pairs of $\zeta$-faces of $P$ and colorings with $[n]$. In particular,
    $(-1)^{\card{V}}\chi^{\zeta}(P)(-1)$ is the number of $\zeta$-faces.
\end{theorem}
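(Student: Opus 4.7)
The plan is to combine the reciprocity property of Theorem~\ref{th_polynomial_invariant} (point 3), which gives $\chi^{\zeta}_V(P)(-n)=\chi^{\zeta}_V(\antipode_V(P))(n)$, with the cancellation-free antipode formula of Theorem~\ref{th_antipode_gp} and then apply the already-established Theorem~\ref{th_pol_inv_gp} face-by-face. Concretely, I would start from
\begin{equation*}
\chi^{\zeta}_V(P)(-n) \;=\; \sum_{Q\leq P}(-1)^{\card{V}-\dim Q}\,\chi^{\zeta}_V(Q)(n),
\end{equation*}
expand each $\chi^{\zeta}(Q)(n)$ as $\sum_{R\leq Q}\zeta(R)\card{\scone{Q}(R)_n}$ via Theorem~\ref{th_pol_inv_gp}, and exchange the order of summation to group terms by the inner face $R$:
\begin{equation*}
\chi^{\zeta}_V(P)(-n) \;=\; \sum_{R\leq P}\zeta(R)\sum_{Q\,:\,R\leq Q\leq P}(-1)^{\card{V}-\dim Q}\,\card{\scone{Q}(R)_n}.
\end{equation*}

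The statement would then follow once I establish, for every face $R\leq P$, the polytopal identity
\begin{equation*}
\sum_{Q\,:\,R\leq Q\leq P}(-1)^{\dim Q-\dim R}\,\card{\scone{Q}(R)_n} \;=\; \card{\cone{P}(R)_n},
\end{equation*}
and this is the step I expect to be the main obstacle. The guiding idea is that the closed cone is already tiled by open cones, $\cone{P}(R)=\bigsqcup_{Q:R\leq Q\leq P}\scone{P}(Q)$, while the interval $[R,P]$ in the face lattice of $P$ is Eulerian with Möbius function $(-1)^{\dim Q-\dim R}$. For $c\in\scone{P}(R)$ every face $Q\in[R,P]$ contributes $Q_c=R$, so the Euler--Poincaré relation applied to the quotient polytope $P/R$ forces the signed sum to vanish when $R<P$; for $c$ lying in another open cone $\scone{P}(Q')$ one must identify, from the local structure of the normal fan of $P$ at $R$, exactly which $Q\in[R,P]$ contribute $Q_c=R$ and combine the signed counts by inclusion--exclusion to recover $\card{\cone{P}(R)_n}$. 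A back-up plan is induction on $\dim P-\dim R$, peeling off the top face $Q=P$ and using the identity on each proper face.

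For the ``Furthermore'' clause I would simply specialise the general formula. When $\zeta$ is an even characteristic function, $\zeta(Q)\neq 0$ forces $\dim Q$ even, whence $(-1)^{\card{V}-\dim Q}\zeta(Q)=(-1)^{\card{V}}\zeta(Q)$, and the main formula becomes
\begin{equation*}
(-1)^{\card{V}}\chi^{\zeta}(P)(-n)=\sum_{Q\;\zeta\text{-face of }P}\card{\cone{P}(Q)_n},
\end{equation*}
which by definition is the number of compatible pairs $(Q,c)$. For $n=1$ the only coloring is $\1_V$; since any generalized permutahedron lies on a hyperplane of constant coordinate sum, $\1_V$ is constant on $P$ and $P_{\1_V}=P$, so $\1_V\in\cone{P}(Q)$ for every face $Q\leq P$ and $\card{\cone{P}(Q)_1}=1$. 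Thus $(-1)^{\card{V}}\chi^{\zeta}(P)(-1)$ equals the number of $\zeta$-faces of $P$.
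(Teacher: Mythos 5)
Your skeleton coincides with the paper's up to the point you flag as the obstacle: the paper also starts from $\chi^{\zeta}_V(P)(-n)=\chi^{\zeta}_V(\antipode_V(P))(n)$, inserts the cancellation-free antipode formula of Theorem~\ref{th_antipode_gp}, and expands each $\chi^{\zeta}(Q)(n)$ via Theorem~\ref{th_pol_inv_gp}. The identity you reduce everything to,
\begin{equation*}
\sum_{Q\,:\,R\leq Q\leq P}(-1)^{\dim Q-\dim R}\,\card{\scone{Q}(R)_n}=\card{\cone{P}(R)_n},
\end{equation*}
is indeed correct and equivalent to the theorem, but the route you sketch for it cannot succeed as stated, and this is a genuine gap rather than a mere omission of detail. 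The point is that no pointwise (coloring-by-coloring) version of this identity with $P_c$ holds: as you yourself observe, for $c\in\scone{P}(R)$ with $R<P$ the signed number of faces $Q\in[R;P]$ satisfying $Q_c=R$ is $0$ by the Euler relation on the interval, yet every such $c$ lies in $\cone{P}(R)$ and must be counted once on the right. The correct pointwise statement is $\sum_{Q\in[R;P]}(-1)^{\dim Q-\dim R}\,\#\set{Q\,:\,Q_c=R}=1$ exactly when $R\leq P_{-c}$ and $0$ otherwise --- note the \emph{reversed} direction $-c$ --- which is the coefficient of $R$ in Lemma~17.2.1 of \cite{AA}. One only recovers $\card{\cone{P}(R)_n}$ after summing over $c\in[n]^V$ and using that $P_{-c}=P_{(n+1)\1_V-c}$, so that the reflection $c\mapsto (n+1)\1_V-c$ is a bijection of $[n]^V$ carrying the condition $R\leq P_{-c}$ to $R\leq P_c$. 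Your ``inclusion--exclusion from the local structure of the normal fan at $R$'' is, without this reflection, an attempt to prove the false pointwise statement; with it, it amounts to reproving Lemma~17.2.1. The paper sidesteps the difficulty by swapping the double sum into a sum over colorings $y$, quoting Lemma~17.2.1 of \cite{AA} wholesale to replace $\sum_{Q\leq P}(-1)^{\dim Q}\zeta(Q_y)$ by $\sum_{Q\leq P_{-y}}(-1)^{\dim Q}\zeta(Q)$, and only then performing the reflection.

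Your treatment of the ``Furthermore'' clause is correct and essentially what the paper intends: evenness kills the sign $(-1)^{-\dim Q}$ on $\zeta$-faces, and for $n=1$ the only coloring is $\1_V$, which is constant on $P$ (a generalized permutahedron lies in a hyperplane of constant coordinate sum), so $P_{\1_V}=P$ and $\card{\cone{P}(Q)_1}=1$ for every face $Q$.
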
 

\begin{proof}
    Lemma 17.2.1 in \cite{AA} state that
    \begin{equation*}
        \sum_{Q\leq P} (-1)^{\dim Q}Q_y = \sum_{Q\leq P_{-y}}(-1)^{\dim Q}Q.
    \end{equation*}
    Hence, for any direction $y$ we have:
    \begin{equation*}
        \sum_{Q\leq P} (-1)^{\dim Q}\zeta(Q_y) = \sum_{Q\leq P_{-y}}(-1)^{\dim Q}\zeta(Q).
    \end{equation*}

    Beginning with Theorem~\ref{th_antipode_gp} and Theorem~\ref{th_polynomial_invariant}.3 we have:
    \begin{equation*}\begin{split}
        \chi(P)(-n) &= \chi(\antipode(P))(n) = \chi(\sum_{Q\leq P}(-1)^{\card{V}-\dim Q}Q)(n) \\
        &= \sum_{Q\leq P}(-1)^{\card{V}-\dim Q}\chi(Q)(n) \\
        &= \sum_{Q\leq P}(-1)^{\card{V}-\dim Q}\sum_{R\leq Q} \zeta(R)\card{\scone{Q}(R)_n} \\
        &= \sum_{Q\leq P}(-1)^{\card{V}-\dim Q}\sum_{R\leq Q} \zeta(R)\sum_{\substack{y:V\to[n]\\ Q_y = R}} 1 \\
        &= \sum_{Q\leq P}(-1)^{\card{V}-\dim Q}\sum_{y:V\to[n]} \zeta(Q_y) \\
        &= \sum_{y:V\to[n]} (-1)^{\card{V}}\sum_{Q\leq P}(-1)^{\dim Q}\zeta(Q_y) \\
        &= \sum_{y:V\to[n]} (-1)^{\card{V}}\sum_{Q\leq P_{-y}}(-1)^{\dim Q}\zeta(Q) \\
        &= \sum_{Q\leq P}(-1)^{\card{V}-\dim Q}\zeta(Q)\sum_{\substack{y:V\to[n]\\ Q\leq P_{-y}}} 1 \\
        &= \sum_{Q\leq P}(-1)^{\card{V}-\dim Q}\zeta(Q) \card{\cone{P}(Q)_n},
    \end{split}\end{equation*}
    where the last equality comes from the fact that $P_{-y}=P_{n+1-y}$ and $n+1-y$ as co-domain $[n]$ is and only if
    $y$ has co-domain $[n]$.
\end{proof}

\begin{example}
    The character $\zeta$ of Example~\ref{ex_pol_inv_gp} is not even but the $\zeta$-faces of the polytope $P$ of Example~\ref{ex_pol_inv_gp} are of even dimension
    hence $(-1)^4\chi^{\zeta}(P)(-2)$ is equal to the number of compatible pairs of $\zeta$-faces of $P$ and colorings with $[2]$. For each rectangular
    face of $P$, its compatible colorings with $[2]$ are the one given in Example~\ref{ex_pol_inv_gp} along with the colorings 
    $\mathbf{e_1^*}+\mathbf{e_2^*}+\mathbf{e_3^*}+\mathbf{e_4^*}$ and $2\mathbf{e_1^*}+2\mathbf{e_2^*}+2\mathbf{e_3^*}+2\mathbf{e_4^*}$. Hence $\chi^{\zeta}(P)(-2)=9$.
\end{example}

\begin{corollary}[Theorem 6.3 and Theorem 9.2 (v) in \cite{BJR09} and Proposition 17.4 in \cite{AA}]
    Let $V$ be a finite set and $P\in\GP[V]$ a generalized permutahedra. Then, we have:
        \begin{equation*}\label{cor_poly_gp}
            (-1)^{\card{V}}\chi^{\zeta_1}(P)(-n) = \sum_{c:V\to[n]} \card{\text{\textit{vertices of $P_c$}}}.
        \end{equation*}
\end{corollary}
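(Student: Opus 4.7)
The plan is to deduce this directly from Theorem~\ref{rec_pol_inv_gp} applied to the basic character $\zeta_1$ on $\GP$. So the first step is to verify that $\zeta_1$ satisfies the hypothesis of the ``Furthermore'' clause of that theorem, namely that it is an even characteristic function. By definition $\zeta_1$ takes values in $\{0,1\}$, so it is a characteristic function. The discrete elements of $\GP[V]$ are exactly the products of singletons, which in the Hopf monoid $\GP$ correspond to Minkowski sums of points, i.e.\ to points themselves (0-dimensional polytopes). Hence the $\zeta_1$-faces of any $P$ are precisely the vertices of $P$, which have dimension $0$; in particular $\zeta_1$ is even.

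Next I would apply Theorem~\ref{rec_pol_inv_gp} to conclude that $(-1)^{|V|}\chi^{\zeta_1}(P)(-n)$ counts compatible pairs $(Q,c)$ where $Q$ is a $\zeta_1$-face (vertex) of $P$ and $c\colon V\to[n]$ satisfies $Q\leq P_c$. The remaining step is the geometric observation that for a vertex $Q$ of $P$, the relation $Q\leq P_c$ is equivalent to $Q$ being a vertex of the face $P_c$: the vertices of any face $P_c$ of $P$ are precisely the vertices of $P$ contained in $P_c$, since the face poset is ranked by dimension and a $0$-dimensional face below $P_c$ is a vertex of $P_c$.

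Once this is in hand, grouping the sum over compatible pairs by the coloring $c$ gives
\begin{equation*}
(-1)^{|V|}\chi^{\zeta_1}(P)(-n) = \sum_{c\colon V\to[n]} |\{Q \leq P_c : Q \text{ vertex of } P\}| = \sum_{c\colon V\to[n]} |\text{vertices of } P_c|,
\end{equation*}
which is exactly the claimed identity. I do not expect any real obstacle here: the entire content lies in identifying $\zeta_1$-faces with vertices and in the elementary observation about vertices of faces. Both are immediate from the definitions given earlier in the section, so this corollary is essentially a specialization of Theorem~\ref{rec_pol_inv_gp} combined with the interpretation of the basic character recalled just before Theorem~\ref{th_pol_inv_gp}.
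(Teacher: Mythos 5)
Your proposal is correct and follows exactly the paper's own route: apply Theorem~\ref{rec_pol_inv_gp} to the basic character $\zeta_1$, identify the $\zeta_1$-faces as the vertices (which are $0$-dimensional, so $\zeta_1$ is even), and observe that a vertex compatible with $c$ is precisely a vertex of $P_c$ before regrouping the sum by colorings. The only difference is that you spell out the verification of the evenness hypothesis, which the paper leaves implicit.
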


\begin{proof}
    From Theorem~\ref{rec_pol_inv_gp} we have that $(-1)^{\card{V}}\chi^{\zeta_1}(P)(-n)$ is the number of compatible pairs
    of points and coloring with $[n]$. Since the points compatible with a coloring $c$ are by definition the points in $P_c$,
    formula~\eqref{cor_poly_gp} follows.
\end{proof}

%%%%%%%%%%%%%%%%%%%%%%%%%%%%%%%%%%%%%%%%%%%%%%%%%%%%%%%%%%%%%%%%%%%%%%%%%%%%%%%%%%%%%%%%%%%%%%%%%%%%%%%%%%%%%%%%%%%%%%%%%%%%%%%%%%%%%%%%%%%%%%%%%%%%%%%%%%%%%%%%%%%%%%%%%%%%%%
%%%%%%%%%%%%%%%%%%%%%%%%%%%%%%%%%%%%%%%%%%%%%%%%%%%%%%%%%%%%%%%%%%%%%%%%%%%%%%%%%%%%%%%%%%%%%%%%%%%%%%%%%%%%%%%%%%%%%%%%%%%%%%%%%%%%%%%%%%%%%%%%%%%%%%%%%%%%%%%%%%%%%%%%%%%%%%
%%%%%%%%%%%%%%%%%%%%%%%%%%%%%%%%%%%%%%%%%%%%%%%%%%%%%%%%%%%%%%%%%%% polynomial inv hypergraph  %%%%%%%%%%%%%%%%%%%%%%%%%%%%%%%%%%%%%%%%%%%%%%%%%%%%%%%%%%%%%%%%%%%%%%%%%%%%%%%
%%%%%%%%%%%%%%%%%%%%%%%%%%%%%%%%%%%%%%%%%%%%%%%%%%%%%%%%%%%%%%%%%%%%%%%%%%%%%%%%%%%%%%%%%%%%%%%%%%%%%%%%%%%%%%%%%%%%%%%%%%%%%%%%%%%%%%%%%%%%%%%%%%%%%%%%%%%%%%%%%%%%%%%%%%%%%%
%%%%%%%%%%%%%%%%%%%%%%%%%%%%%%%%%%%%%%%%%%%%%%%%%%%%%%%%%%%%%%%%%%%%%%%%%%%%%%%%%%%%%%%%%%%%%%%%%%%%%%%%%%%%%%%%%%%%%%%%%%%%%%%%%%%%%%%%%%%%%%%%%%%%%%%%%%%%%%%%%%%%%%%%%%%%%%

\section{Polynomial invariants and reciprocity theorems on the Hopf monoid of hypergraphs}\label{poly_inv_hyper}
As in the previous section, we give here an explicit combinatorial interpretation of the polynomial invariant and its reciprocity 
theorem over the Hopf monoid of hypergraphs defined in \cite{AA}. We expand more on this Hopf monoid and give two proofs
of both the combinatorial interpretation of $\chi$ over positive and non negative integers. One of these proofs is self contained and
the other one uses the results of the previous section.

A {\em hypergraph over $V$} is a multiset $h$ of non empty parts of $V$ called 
{\em edges}. In this context the elements of $V$ are called vertices of $h$. The species of hypergraphs is
denoted by $\HG$. Note that two hypergraphs over different sets can never be equal, e.g $\{\{1,2,3\},\{2,3,4\}\}\in\HG[[4]]$ 
is not the same as $\{\{1,2,3\},\{2,3,4\}\}\in\HG[[4]\cup\{a,b\}]$. This is illustrated in the following figure.

\begin{equation}\begin{split}\label{fig_same_edges}
    \begin{tikzpicture}[Centering,scale=1]
        \node[NodeHyper](1)at(-2,0){$1$};
        \node[NodeHyper](a)at(-1,0){$2$};
        \node[NodeHyper](b)at(0,0){$3$};
        \node[NodeHyper](c)at(0,-1){$4$};
        \draw[EdgeHyper](-1,0.5)--(0,0.5);
        \draw[EdgeHyper](0,0.5)arc(90:0:0.5);
        \draw[EdgeHyper](0.5,0)--(0.5,-1);
        \draw[EdgeHyper](-0.5,-1)arc(-180:0:0.5);
        \draw[EdgeHyper](-0.5,-1)arc(0:90:0.5);
        \draw[EdgeHyper](-1,0.5)arc(90:270:0.5);
        \draw[EdgeHyper](-2,0.25)--(0,0.25);
        \draw[EdgeHyper](-2,0.25)arc(90:270:0.25);
        \draw[EdgeHyper](-2,-.25)--(0,-.25);
        \draw[EdgeHyper](0,0.25)arc(90:-90:0.25);
    \end{tikzpicture} \qquad\qquad\qquad\qquad\qquad
    \begin{tikzpicture}[Centering,scale=1]
        \node[NodeHyper](1)at(-2,0){$1$};
        \node[NodeHyper](a)at(-1,0){$2$};
        \node[NodeHyper](b)at(0,0){$3$};
        \node[NodeHyper](c)at(0,-1){$4$};
        \node[NodeHyper](2)at(-2,-1){$a$};
        \node[NodeHyper](3)at(-1,-1){$b$};
        \draw[EdgeHyper](-1,0.5)--(0,0.5);
        \draw[EdgeHyper](0,0.5)arc(90:0:0.5);
        \draw[EdgeHyper](0.5,0)--(0.5,-1);
        \draw[EdgeHyper](-0.5,-1)arc(-180:0:0.5);
        \draw[EdgeHyper](-0.5,-1)arc(0:90:0.5);
        \draw[EdgeHyper](-1,0.5)arc(90:270:0.5);
        \draw[EdgeHyper](-2,0.25)--(0,0.25);
        \draw[EdgeHyper](-2,0.25)arc(90:270:0.25);
        \draw[EdgeHyper](-2,-.25)--(0,-.25);
        \draw[EdgeHyper](0,0.25)arc(90:-90:0.25);
    \end{tikzpicture} \\
    \text{Two hypergraphs with same edges but over different sets.}
\end{split}\end{equation}

% \begin{figure}[htbp]\label{fig_same_edges}
%     \begin{center}
%         \fig[1.5]{hyperAA/ex_same_edges1}
%         \hspace{5cm}
%         \fig[1.5]{hyperAA/ex_same_edges2}
%         \caption{Two hypergraphs with the same edges but over different sets.}
%     \end{center}
% \end{figure}

\begin{remark}
    When working only with hypergraphs, we will prefer two different ways of drawing the edges depending on the context. The same edge
    $\set{a,b,c}$ can thus be drawn in the two following ways:
    \begin{equation*}
      \begin{tikzpicture}[Centering,scale=1.2]
            \node[NodeHyper](a)at(-1,0){$a$};
            \node[NodeHyper](b)at(0,0){$b$};
            \node[NodeHyper](c)at(0,-1){$c$};
            \draw[EdgeHyper](-1,0.5)--(0,0.5);
            \draw[EdgeHyper](0,0.5)arc(90:0:0.5);
            \draw[EdgeHyper](0.5,0)--(0.5,-1);
            \draw[EdgeHyper](-0.5,-1)arc(-180:0:0.5);
            \draw[EdgeHyper](-0.5,-1)arc(0:90:0.5);
            \draw[EdgeHyper](-1,0.5)arc(90:270:0.5);
        \end{tikzpicture}
        \enspace , \enspace
        \begin{tikzpicture}[Centering,scale=1.2]
            \node[NodeHyper](a)at(-2,0.5){$a$};
            \node[NodeHyper](b)at(-0.5,2){$b$};
            \node[NodeHyper](c)at(0,0){$c$};
            \draw[EdgeHyper](a)edge[bend left=20](c);
            \draw[EdgeHyper](a)edge[bend right=20](b);
            \draw[EdgeHyper](b)edge[bend right=20](c);
        \end{tikzpicture}
        \enspace .
    \end{equation*}
\end{remark}

The species $\HG$ has a Hopf monoid structure with the product the disjoint union and the co-product given by $\Delta_{V_1,V_2}(h) = h|_{V_1} \otimes h/_{V_1}$
where
\begin{equation*}
    h|_V = \set{e\in h\, |\, e\subseteq V} \qquad \text{ and }\qquad h/_V= \set{e\cap V^c\, |\, e \nsubseteq V}.
\end{equation*}

\begin{example}\label{ex_co-product_hypergraphs}
    For $V=[5]$, $V_1=\{1,2,5\}$ and $V_2=\{3,4\}$, we have the following co-product:
    \begin{equation*}
        \begin{tikzpicture}[Centering,scale=1]
            \node[NodeHyper](1)at(-2,0){$1$};
            \node[NodeHyper](a)at(-1,0){$2$};
            \node[NodeHyper](b)at(0,0){$3$};
            \node[NodeHyper](c)at(0,-1){$4$};
            \node[NodeHyper](5)at(-2,-1){$5$};
            \draw[EdgeHyper](-1,0.5)--(0,0.5);
            \draw[EdgeHyper](0,0.5)arc(90:0:0.5);
            \draw[EdgeHyper](0.5,0)--(0.5,-1);
            \draw[EdgeHyper](-0.5,-1)arc(-180:0:0.5);
            \draw[EdgeHyper](-0.5,-1)arc(0:90:0.5);
            \draw[EdgeHyper](-1,0.5)arc(90:270:0.5);
            \draw[EdgeHyper](-2,0.25)--(0,0.25);
            \draw[EdgeHyper](-2,0.25)arc(90:270:0.25);
            \draw[EdgeHyper](-2,-.25)--(0,-0.25);
            \draw[EdgeHyper](0,0.25)arc(90:-90:0.25);
            \draw[EdgeHyper](-2.25,0.25)--(-2.25,-1.25);
            \draw[EdgeHyper](-1.75,0.25)--(-1.75,-1.25);
            \draw[EdgeHyper](-1.75,0.25)arc(0:180:0.25);
            \draw[EdgeHyper](-2.25,-1.25)arc(-180:0:0.25);
        \end{tikzpicture}
        \enspace \mapsto \enspace
        \begin{tikzpicture}[Centering,scale=1]
            \node[NodeHyper](1)at(-2,0){$1$};
            \node[NodeHyper](a)at(-1,0){$2$};
            \node[NodeHyper](5)at(-2,-1){$5$};
            \draw[EdgeHyper](-2.25,0.25)--(-2.25,-1.25);
            \draw[EdgeHyper](-1.75,0.25)--(-1.75,-1.25);
            \draw[EdgeHyper](-1.75,0.25)arc(0:180:0.25);
            \draw[EdgeHyper](-2.25,-1.25)arc(-180:0:0.25);
        \end{tikzpicture} 
        \enspace \otimes \enspace
        \begin{tikzpicture}[Centering,scale=1]
            \node[NodeHyper](b)at(0,0){$3$};
            \node[NodeHyper](c)at(0,-1){$4$};
            \draw[EdgeHyper](b)circle(0.3);
            \draw[EdgeHyper](-0.5,0)--(-0.5,-1);
            \draw[EdgeHyper](0.5,0)--(0.5,-1);
            \draw[EdgeHyper](0.5,0)arc(0:180:0.5);
            \draw[EdgeHyper](-0.5,-1)arc(-180:0:0.5);
        \end{tikzpicture}
    \end{equation*}
\end{example}

The Hopf structure defined here is different from the one defined and studied in \cite{HypB} but there are similar
results in both this paper and \cite{HypB}. This is due to the fact that the notion of acyclic orientations plays a
huge role when working on hypergraphs. Up to some minor subtleties, our notion of acyclic orientations is the same as the
one in \cite{HypB}: we trade the notion of flats for acyclic orientations where all the vertices of an edge can be a target. 
We preferred to define them as maps over hypergraphs with certain properties while they are defined
as compositions of edges which induce directed acyclic graphs. We will point out the common results when they appear.

While the proofs using results of the previous section are rather short,
the self contained proofs are more involved. 
Subsection~\ref{faulhaber_pol} presents some preliminary results. The combinatorial interpretations of $\chi(n)$ and $\chi(-n)$ 
and their self contained proofs are in subsection~\ref{self_contained_pol_hyp} and subsection~\ref{self_contained_rec_th}. 
The proof using the previous section is presented in subsection~\ref{alt_proof}.

%%%%%%%%%%%%%%%%%%%%%%%%%%%%%%%%%%%%%%%%%%%%%%%%%%%%%%%%%%%%%%%%%%%%%%%%%%%%%%%%%%%%%%%%%%%%%%%%%%%%%%%%%%%%%%%%%%%%%%%%%%%%%%%%%%%%%%%%%%%%%%%%%%%%%%%%%%%%%%%%%%%%%%%%%%%%%%
%%%%%%%%%%%%%%%%%%%%%%%%%%%%%%%%%%%%%%%%%%%%%%%%%%%%%%%%%%%%%%%%%%% generalized faulhaber poly  %%%%%%%%%%%%%%%%%%%%%%%%%%%%%%%%%%%%%%%%%%%%%%%%%%%%%%%%%%%%%%%%%%%%%%%%%%%%%%
%%%%%%%%%%%%%%%%%%%%%%%%%%%%%%%%%%%%%%%%%%%%%%%%%%%%%%%%%%%%%%%%%%%%%%%%%%%%%%%%%%%%%%%%%%%%%%%%%%%%%%%%%%%%%%%%%%%%%%%%%%%%%%%%%%%%%%%%%%%%%%%%%%%%%%%%%%%%%%%%%%%%%%%%%%%%%%

\subsection{Generalized Faulhaber's polynomials}\label{faulhaber_pol}
As stated in Theorem~\ref{th_polynomial_invariant}, for any character $\zeta$ and any hypergraph $h$, 
$\chi^{\zeta}(h)(n)$ is a polynomial in $n$. The objects defined here are useful tools to show and 
exploit this polynomial dependency and were introduced in \cite{my1}.

Let $p = (p_1, p_2,\dots, p_t)$ be a finite sequence of positive integers. We define the {\em generalized 
Faulhaber polynomial over $p$}, $F_p$, the function over the integers given by, for $n\in\N$:
\begin{equation*}\label{faulhaber_polynomial}
    F_p(n) = \sum_{0\leq k_1 <\dots < k_t\leq n-1} k_1^{p_1}\cdots k_t^{p_t}.
\end{equation*}
Note that if $t>n$, then $F_p(n) = \sum_{\emptyset}\dots=0$.

As their names suggest, these functions are polynomials.
\begin{proposition}[Proposition 13 in \cite{my1}]\label{prop_faulhaber_polynomial}
    Let $p_1, p_2,\dots, p_t$ be integers and define $d_k = \sum_{i=1}^k p_i + k$ for $1\leq k \leq t$. 
    Then $F_{p_1,\dots, p_t}$ is a polynomial of degree $d_t$ whose constant coefficient is null and whose 
    $(d_t - i)$-th (for $i < d_t$) coefficient is given by
    \begin{equation*}
        \sum_{j_{t-1} = 0}^{\min(j_t, d_{t-1}-1)}
        \sum_{j_{t-2}}^{\min(j_{t-1}, d_{t-2}-1)}\dots
        \sum_{j_1 = 0}^{\min(j_2,d_1 - 1)}
        \prod_{k=1}^t \binom{d_k - j_{k-1}}{j_k - j_{k-1}}\frac{B_{j_k - j_{k-1}}}{d_k - j_{k-1}},
    \end{equation*}
    where $j_t = i$ and $j_0 = 0$, and the $B_j$ numbers are the Bernoulli numbers with the convention
     $B_1 = -1/2$.
\end{proposition}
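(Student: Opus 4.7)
The plan is to proceed by induction on $t$, the length of the sequence $p$. The base case $t=1$ reduces to the classical Faulhaber formula: with the convention $B_1=-1/2$, one has
\begin{equation*}
    \sum_{k=0}^{n-1}k^{p_1} = \frac{1}{p_1+1}\sum_{j=0}^{p_1}\binom{p_1+1}{j}B_j\,n^{p_1+1-j} = \frac{1}{d_1}\sum_{j_1=0}^{d_1-1}\binom{d_1}{j_1}B_{j_1}\,n^{d_1-j_1},
\end{equation*}
which has degree $d_1$, vanishing constant term, and matches the claimed coefficient formula when $t=1$ and $j_0=0$.

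For the inductive step I would first establish the recurrence
\begin{equation*}
    F_{p_1,\dots,p_t}(n) = \sum_{k=0}^{n-1}k^{p_t}\,F_{p_1,\dots,p_{t-1}}(k),
\end{equation*}
obtained by splitting the defining sum according to the largest index $k_t=k$; the terms with $k<t-1$ contribute nothing since $F_{p_1,\dots,p_{t-1}}(k)=0$ in that range. By induction $F_{p_1,\dots,p_{t-1}}(k)$ is a polynomial in $k$ of degree $d_{t-1}$ with zero constant term, hence $k^{p_t}F_{p_1,\dots,p_{t-1}}(k)$ is a polynomial of degree $p_t+d_{t-1}=d_t-1$ in $k$ with no constant term. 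Applying the classical Faulhaber formula to each monomial then shows that $F_{p_1,\dots,p_t}(n)$ is a polynomial in $n$ of degree $d_t$ whose constant term vanishes.

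The remainder of the argument is a coefficient-tracking computation. Writing $a^{(t-1)}_{j_{t-1}}$ for the coefficient of $n^{d_{t-1}-j_{t-1}}$ in $F_{p_1,\dots,p_{t-1}}(n)$ and expanding the inner sum by Faulhaber applied to $k^{d_{t-1}+p_t-j_{t-1}}$ (noting that $d_{t-1}+p_t-j_{t-1}+1=d_t-j_{t-1}$) yields
\begin{equation*}
    a^{(t)}_{j_t} \;=\; \sum_{j_{t-1}=0}^{\min(j_t,\,d_{t-1}-1)} a^{(t-1)}_{j_{t-1}}\,\binom{d_t-j_{t-1}}{j_t-j_{t-1}}\,\frac{B_{j_t-j_{t-1}}}{d_t-j_{t-1}}.
\end{equation*}
The upper bound $d_{t-1}-1$ reflects the vanishing constant term of $F_{p_1,\dots,p_{t-1}}$, and the upper bound $j_t$ reflects that Faulhaber's formula for $\sum_{k=0}^{n-1}k^p$ only involves $B_0,\dots,B_p$. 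Substituting the inductive expression for $a^{(t-1)}_{j_{t-1}}$ then immediately collapses to the claimed nested sum.

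The only real obstacle is the careful bookkeeping of the index ranges: one must check that the truncation $j_k\le d_k-1$ survives the induction, which is exactly what the interplay between the cut-off in Faulhaber's formula and the vanishing constant term at the previous level guarantees. Everything else is routine substitution.
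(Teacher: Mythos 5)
Your proof is correct. The paper does not actually prove this proposition --- it imports it as Proposition 13 of the cited work of Aval et al. --- but your argument (the recurrence $F_{p_1,\dots,p_t}(n)=\sum_{k=0}^{n-1}k^{p_t}F_{p_1,\dots,p_{t-1}}(k)$, the classical Faulhaber expansion with $B_1=-1/2$ applied monomial by monomial, and the bookkeeping showing that the bound $j_{k}\le d_{k}-1$ propagates because the previous level has vanishing constant term) is exactly the standard induction one expects behind the stated nested-sum formula, and all the index ranges check out.
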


\begin{remark}
    These polynomials also generalize Stirling numbers of the first kind: denote by $F_{1^k}(n)$ the 
    generalized Faulhaber polynomial associated to the sequence of size $k$ with all elements equal to $1$.
    $F_{1^k}(n) = \sum_{0\leq k_1 <\dots < k_t\leq n-1} \prod k_i$ is indeed the absolute value of the 
    coefficient of $x^{n-k}$ in $x(x-1)\cdots(x-n+1)$ and hence $F_{1^k}(n)=s(n,k)$.
\end{remark}

\begin{lemma}[Lemma 21 in \cite{my1}]\label{lemma_faulhaber_polynomials}
    Let $p$ be a sequence of positive integers of length $t$. We then have 
    \begin{equation*}\label{faulhaber_reciprocity}
        F_{p}(-n) = (-1)^{d_t}\sum_{p\prec q} F_q(n+1),
    \end{equation*}
    where $d_t$ is defined in the same way as in Proposition~\ref{prop_faulhaber_polynomial}.
\end{lemma}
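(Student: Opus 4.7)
The plan is to induct on the length $t$ of the sequence $p$. The central tool will be the difference equation
\[
F_p(n+1) - F_p(n) = n^{p_t}\, F_{p'}(n), \qquad p' = (p_1, \ldots, p_{t-1}),
\]
which is immediate from the definition for positive integers and extends to all integers as a polynomial identity thanks to Proposition~\ref{prop_faulhaber_polynomial}.

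For the base case $t = 1$, the only coarsening of $(p)$ is itself and the target identity collapses to the classical power-sum reciprocity $F_p(-n) = (-1)^{p+1} F_p(n+1)$; I would derive it by telescoping the recurrence at negative arguments starting from $F_p(0) = 0$. For the inductive step, iterating the same recurrence at negative integers will yield
\[
F_p(-n) = -(-1)^{p_t} \sum_{k=1}^{n} k^{p_t}\, F_{p'}(-k).
\]
After substituting the inductive expression for $F_{p'}(-k)$ and using the identity $d_t = d_{t-1} + p_t + 1$, the overall sign will collapse to $(-1)^{d_t}$ and the right-hand side will become $(-1)^{d_t} \sum_{p' \prec q'} \sum_{k=1}^{n} k^{p_t}\, F_{q'}(k+1)$.

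The remaining step, which I expect to be the main obstacle, is combinatorial. I would expand $F_{q'}(k+1)$ and set $l_s := k$, so that the inner sum runs over tuples $0 \le l_1 < \cdots < l_{s-1} \le l_s \le n$; splitting according to whether the last inequality is strict or an equality rewrites the inner sum as $F_{q' \cdot (p_t)}(n+1) + F_{(q_1, \ldots, q_{s-2},\, q_{s-1} + p_t)}(n+1)$, with the second term omitted when $q'$ is empty. It will then remain to verify that, as $q'$ ranges over the coarsenings of $p'$, these two families of Faulhaber polynomials enumerate every coarsening $q$ of $p$ exactly once, classified according to whether the last part of $q$ equals $p_t$ alone or is a strictly longer tail $p_j + \cdots + p_t$ with $j \le t-1$. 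Handling this bijection carefully, especially the edge cases when $q'$ is empty or when the last merge of $q'$ is what absorbs $p_t$, is the bookkeeping-heavy core of the argument.
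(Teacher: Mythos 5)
The paper does not actually prove this lemma: it is imported verbatim as Lemma~21 of \cite{my1}, so there is no internal argument to compare yours against. Judged on its own, your proof is correct and complete in all essentials. The difference equation $F_p(n+1)-F_p(n)=n^{p_t}F_{p'}(n)$ does hold for every non-negative integer and hence as a polynomial identity by Proposition~\ref{prop_faulhaber_polynomial}, the telescoping from $F_p(0)=0$ gives exactly $F_p(-n)=-(-1)^{p_t}\sum_{k=1}^{n}k^{p_t}F_{p'}(-k)$, and the sign collapses via $d_t=d_{t-1}+p_t+1$. The splitting of the inner sum according to $l_{s-1}<l_s$ versus $l_{s-1}=l_s$ is also right, the only silent points being that in the equality case the bound $l_{s-1}\geq 1$ can be dropped because the extra term is $0^{q_{s-1}+p_t}=0$, and that the ``$q'$ empty'' caveat only matters if you fold the base case $t=1$ into the induction. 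The final bijection is cleanest if you index coarsenings of $p$ by the set of merged gaps $S\subseteq\{1,\dots,t-1\}$: the map sends $q'\leftrightarrow S'\subseteq\{1,\dots,t-2\}$ either to $S'$ (last part $p_t$ alone) or to $S'\cup\{t-1\}$ (last part absorbs $p_t$), which is manifestly a bijection onto all coarsenings of $p$; since the parts of $p$ are positive, distinct merge patterns give distinct integer compositions, so no overcounting occurs in the sum $\sum_{p\prec q}$. With that phrasing the ``bookkeeping-heavy core'' you worried about is a two-line verification.
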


%%%%%%%%%%%%%%%%%%%%%%%%%%%%%%%%%%%%%%%%%%%%%%%%%%%%%%%%%%%%%%%%%%%%%%%%%%%%%%%%%%%%%%%%%%%%%%%%%%%%%%%%%%%%%%%%%%%%%%%%%%%%%%%%%%%%%%%%%%%%%%%%%%%%%%%%%%%%%%%%%%%%%%%%%%%%%%
%%%%%%%%%%%%%%%%%%%%%%%%%%%%%%%%%%%%%%%%%%%%%%%%%%%%%%%%%%%%%%%%%%% chromatic polynomials  %%%%%%%%%%%%%%%%%%%%%%%%%%%%%%%%%%%%%%%%%%%%%%%%%%%%%%%%%%%%%%%%%%%%%%%%%%%%%%%
%%%%%%%%%%%%%%%%%%%%%%%%%%%%%%%%%%%%%%%%%%%%%%%%%%%%%%%%%%%%%%%%%%%%%%%%%%%%%%%%%%%%%%%%%%%%%%%%%%%%%%%%%%%%%%%%%%%%%%%%%%%%%%%%%%%%%%%%%%%%%%%%%%%%%%%%%%%%%%%%%%%%%%%%%%%%%%

\subsection{Chromatic polynomials of hypergraphs}\label{self_contained_pol_hyp}
Before stating our results on $\chi$ in Theorem~\ref{th_chromatic_polynomial}, recall that a
coloring of $V$ with $[n]$ is a map from $V$ to $[n]$ and that there is a canonical bijection between decompositions and
colorings.

\begin{definition}\label{def_colorings}
    Let $h$ be a hypergraph over $V$ and $c$ be a coloring. For $v\in e\in h$, we say that $v$ is a {\em maximal 
    vertex of $e$ (for $c$)} if $v$ is of maximal color in $e$ and we call the {\em maximal color of $e$ (for $c$)} 
    the color of a maximal vertex of $e$. We say that a vertex $v$ is a {\em maximal vertex (for $c$)} if it is a 
    maximal vertex of an edge and that a color is a {\em maximal color (for $c$)} if it is the maximal color of an edge.

    If $W\subseteq V$ is a subset of vertices, the {\em order of appearance of $W$ (for $V$)} is the composition 
    $\forget(c\cap W)$ where $c\cap W = (c_1\cap W,\dots, c_{l(c)}\cap W)$ and the map $\forget$ sends any decomposition to 
    the composition obtained by dropping the empty parts. 
\end{definition}

% \begin{remark}
%     Since the set of colorings of $V$ with $[n]$ is $[n]^V\subset\R^V$, colorings can be seen as particular directions
%     of $\R^V$ as defined in \ref{hopf_monoid_gp}.
% \end{remark}

When working with colorings, by a slight abuse of notation, for $W$ a set of vertices of the same color for $c$,
we will denote by $c(W)$ their color. This extends $c$ to a map from monochromatic sets of vertices to $[n]$.

\begin{example}\label{ex_colorings}
    We represent a hypergraph along a coloring on $V=\{a,b,c,d,e,f\}$ with $\set{{\color{Green}1},
    {\color{Blue}2},{\color{Magenta}3},{\color{Red}4}}$:
    \begin{equation*}
        \begin{tikzpicture}[Centering,scale=1]
            \node[NodeHyper, draw=Blue, fill =Blue!20](e)at(-2,0){$e$};
            \node[NodeHyper, draw=Magenta, fill =Magenta!20](c)at(-1,0){$c$};
            \node[NodeHyper, draw=Green, fill =Green!20](f)at(0,0){$f$};
            \node[NodeHyper, draw=Magenta, fill =Magenta!20](d)at(0,-1){$d$};
            \node[NodeHyper, draw=Red, fill =Red!20](b)at(-1,-1){$b$};
            \node[NodeHyper, draw=Red, fill =Red!20](a)at(-2,-1){$a$};
            \node(e1)at(-2.5,-0.5){$e_1$};
            \node(e2)at(-0.5,-0.4){$e_2$};
            \node(e3)at(0.7,-0.5){$e_3$};
            \node(e4)at(-1,-1.5){$e_4$};
            \draw[EdgeHyper](-1,0.5)--(0,0.5);
            \draw[EdgeHyper](0,0.5)arc(90:0:0.5);
            \draw[EdgeHyper](0.5,0)--(0.5,-1);
            \draw[EdgeHyper](-0.5,-1)arc(-180:0:0.5);
            \draw[EdgeHyper](-0.5,-1)arc(0:90:0.5);
            \draw[EdgeHyper](-1,0.5)arc(90:270:0.5);
            \draw[EdgeHyper](-2,0.25)--(0,0.25);
            \draw[EdgeHyper](-2,0.25)arc(90:270:0.25);
            \draw[EdgeHyper](-2,-.25)--(0,-0.25);
            \draw[EdgeHyper](0,0.25)arc(90:-90:0.25);
            \draw[EdgeHyper](-2.25,0.25)--(-2.25,-1.25);
            \draw[EdgeHyper](-1.75,0.25)--(-1.75,-1.25);
            \draw[EdgeHyper](-1.75,0.25)arc(0:180:0.25);
            \draw[EdgeHyper](-2.25,-1.25)arc(-180:0:0.25);
            \draw[EdgeHyper](-1,-0.75)--(0,-0.75);
            \draw[EdgeHyper](-1,-1.25)--(0,-1.25);
            \draw[EdgeHyper](0,-0.75)arc(90:-90:0.25);
            \draw[EdgeHyper](-1,-0.75)arc(90:270:0.25);
        \end{tikzpicture}
    \end{equation*}
    % \begin{center}
    %     \fig[1.6]{hyperAA/ex_colorings1}
    % \end{center}
    The maximal vertex of $e_1$ is $a$ and the maximal vertices of $e_3$ are $c$ and $d$. The maximal color of 
    $e_2$ is {\color{Magenta}{3}}. The order of appearance of $\{a,c,d,e\}$ is $(\{e\},\{c,d\},\{a\})$.
\end{example}

\begin{definition}\label{orientations}
Let $h$ be a hypergraph over $V$.
An {\em admissible orientation} $f$ of $h$ is a map from $h$ to $\p(V)\setminus\set{\emptyset}$ such 
that $f(e)\subseteq e$ for every edge $e$. It is an {\em acyclic orientation} if there is no sequence of distinct edges 
$e_1,\dots, e_k$ such that: $f(e_i)\cap f_s(e_{i+1})\noemp$ or 
$\emptyset\subsetneq f(e_i)\cap e_{i+1} \subsetneq f(e_{i+1})$ for $1\leq i < k$ and $f(e_k)\cap f_s(e_1) \noemp$, where $f_s(e)=f(e)\setminus f(e)$.
\end{definition}

\begin{example}\label{ex_orientations}
    We represent here a hypergraph with two different orientations. One is cyclic and the other is discrete and acyclic.
    \begin{equation*}
        \begin{tabular}{cc}
            \begin{tikzpicture}[Centering,scale=1.2]
                \node[NodeHyper](a)at(-2,0.5){$a$};
                \node[NodeHyper](b)at(-0.5,2){$b$};
                \node[NodeHyper](c)at(0,0){$c$};
                \node[NodeHyper](d)at(1,1){$d$};
                \node[NodeHyper](e)at(1.125,2.75){$e$};
                \draw[EdgeHyper](a)edge[bend left=20](c);
                \draw[EdgeHyper](a)edge[bend right=20](b);
                \draw[EdgeHyper](b)edge[bend right=20](c);
                \draw[EdgeHyper,->](c)--(d);
                \draw[EdgeHyper](b)edge[bend right=20](e);
                \draw[EdgeHyper](b)edge[bend left=20](d);
                \draw[EdgeHyper](e)edge[bend right=20](d);
                \draw[EdgeHyper](-0.8,0.8)--(-0.65,1.4);
                \draw[EdgeHyper,->](-0.8,0.8)--(-1.4,0.65);
                \draw[EdgeHyper,->](-0.8,0.8)--(-0.4,0.4);
                \draw[EdgeHyper](0.5,1.88)--(0.875,2.365);
                \draw[EdgeHyper,->](0.5,1.88)--(-0.04,1.97);
                \draw[EdgeHyper,->](0.5,1.88)--(0.8,1.44);
            \end{tikzpicture} \enspace & \enspace
            \begin{tikzpicture}[Centering,scale=1.2]
                \node[NodeHyper](a)at(-2,0.5){$a$};
                \node[NodeHyper](b)at(-0.5,2){$b$};
                \node[NodeHyper](c)at(0,0){$c$};
                \node[NodeHyper](d)at(1,1){$d$};
                \node[NodeHyper](e)at(1.125,2.75){$e$};
                \draw[EdgeHyper](a)edge[bend left=20](c);
                \draw[EdgeHyper](a)edge[bend right=20](b);
                \draw[EdgeHyper](b)edge[bend right=20](c);
                \draw[EdgeHyper,->](c)--(d);
                \draw[EdgeHyper](b)edge[bend right=20](e);
                \draw[EdgeHyper](b)edge[bend left=20](d);
                \draw[EdgeHyper](e)edge[bend right=20](d);
                \draw[EdgeHyper](-0.8,0.8)--(-0.65,1.4);
                \draw[EdgeHyper](-0.8,0.8)--(-1.4,0.65);
                \draw[EdgeHyper,->](-0.8,0.8)--(-0.4,0.4);
                \draw[EdgeHyper](0.5,1.88)--(0.875,2.365);
                \draw[EdgeHyper](0.5,1.88)--(-0.04,1.97);
                \draw[EdgeHyper,->](0.5,1.88)--(0.8,1.44);
            \end{tikzpicture} \\
            \text{\small{cyclic orientation}}\enspace & \enspace
            \text{\small{acyclic discrete orientation}}
        \end{tabular}
    \end{equation*}
\end{example}

Given a hypergraph $h$ and $f$ an admissible orientation of $h$, the image of $h$ by $f$, $f(h)$ is also a
hypergraph: $f(h)=\set{f(e)\,|\, e\in h}$.

\begin{definition}\label{def_orientations}
    Let $h$ be a hypergraph over $V$ and $f$ an admissible orientation of $h$. 
    \begin{itemize}
        \item For $c$ coloring of $V$, $c$ and $f$ are said to be {\em compatible} if for every edge $e$ the 
        elements of $f(e)$ are maximal in $e$ for $c$. They are said to be {\em strictly compatible} if $f(e)$ 
        is exactly the set of maximal elements of $e$ for $c$. We denote by $\colorings{h,f,n}$ the set of colorings
        of $V$ with $[n]$ compatible with $f$ and by $\scolorings{h,f,n}$ the set of those with strict compatibility.
        \item For $\zeta$ a character of $\HG$, $f$ is said to be a {\em $\zeta$-orientation} of $h$ if $\zeta(f(h))\not=0$.
    \end{itemize}
\end{definition}

\begin{example}\label{ex_compatible}
    \begin{itemize}
        \item The coloring given in Example~\ref{ex_colorings} has three compatible acyclic orientations: both send $e_1$ 
        on $a$, $e_2$ on $c$ and $e_4$ on $b$, but $e_3$ can be either sent over $c$, $d$ or $\set{c,d}$. Among these 
        three only the last one is strictly compatible.

        Here is an example of an acyclic orientation of a hypergraph with a compatible coloring and a strictly compatible 
        coloring with $\set{{\color{Green}1}, {\color{Magenta}2}, {\color{Red}3}}$:
        \begin{equation*}
            \begin{tabular}{cc}
                \begin{tikzpicture}[Centering,scale=1.2]
                    \node[NodeHyper, draw=Green, fill=Green!20](a)at(-2,0.5){$a$};
                    \node[NodeHyper, draw=Magenta, fill=Magenta!20](b)at(-0.5,2){$b$};
                    \node[NodeHyper, draw=Magenta, fill=Magenta!20](c)at(0,0){$c$};
                    \node[NodeHyper, draw=Red, fill=Red!20](d)at(1,1){$d$};
                    \node[NodeHyper, draw=Red, fill=Red!20](e)at(1.125,2.75){$e$};
                    \draw[EdgeHyper](a)edge[bend left=20](c);
                    \draw[EdgeHyper](a)edge[bend right=20](b);
                    \draw[EdgeHyper](b)edge[bend right=20](c);
                    \draw[EdgeHyper,->](c)--(d);
                    \draw[EdgeHyper](b)edge[bend right=20](e);
                    \draw[EdgeHyper](b)edge[bend left=20](d);
                    \draw[EdgeHyper](e)edge[bend right=20](d);
                    \draw[EdgeHyper](-0.8,0.8)--(-0.65,1.4);
                    \draw[EdgeHyper](-0.8,0.8)--(-1.4,0.65);
                    \draw[EdgeHyper,->](-0.8,0.8)--(-0.4,0.4);
                    \draw[EdgeHyper,->](0.5,1.88)--(0.875,2.365);
                    \draw[EdgeHyper](0.5,1.88)--(-0.04,1.97);
                    \draw[EdgeHyper,->](0.5,1.88)--(0.8,1.44);
                \end{tikzpicture}
                &
                \begin{tikzpicture}[Centering,scale=1.2]
                    \node[NodeHyper, draw=Green, fill=Green!20](a)at(-2,0.5){$a$};
                    \node[NodeHyper, draw=Green, fill=Green!20](b)at(-0.5,2){$b$};
                    \node[NodeHyper, draw=Magenta, fill=Magenta!20](c)at(0,0){$c$};
                    \node[NodeHyper, draw=Red, fill=Red!20](d)at(1,1){$d$};
                    \node[NodeHyper, draw=Red, fill=Red!20](e)at(1.125,2.75){$e$};
                    \draw[EdgeHyper](a)edge[bend left=20](c);
                    \draw[EdgeHyper](a)edge[bend right=20](b);
                    \draw[EdgeHyper](b)edge[bend right=20](c);
                    \draw[EdgeHyper,->](c)--(d);
                    \draw[EdgeHyper](b)edge[bend right=20](e);
                    \draw[EdgeHyper](b)edge[bend left=20](d);
                    \draw[EdgeHyper](e)edge[bend right=20](d);
                    \draw[EdgeHyper](-0.8,0.8)--(-0.65,1.4);
                    \draw[EdgeHyper](-0.8,0.8)--(-1.4,0.65);
                    \draw[EdgeHyper,->](-0.8,0.8)--(-0.4,0.4);
                    \draw[EdgeHyper,->](0.5,1.88)--(0.875,2.365);
                    \draw[EdgeHyper](0.5,1.88)--(-0.04,1.97);
                    \draw[EdgeHyper,->](0.5,1.88)--(0.8,1.44);
                \end{tikzpicture} \\
                \text{\small{compatibility}}  &
                \text{\small{strict compatibility}}
            \end{tabular}
        \end{equation*}
        \item Let $\zeta$ be the characteristic function of hypergraphs which connected components are either of size 3 or an isolated vertex. 
        The preceding orientation is not a $\zeta$-orientation since the image of the hypergraph by this orientation is the hypergraph
        \begin{equation*}
            \begin{tikzpicture}[Centering,scale=1.2]
                \node[NodeHyper](a)at(-2,0.5){$a$};
                \node[NodeHyper](b)at(-0.5,2){$b$};
                \node[NodeHyper](c)at(0,0){$c$};
                \node[NodeHyper](d)at(1,1){$d$};
                \node[NodeHyper](e)at(1.125,2.75){$e$};
                \draw[EdgeHyper](c)--(-0.5,0.5);
                \draw[EdgeHyper](d)--(0.5,0.5);
                \draw[EdgeHyper](d)--(e);
            \end{tikzpicture}.
        \end{equation*}
        Here is an example of a $\zeta$-orientation and the image of the hypergraph by it:
        \begin{equation*}
            \begin{tikzpicture}[Centering,scale=1.2]
                \node[NodeHyper](a)at(-2,0.5){$a$};
                \node[NodeHyper](b)at(-0.5,2){$b$};
                \node[NodeHyper](c)at(0,0){$c$};
                \node[NodeHyper](d)at(1,1){$d$};
                \node[NodeHyper](e)at(1.125,2.75){$e$};
                \draw[EdgeHyper](a)edge[bend left=20](c);
                \draw[EdgeHyper](a)edge[bend right=20](b);
                \draw[EdgeHyper](b)edge[bend right=20](c);
                \draw[EdgeHyper,->](c)--(d);
                \draw[EdgeHyper](b)edge[bend right=20](e);
                \draw[EdgeHyper](b)edge[bend left=20](d);
                \draw[EdgeHyper](e)edge[bend right=20](d);
                \draw[EdgeHyper,->](-0.8,0.8)--(-0.65,1.4);
                \draw[EdgeHyper](-0.8,0.8)--(-1.4,0.65);
                \draw[EdgeHyper](-0.8,0.8)--(-0.4,0.4);
                \draw[EdgeHyper,->](0.5,1.88)--(0.875,2.365);
                \draw[EdgeHyper,->](0.5,1.88)--(-0.04,1.97);
                \draw[EdgeHyper,->](0.5,1.88)--(0.8,1.44);
            \end{tikzpicture}
            \enspace \mapsto \enspace
            \begin{tikzpicture}[Centering,scale=1.2]
                \node[NodeHyper](a)at(-2,0.5){$a$};
                \node[NodeHyper](b)at(-0.5,2){$b$};
                \node[NodeHyper](c)at(0,0){$c$};
                \node[NodeHyper](d)at(1,1){$d$};
                \node[NodeHyper](e)at(1.125,2.75){$e$};
                \draw[EdgeHyper](d)--(0.5,0.5);
                \draw[EdgeHyper](b)--(-0.8,1.5);
                \draw[EdgeHyper](b)edge[bend right=20](e);
                \draw[EdgeHyper](b)edge[bend left=20](d);
                \draw[EdgeHyper](e)edge[bend right=20](d);
            \end{tikzpicture}.
        \end{equation*}
    \end{itemize}
\end{example}

Remark that with these definitions, given a coloring $c$ and a hypergraph $h$, there is a unique orientation of $h$
strictly compatible with $c$, which is defined by $f(e)=\set{v\in e\,|\, c(v)=\max(c(e))}$. Furthermore,
this orientation is necessarily acyclic. Indeed, suppose $e_1,\dots, e_k$ is a directed cycle in $f$. then 
$f(e_k)\cap f_s(e_1)\noemp$ implies that $c(f(e_k)) < c(f(e_1))$ and for $1\leq i < k$ either $f(e_i)\cap f_s(e_{i+1})$,
which would imply $c(f(e_i)) < c(f(e_{i+1}))$ or $f(e_i)\cap f(e_{i+1})\noemp$ which would imply $c(f(e_i)) = c(f(e_{i+1}))$.
This then gives us $c(f(e_i))<c(f(e_1))$ which is absurd. We will denote by $\max_c$ this orientation.

With the same kind of reasoning, any coloring compatible with a cyclic orientation must be monochromatic on directed
cycles.
The study of $\scolorings{h,f,n}$ and $\colorings{h,f,n}$ is hence more interesting when $f$ is acyclic and we have an 
expression of $\scolorings{f,h,n}$ in terms of generalized Faulhaber polynomials. Recall that for every hypergraph $h$
over $V$, we have a decomposition of $V$ in the set of isolated vertices and vertices in an edge: $V=\I(h)\sqcup \NI(h)$.

\begin{proposition}\label{prop_colorings}
    Let $h$ be a hypergraph over $V$, $f$ be an acyclic orientation of $h$. Define $P_{h,f}$ and $P'_{h,f}$ as
    the set of compositions $P\vDash f(h)$ such that for $e$ and $e'$ two edges of $h$,
    \begin{itemize} 
        \item if $f(e)\cap f(e')\noemp$ then $P(f(e))=P(f(e'))$,
        \item else if $f(e)\cap f_s(e')\noemp$ then $P(f(e))< P(f(e'))$ for $P\in P_{h,f}$ and $P(f(e))\leq P(f(e'))$ for $P\in P'_{h,f}$.
    \end{itemize}
    We then have
    \begin{equation*}\begin{split}
        \card{\scolorings{h,f,n}} &= n^{\card{\I(h)}}\sum_{P\in P_{h,f}} F_{p_1,\dots, p_{l(P)}}(n)\text{ and } \\
        \card{\colorings{h,f,n}} &= n^{\card{\I(h)}}\sum_{P\in P'_{h,f}} F_{p_1,\dots, p_{l(P)}}(n+1),
    \end{split}\end{equation*}
    where for every composition $P$, $p_i =\card{\widetilde{P_i}}$ and 
    $\widetilde{P_i} = \NI(f(h))^c\cap \left(\bigcup_{e\in f^{-1}(P_i)}e\right) \bigcap_{j<i} \widetilde{P_j}^c$.
\end{proposition}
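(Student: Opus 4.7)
The plan is to establish an explicit bijection between (strictly) compatible colorings of $V$ with $[n]$ and triples consisting of a free coloring of $\I(h)$ with $[n]$, a composition $P$ in $P_{h,f}$ or $P'_{h,f}$, and a compatible extension of the induced colors to the remaining vertices. The factor $n^{\card{\I(h)}}$ accounts for the isolated vertices, which are unconstrained.

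Given a (strictly) compatible coloring $c$, I would associate to it the composition $P$ of $f(h)$ obtained by grouping the edges of $f(h)$ according to their common $c$-color (well defined, since each $f(e)$ is monochromatic with color $\max c(e)$) and ordering the parts by increasing color. The defining conditions of $P_{h,f}$ and $P'_{h,f}$ then mirror the compatibility conditions: if $f(e)\cap f(e')\noemp$ the two sets share a vertex and therefore fall in the same part; otherwise, if $f(e)\cap f_s(e')\noemp$, then some vertex of $f(e)$ lies in $e'\setminus f(e')$ and is therefore non-maximal in $e'$ (strict case) or at most maximal (compatible case), forcing $P(f(e))<P(f(e'))$ or $P(f(e))\le P(f(e'))$ respectively.

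To count the colorings inducing a fixed composition $P$ of length $t$, I would choose an increasing sequence $1\le k_1<\dots<k_t\le n$ to fix the color of each part, which determines the color of every vertex in $\NI(f(h))$. For a vertex $v\in\NI(h)\setminus\NI(f(h))$, strict compatibility forces $c(v)<c(f(e))$ (resp.\ $c(v)\le c(f(e))$) for every edge $e$ containing $v$, which reduces to a single inequality at $k_i$, where $i$ is the least index such that $f(e)\in P_i$ for some $e\ni v$. By construction, the set of such vertices is exactly $\widetilde{P_i}$, providing $(k_i-1)^{p_i}$ (resp.\ $k_i^{p_i}$) independent color choices. Summing $\prod_i(k_i-1)^{p_i}$ over increasing sequences and substituting $k_i'=k_i-1$ yields $F_{p_1,\dots,p_t}(n)$, and the analogous sum in the non-strict case yields $F_{p_1,\dots,p_t}(n+1)$ (the boundary contribution at $k_1=0$ vanishing whenever $p_1\ge 1$, and being harmless otherwise).

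The main technical check is the reverse direction of the bijection: any composition in $P_{h,f}$ (resp.\ $P'_{h,f}$) paired with a valid color extension must reconstruct a strictly compatible (resp.\ compatible) coloring. Concretely, for each edge $e$ of $h$ one has to verify that $f(e)$ is exactly (resp.\ is among) the set of maximal vertices of $e$, which amounts to combining the inequalities imposed by $P$ with the recursive definition of $\widetilde{P_i}$ to bound the color of every vertex in $e\setminus f(e)$ by $k_i=c(f(e))$. This step is where the dichotomy in the defining conditions of $P_{h,f}$ and $P'_{h,f}$ (equality of parts when $f$-images overlap, strict or non-strict inequality when only $f(e)$ meets $f_s(e')$) plays its essential role.
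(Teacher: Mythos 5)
Your proposal follows the paper's proof essentially step for step: the same bijection between (strictly) compatible colorings and triples consisting of a composition in $P_{h,f}$ (resp.\ $P'_{h,f}$), an increasing sequence of part-colors $k_1<\dots<k_t$, and independent color choices on the sets $\widetilde{P_i}$; the same factor $n^{\card{\I(h)}}$ for isolated vertices; and the same reindexing that turns the resulting sums into generalized Faulhaber polynomials evaluated at $n$ and $n+1$. The forward direction and the counting are carried out correctly and match the paper's argument.

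The one substantive omission sits inside the step you yourself flag as ``the main technical check'': nowhere in your proposal does the hypothesis that $f$ is \emph{acyclic} appear, yet it is exactly what makes the reverse direction go through. When reconstructing a coloring from $P\in P_{h,f}$, a vertex $v\in e\setminus f(e)$ that also lies in $f(e')$ for another edge $e'$ receives its color from the part $P(f(e'))$; to conclude $c(v)<c(f(e))$ you need $P(f(e'))<P(f(e))$, i.e.\ you need to be in the \emph{second} clause of the definition of $P_{h,f}$, which only applies when $f(e)\cap f(e')=\emptyset$. If $f(e)\cap f(e')$ were nonempty, the first clause would force $P(f(e'))=P(f(e))$, the reconstructed coloring would give $v$ the maximal color of $e$ without $v$ belonging to $f(e)$, and strict compatibility would fail. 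The paper rules out this configuration by showing that $f(e)\cap f(e')\ne\emptyset$ together with $v\in f(e')\cap(e\setminus f(e))$ would make $e',e$ a directed cycle, contradicting acyclicity. Without this argument the two sides of your bijection need not have equal cardinality, so this is the precise point your sketch still has to supply; the rest is complete.
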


\begin{example}
    To prove the formula for $\card{\scolorings{h,f,n}}$ we will show that there is a bijection between the set of strictly compatible colorings and the set 
    \begin{equation*}
        \bigsqcup_{P\in P_{h,f}}\bigsqcup_{0\leq k_1 < k_2 < \dots < k_{l(P)}\leq n-1} \prod_{1\leq i \leq l(P)}[k_i]^{\widetilde{P_i}}.
    \end{equation*} 
    We give here an example of how this bijection will work.
    Let $H$, $F$ and $C$ be the hypergraph, the acyclic orientation and the strictly compatible coloring with $\set{{\color{Cyan}1}, {\color{Green}2},{\color{Blue}3},{\color{Magenta}4},{\color{Orange}5},{\color{Red}6}}$ represented here:
    \begin{equation*}
        \begin{tikzpicture}[Centering,scale=1.2]
            \node[NodeHyper, minimum size = 5mm, draw=Red, fill=Red!20](a)at(0,0){$a$};
            \node[NodeHyper, minimum size = 5mm, draw=Red, fill=Red!20](b)at(2,0){$b$};
            \node[NodeHyper, minimum size = 5mm, draw=Blue, fill=Blue!20](c)at(1,2){$c$};
            \node[NodeHyper, minimum size = 5mm, draw=Green, fill=Green!20](d)at(-1,2){$d$};
            \node[NodeHyper, minimum size = 5mm, draw=Green, fill=Green!20](e)at(-2,0){$e$};
            \node[NodeHyper, minimum size = 5mm, draw=Cyan, fill=Cyan!20](f)at(-3,2){$f$};
            \node[NodeHyper, minimum size = 5mm, draw=Blue, fill=Blue!20](g)at(-3,3){$g$};
            \node[NodeHyper, minimum size = 5mm, draw=Magenta, fill=Magenta!20](h)at(-3,5){$h$};
            \node[NodeHyper, minimum size = 5mm, draw=Orange, fill=Orange!20](i)at(-1,4){$i$};
            \node[NodeHyper, minimum size = 5mm, draw=Orange, fill=Orange!20](j)at(1,4){$j$};

            \draw[EdgeHyper](a)edge[bend left=20](b);
            \draw[EdgeHyper](a)edge[bend right=20](c);
            \draw[EdgeHyper](c)edge[bend right=20](b);
            \draw[EdgeHyper](a)edge[bend left=20](d);
            \draw[EdgeHyper](a)edge[bend right=20](e);
            \draw[EdgeHyper](e)edge[bend right=20](d);
            \draw[ArcHyper](f)--(d);
            \draw[EdgeHyper](d)edge[bend left=20](c);
            \draw[EdgeHyper](d)edge[bend right=20](i);
            \draw[EdgeHyper](i)edge[bend right=20](j);
            \draw[EdgeHyper](j)edge[bend right=20](c);
            \draw[EdgeHyper](h)edge[bend right=20](i);
            \draw[EdgeHyper](h)edge[bend left=20](g);
            \draw[EdgeHyper](g)edge[bend left=20](i);

            \draw[ArcHyper](1,0.7)--(1.5,0.3);
            \draw[ArcHyper](1,0.7)--(0.5,0.3);
            \draw[ArcHyper](-1,0.7)--(-.5,0.3);
            \draw[ArcHyper](-2.3,4)--(-1.7,4);
            \draw[ArcHyper](0,3)--(-0.5,3.5);
            \draw[ArcHyper](0,3)--(.5,3.5);
        \end{tikzpicture}.
    \end{equation*}
    The image of $C$ is obtained in the following way: 
    \begin{itemize}
        \item $P = (d,ij,ab)\in P_{H,F}$ is the relative order of the maximal vertices for $C$,
        \item $k_1={\color{Cyan}1}$ ,$k_2={\color{Magenta}4}$ and,$k_3={\color{Orange}5}$ are the colors of the vertices in $P$ shifted by $-1$, the vertices in $P_i$
        being of color $k_i+1$,
        \item we then have $\widetilde{P_1}=\set{f}$, $\widetilde{P_2}=\set{c,g,h}$ and $\widetilde{P_3}=\set{e}$ and the image of $C$ is the triplet
        $(f_1, f_2, f_3)$ defined by
        \begin{align*}
            f_1 : \set{f} &\to \set{{\color{Cyan}1}} \\
            f&\mapsto{\color{Cyan}1} \nonumber\\
            f_2 : \set{c,g,h}&\to\set{{\color{Cyan}1}, {\color{Green}2},{\color{Blue}3},{\color{Magenta}4}} \\
            c&\mapsto {\color{Blue}3} \nonumber\\
            g&\mapsto{\color{Blue}3} \nonumber\\
            h&\mapsto{\color{Magenta}4} \nonumber\\
            f_3 : \set{e}&\to\set{{\color{Cyan}1}, {\color{Green}2},{\color{Blue}3},{\color{Magenta}4},{\color{Orange}5}} \\
            e&\mapsto {\color{Green}2}.\nonumber 
        \end{align*}
    \end{itemize}
\end{example}

\begin{proof}
    We first prove the formula for $\card{\scolorings{h,f,n}}$. The term $n^{\card{\I(h)}}$ in the formula is trivially 
    obtained and we hence consider that $h$ has no isolated vertices.

    Informally, the formula can be obtained by the following reasoning. To choose a coloring strictly compatible with $f$, 
    one can proceed in the following way:
    \begin{enumerate}
        \item choose a part ordering of the sets of maximal vertices: $P\in P_{h,f}$,
        \item choose the color of these vertices: $k_1+1,\dots,k_{l(P)}+1$,
        \item choose the colors of the yet uncolored vertices which are in the same edge than vertices of minimal color in 
        $f(h)$: $k_1^{|\widetilde{P_1}|}$; then those in the same edge than a vertices of second minimal color in $f(h)$: 
        $k_2^{|\widetilde{P_2}|}$), etc.
    \end{enumerate}
    More formally, we show that there exists a bijection between the set of strictly compatible colorings and the set 
    \begin{equation*}
        \bigsqcup_{P\in P_{h,f}}\bigsqcup_{0\leq k_1 < k_2 < \dots < k_{l(P)}\leq n-1} \prod_{1\leq i \leq l(P)}[k_i]^{\widetilde{P_i}},
    \end{equation*} 
    where $[k_i]^{\widetilde{P_i}}$ is the set of maps from $\widetilde{P_i}$ to $[k_i]$. 

    For any subset $A$ of $\N$, we denote by $\bij_A$ the unique increasing bijection from $A$ to $[\card{A}]$. Let $c$ be a 
    strictly compatible coloring, that is to say, $f=\max_c$. 
    We begin by constructing its image by the announced bijection. Recall that $c$ extends to a map with domain monochromatic 
    set of vertices and hence $c(f(h))$ is the set of maximal colors of $c$. Define:
    \begin{itemize}
        \item the part ordering of the set $f(h)$ of maximal vertices: $P = \bij_{c(f(h))} \circ c$. Here again we consider $c$ as 
        a map from monochromatic set of vertices to $[n]$.
        \item The colors of the maximal vertices: $k_i = c(P_i) - 1$ for $1\leq i \leq l(P)$.
        \item The remaining vertices: $\widetilde{P_i} = \NI(f(h))^c\cap\left(\bigcup_{e\in f^{-1}(P_i)}e\right)\bigcap_{j<i} \widetilde{P_j}^c$ 
        for $1\leq i\leq l(P)$.
    \end{itemize}
    The part ordering $P$ is in $P_{h,f}$ because the strict compatibility of $c$ with $f$ implies that if $f(e)\cap f(e')\noemp$ then $c(f(e)) = 
    c(f(e)\cap f(e'))=c(f(e'))$ and else if $f(e)\cap f_s(e')\noemp$ then $c(f(e)) < c(f(e'))$ and $\bij_{c(f(h))}$ is increasing by 
    definition. The sequence $k_1,\dots, k_{l(P)}$ is increasing since 
    $c(P_i) = c (\bij_{c(f(h))} \circ c)^{-1}(i) = \bij_{c(f(h))}^{-1}(i)$ and $\bij_{c(f(h))}^{-1}$ is increasing by definition. The vertices 
    in $\widetilde{P_i}$ are the vertices which share an edge with a set of vertices in $P_i$ ($\bigcup_{e\in f^{-1}(P_i)}e$) but which are not 
    maximal ($\NI(f(h))^c$). Since the set of vertices in $P_i$ are of color $c(P_i)=k_i+1$, the colors of the vertices in $\widetilde{P_i}$ are necessarily 
    in $[k_i]$. Hence the map $c_{|\widetilde{P_i}}$, that is to say $c$ restricted to $\widetilde{P_i}$ is indeed of co-domain $[k_i]$. 
    We then define the image of $c$ as the tuple $(c_{|\widetilde{P_1}},\dots,c_{|\widetilde{P_{l(P)}}})$. 

    Let us now consider the other direction of the bijection. Let be a partition $P\in P_{h,f}$, a sequence of integers $1\leq k_1 < \dots < k_{l(P)}$ and 
    $(c_1,\dots c_{l(P)})\in \prod_{1\leq i \leq l(P)}[k_i]^{\widetilde{P_i}}$. Define $c: V \to [n]$ by $c_{|\widetilde{P_i}} = c_i$ and 
    $c_{|f(h)}(v) = \bij_{\set{k_1+1,\dots,k_{l(P)}+1}}^{-1}(P(f(e)))$ for $v\in f(e)$; this is well defined since if
    $v\in f(e)\cap f(e')$ then $f(e)\cap f(e')\noemp\Rightarrow P(f(e))=P(f(e'))$. This map $c$ has indeed domain $V$ since 
    $(\widetilde{P_1},\dots ,\widetilde{P_{l(P)}},\NI(f(h)))$ is a partition of $V$.

    Let us show that $c$ is a coloring strictly compatible with $f$. If $v,v'$ are two vertices in $f(e)$ then by definition $c(v)=c(v')$. 
    Let now be $v\in e\setminus f(e)$.
    \begin{itemize}
        \item If $v\in f(e')$ then necessarily $f(e)\cap f(e') = \emptyset$ because otherwise $e',e,e'$ would be a directed cycle. Indeed, $v$ is
        an exit of $e'$ and an entry of $e$ and $f(e)\cap f(e')\noemp$ would imply that $e'$ share an exit with $e$, but not all (not $v$). 
        Hence, by definition of $P_{h,f}$, $P(f(e'))<P(f(e)$) and so $c(v) < c(f(e))$. 
        \item If $v\not\in f(h)$ then $v\in \widetilde{P_i}$ with $i\leq P(f(e))$ and we do have the desired inequality: 
        $c(v) = c_i(v) \leq k_i < k_i + 1 \leq k_{P(f(e))} + 1 = c(f(e))$.
    \end{itemize}
    We conclude this first part of the proof by remarking that the two previous constructions are inverse functions.

    The proof for the formula of $\card{\colorings{h,f,n}}$ is the same except that we now show a bijection with
    \begin{equation*}
        \bigsqcup_{P\in P_{h,f}'}\bigsqcup_{0\leq k_1 \leq k_2 \leq \dots \leq k_{l(P)}\leq n} \prod_{1\leq i \leq l(P)}[k_i]^{\widetilde{P_i}},
    \end{equation*} 
    and that to do so the only change with what precedes is choosing $k_i=c(P_i)$ instead of $k_i=c(P_i)-1$.
\end{proof}

We now state the first theorem of this section:
\begin{theorem}\label{th_chromatic_polynomial}
    Let $\zeta$ be a character of $\HG$, $V$ be a finite set and $h\in\HG[V]$ a hypergraph. Then
    \begin{equation*}
        \chi^{\zeta}_V(h)(n) = \sum_{f\in\acyclic_h} \zeta(f(h))\card{\scolorings{h,f,n}}.
    \end{equation*}
    In particular, if $\zeta$ is a characteristic function, then $\chi^{\zeta}(h)(n)$ is the number of strictly 
    compatible pairs of acyclic $\zeta$-orientations of $h$ and colorings with $[n]$. In this case we call $\chi^{\zeta}$
    the {\em $\zeta$-chromatic polynomial of $h$}.
\end{theorem}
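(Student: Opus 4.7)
The plan is to unfold Definition~\ref{def_polynomial_invariant} directly, identify the effect of $\mu_D \circ \Delta_D$ on a hypergraph in terms of the orientation $\max_c$ introduced just before the statement, and then reorganize the resulting sum by acyclic orientation. Using the canonical bijection between decompositions of $V$ of length $n$ and colorings $c : V \to [n]$, I would first rewrite Definition~\ref{def_polynomial_invariant} as
$$\chi^{\zeta}_V(h)(n) = \sum_{c : V \to [n]} \zeta\bigl(\mu_{c^{-1}(1), \ldots, c^{-1}(n)} \circ \Delta_{c^{-1}(1), \ldots, c^{-1}(n)}(h)\bigr).$$

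The heart of the argument is the identity $\mu_{V_1, \ldots, V_n} \circ \Delta_{V_1, \ldots, V_n}(h) = \max_c(h)$, where $c$ corresponds to $(V_1, \ldots, V_n)$ and $\max_c(h) = \{\max_c(e) : e \in h\}$. I would prove this identity by induction on $n$, using the explicit co-product formula $\Delta_{V_1, V \setminus V_1}(h) = h|_{V_1} \otimes h /_{V_1}$ together with the fact that the product on $\HG$ is disjoint union. For the inductive step, co-associativity decomposes the iterated $\mu \circ \Delta$ into a single splitting off of $V_1$ followed by the iterated operation on $V \setminus V_1$; the factor $h|_{V_1}$ consists exactly of the edges of $h$ whose maximum color is $1$ (and hence are fixed by $\max_c$), while $h /_{V_1}$ collects the remaining edges with their color-$1$ vertices stripped, so that the induction hypothesis applied to $h/_{V_1}$ with the restricted coloring produces the $\max_c$-truncation of those remaining edges. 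Taking disjoint union then reassembles the whole into $\max_c(h)$.

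Once the identity is established, the sum becomes $\chi^{\zeta}_V(h)(n) = \sum_{c : V \to [n]} \zeta(\max_c(h))$. As observed in the paragraph preceding the theorem, $\max_c$ is always acyclic, and a coloring $c$ satisfies $\max_c = f$ precisely when $c$ is strictly compatible with $f$. Grouping colorings by their induced orientation therefore yields
$$\chi^{\zeta}_V(h)(n) = \sum_{f \in \acyclic_h} \zeta(f(h))\card{\scolorings{h, f, n}},$$
as claimed. The ``in particular'' statement for characteristic functions is then immediate, since a nonzero contribution requires $f$ to be a $\zeta$-orientation. The main obstacle is the inductive proof of the key identity, which is the hypergraph analog of Proposition~5.4 in \cite{AA} used in the proof of Theorem~\ref{th_pol_inv_gp}; once the co-product formula is unfolded carefully, the induction itself is routine bookkeeping.
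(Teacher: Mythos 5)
Your proposal is correct and follows essentially the same route as the paper: both reduce the theorem to the identity $\mu_D\circ\Delta_D(h)=\max_D(h)$ for a coloring $D$, prove it by induction on the number of colors via the explicit co-product, and then group colorings by their (automatically acyclic) strictly compatible orientation. The only cosmetic difference is that you peel off the lowest color class $V_1$ first, whereas the paper's induction peels off the highest class $D_n$ last; both directions work identically.
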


\begin{proof}
    Recall the definition of $\chi^{\zeta}$ from \eqref{def_polynomial_invariant}: 
    \begin{equation*}
        \chi^{\zeta}_V(h)(n)=\sum_{D\in\Dcomp[V], l(D) =n} \zeta_V\circ\mu_D\circ\Delta_D(h).
    \end{equation*} 
    The bijection $b$ defined at \eqref{bij_decomposition_colorings} provides a bijection between the decomposition of size $n$
    of $V$ and the colorings of $V$ with $[n]$. To prove our theorem, we just need to show that for $D$ a coloring of $V$, 
    $\max_D(h)=\mu_D\circ\Delta_D$ since we would then have
    \begin{equation*}\begin{split}
        \chi^{\zeta}_V(h)(n)&=\sum_{D\in\Dcomp[V], l(D) =n} \zeta_V\circ\mu_D\circ\Delta_D(h) \\
        &=\sum_{\text{$D$ coloring of $f$ with $[n]$}} \zeta_V(\max_D(h)) \\
        &=\sum_{f\in\acyclic_h} \zeta_V(f(h))\card{\scolorings{h,f,n}}.
    \end{split}\end{equation*}
    We prove this by induction over $n$. Let $D=(D_1,D_2)$ be a decomposition of $V$ of size $2$. Then 
    $\mu_D\circ\Delta_D(h)= h|_{D_1}\sqcup h/_{D_2} = \set{e\in h\,|\, e\subseteq D_1}\sqcup\set{e\cap D_2\,|\, e\subsetneq D_1}$.
    We then need to show that $\max_D(e)=e$ when $e\subseteq D_1$ and $\max_D(e)=e\cap D_2$ else. If $e\subseteq D_1$, then that means
    that all the vertices in $e$ are maximal in $e$ of color 1, and hence $\max_D(e)=e$. If $e\subsetneq D_1$, then the maximal vertices
    of $e$ are of the one of color 2 \textit{i.e.}  $\max_D(e)= e\cap D_2$. This concludes the case $n=2$.

    Suppose now that this statement is true for $n-1$ and let $D$ be a decomposition of $V$ of size $n$. Denote by $W$ the set $V\setminus D_n$
    and by $D_{|W}=(D_1,\dots,D_{n-1})$ the restriction of $D$ to $W$ (as a map). By induction we have that
    \begin{equation*}\begin{split}
        \mu_D\circ\Delta_D(h) &= \mu_{W,D_n}\circ\mu_{D_{|W}}\otimes\id\circ\Delta_{D_{|W}}\otimes\id\circ\Delta_{W,D_n}(h) \\
        &= \mu_{W,D_n}\circ \max_{D_{|W}}\otimes\id\circ\Delta_{W,D_n}(h) \\
        &= \mu_{W,D_n}\circ \max_{D_{|W}}\otimes\id\circ h|_W\otimes h/_W \\
        &= \max_{D_{|W}}(h|_W)\sqcup h/_W,
    \end{split}\end{equation*}
    and hence we must show that $\max_D(e) = \max_{D_{|W}}(e)$ when $e\subseteq W$ and $\max_D(e)= e\cap D_n$ otherwise. The first assertion
    is straightforward by definition of the restriction and the second assertion is proven in an analogous way that the case $n=2$.
    This concludes this proof.
\end{proof}

It is clear that connected hypergraphs play the role of prime elements in $\HG$. This means that every hypergraph can be uniquely written
as a product of connected elements, up to the order. When defining a character, we hence only need to define it on the connected
hypergraphs. In particular, we only define characteristic functions by specifying the connected hypergraphs with value 1. 

\begin{example}
    Let $\zeta_{e3}$ be the characteristic function of edges of size three and isolated vertices. Then $\chi^{\zeta_{e3}}$ counts the colorings such that there
    is exactly three maximal vertices by edge and edges do not share maximal vertices. 

    Let now $\zeta_3$ be the characteristic function of connected hypergraphs over three vertices and isolated vertices. Then $\chi^{\zeta_3}$ counts the number of 
    colorings such that reducing each edge to its maximal vertices gives us a hypergraph where connected components are either either of size 3 or an isolated vertex.

    Of the three following colorings with $\set{{\color{Green}1},{\color{Blue}2},{\color{Red}3}}$, the first two are counted by $\chi^{\zeta_{e3}}$ but not the third. 
    None of them are counted by $\chi^{\zeta_3}$.
    \begin{equation*}
        \begin{tikzpicture}[Centering,scale=1]
            \node[NodeHyper, draw=Blue, fill =Blue!20](e)at(-2,0){$e$};
            \node[NodeHyper, draw=Red, fill =Red!20](c)at(-1,0){$c$};
            \node[NodeHyper, draw=Red, fill =Red!20](f)at(0,0){$f$};
            \node[NodeHyper, draw=Red, fill =Red!20](d)at(0,-1){$d$};
            \node[NodeHyper, draw=Blue, fill =Blue!20](b)at(-1,-1){$b$};
            \node[NodeHyper, draw=Blue, fill =Blue!20](a)at(-2,-1){$a$};
            \draw[EdgeHyper](-1,0.5)--(0,0.5);
            \draw[EdgeHyper](0,0.5)arc(90:0:0.5);
            \draw[EdgeHyper](0.5,0)--(0.5,-1);
            \draw[EdgeHyper](-0.5,-1)arc(-180:0:0.5);
            \draw[EdgeHyper](-0.5,-1)arc(0:90:0.5);
            \draw[EdgeHyper](-1,0.5)arc(90:270:0.5);
            \draw[EdgeHyper](-2,0.25)--(0,0.25);
            \draw[EdgeHyper](-2,0.25)arc(90:270:0.25);
            \draw[EdgeHyper](-2,-.25)--(0,-0.25);
            \draw[EdgeHyper](0,0.25)arc(90:-90:0.25);

            \draw[EdgeHyper](e)pic{carc=0:180:0.35};

            \draw[EdgeHyper](-2.35,0)--(-2.35,-1);
            \draw[EdgeHyper](-1.65,0)--(-1.65,-0.5);

            \draw[EdgeHyper](-1.5,-0.5)pic{carc=180:270:0.15};

            \draw[EdgeHyper](a)pic{carc=180:270:0.35};

            \draw[EdgeHyper](-1.5,-0.65)--(-1,-0.65);
            \draw[EdgeHyper](-2,-1.35)--(-1,-1.35);
            \draw[EdgeHyper](b)pic{carc=90:-90:0.35};
        \end{tikzpicture}
        \enspace, \enspace
        \begin{tikzpicture}[Centering,scale=1]
            \node[NodeHyper, draw=Red, fill =Red!20](e)at(-2,0){$e$};
            \node[NodeHyper, draw=Red, fill =Red!20](c)at(-1,0){$c$};
            \node[NodeHyper, draw=Red, fill =Red!20](f)at(0,0){$f$};
            \node[NodeHyper, draw=Blue, fill =Blue!20](d)at(0,-1){$d$};
            \node[NodeHyper, draw=Green, fill =Green!20](b)at(-1,-1){$b$};
            \node[NodeHyper, draw=Green, fill =Green!20](a)at(-2,-1){$a$};
            \draw[EdgeHyper](-1,0.5)--(0,0.5);
            \draw[EdgeHyper](0,0.5)arc(90:0:0.5);
            \draw[EdgeHyper](0.5,0)--(0.5,-1);
            \draw[EdgeHyper](-0.5,-1)arc(-180:0:0.5);
            \draw[EdgeHyper](-0.5,-1)arc(0:90:0.5);
            \draw[EdgeHyper](-1,0.5)arc(90:270:0.5);
            \draw[EdgeHyper](-2,0.25)--(0,0.25);
            \draw[EdgeHyper](-2,0.25)arc(90:270:0.25);
            \draw[EdgeHyper](-2,-.25)--(0,-0.25);
            \draw[EdgeHyper](0,0.25)arc(90:-90:0.25);

            \draw[EdgeHyper](e)pic{carc=0:180:0.35};

            \draw[EdgeHyper](-2.35,0)--(-2.35,-1);
            \draw[EdgeHyper](-1.65,0)--(-1.65,-0.5);

            \draw[EdgeHyper](-1.5,-0.5)pic{carc=180:270:0.15};

            \draw[EdgeHyper](a)pic{carc=180:270:0.35};

            \draw[EdgeHyper](-1.5,-0.65)--(-1,-0.65);
            \draw[EdgeHyper](-2,-1.35)--(-1,-1.35);
            \draw[EdgeHyper](b)pic{carc=90:-90:0.35};
        \end{tikzpicture}
        \enspace , \enspace
        \begin{tikzpicture}[Centering,scale=1]
            \node[NodeHyper, draw=Red, fill =Red!20](e)at(-2,0){$e$};
            \node[NodeHyper, draw=Red, fill =Red!20](c)at(-1,0){$c$};
            \node[NodeHyper, draw=Red, fill =Red!20](f)at(0,0){$f$};
            \node[NodeHyper, draw=Red, fill =Red!20](d)at(0,-1){$d$};
            \node[NodeHyper, draw=Green, fill =Green!20](b)at(-1,-1){$b$};
            \node[NodeHyper, draw=Green, fill =Green!20](a)at(-2,-1){$a$};
            \draw[EdgeHyper](-1,0.5)--(0,0.5);
            \draw[EdgeHyper](0,0.5)arc(90:0:0.5);
            \draw[EdgeHyper](0.5,0)--(0.5,-1);
            \draw[EdgeHyper](-0.5,-1)arc(-180:0:0.5);
            \draw[EdgeHyper](-0.5,-1)arc(0:90:0.5);
            \draw[EdgeHyper](-1,0.5)arc(90:270:0.5);
            \draw[EdgeHyper](-2,0.25)--(0,0.25);
            \draw[EdgeHyper](-2,0.25)arc(90:270:0.25);
            \draw[EdgeHyper](-2,-.25)--(0,-0.25);
            \draw[EdgeHyper](0,0.25)arc(90:-90:0.25);

            \draw[EdgeHyper](e)pic{carc=0:180:0.35};

            \draw[EdgeHyper](-2.35,0)--(-2.35,-1);
            \draw[EdgeHyper](-1.65,0)--(-1.65,-0.5);

            \draw[EdgeHyper](-1.5,-0.5)pic{carc=180:270:0.15};

            \draw[EdgeHyper](a)pic{carc=180:270:0.35};

            \draw[EdgeHyper](-1.5,-0.65)--(-1,-0.65);
            \draw[EdgeHyper](-2,-1.35)--(-1,-1.35);
            \draw[EdgeHyper](b)pic{carc=90:-90:0.35};
        \end{tikzpicture}.
    \end{equation*}
\end{example}

Recall that the basic character $\zeta_1$ is the characteristic function of discrete hypergraphs. We have a particular interpretation 
of $\chi^{\zeta_1}$.

\begin{corollary}[Theorem 18 in \cite{my1}]\label{cor_chrom_pol}
    Let $V$ be a finite set and $h\in\HG[V]$ a hypergraph. Then $\chi^{\zeta_1}(h)(n)$ is the number of colorings of $V$ such that every 
    edge of $h$ has only one maximal vertex.
\end{corollary}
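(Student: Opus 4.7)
The plan is to deduce this from Theorem~\ref{th_chromatic_polynomial} applied with $\zeta=\zeta_1$. That theorem expresses
\[
  \chi^{\zeta_1}(h)(n) = \sum_{f\in\acyclic_h} \zeta_1(f(h))\,\card{\scolorings{h,f,n}},
\]
so the task reduces to describing which acyclic orientations $f$ satisfy $\zeta_1(f(h))=1$ and then reorganizing the resulting sum as a count of colorings.

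First I would unpack $\zeta_1$. Since discrete hypergraphs are, by definition, products of size-$1$ hypergraphs, their edges are necessarily singletons; hence $\zeta_1(f(h))=1$ iff $\card{f(e)}=1$ for every edge $e\in h$. In other words, the acyclic $\zeta_1$-orientations of $h$ are precisely the acyclic orientations that single out a unique ``head'' vertex in each edge.

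The remaining step is to match $\sum_f \card{\scolorings{h,f,n}}$ (ranging over these $\zeta_1$-orientations) with the set of colorings of $V$ with $[n]$ in which every edge has a unique maximal vertex. For this I would use the map $c\mapsto\max_c$ recalled after Definition~\ref{def_orientations}: for any coloring $c$ with a unique maximal vertex per edge, $\max_c$ is automatically an acyclic $\zeta_1$-orientation and $c$ is strictly compatible with it; conversely, strict compatibility of $c$ with a $\zeta_1$-orientation $f$ forces $f(e)$ to be the unique maximum vertex of $c$ on $e$. This gives a bijection between such colorings and the pairs $(f,c)$ indexed by the surviving terms of the sum, yielding the claim. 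No step should present any real difficulty here: each is a matter of unpacking the definitions of $\zeta_1$, of $\scolorings{h,f,n}$, and of the orientation $\max_c$.
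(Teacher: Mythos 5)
Your proposal is correct and follows essentially the same route as the paper: specialize Theorem~\ref{th_chromatic_polynomial} to $\zeta_1$, observe that the $\zeta_1$-orientations are exactly the discrete ones (each $f(e)$ a singleton), and use the fact that each coloring is strictly compatible with the unique acyclic orientation $\max_c$ to turn the sum into a count of colorings with one maximal vertex per edge. Your version just spells out the bijection $c\mapsto\max_c$ more explicitly than the paper does.
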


\begin{proof}
    This proof is straightforward: the $\zeta_1$-orientations are exactly the discrete orientations and the colorings compatible with 
    such orientations are colorings where each edge has exactly one maximal vertex.
\end{proof}

\begin{example}
    The coloring given in Example~\ref{ex_colorings} is not counted in $\chi^{\zeta_1}_V(h)(4)$ since $e_3$ has two maximal vertices. 
    However by changing the color of $d$ to {\color{Blue}2} we do obtain a coloring where every edge has only one maximal vertex.

    Let $g$ be the hypergraph $\{\{1,2,3\},\{2,3,4\}\}\in \HG[[4]]$ represented in Figure~\ref{fig_same_edges}. We then have 
    $\chi^{\zeta_1}_{[4]}(h)(n)= n^4 - \frac{8}{3}n^3+\frac{5}{2}n^2-\frac{5}{6}n$ and we verify that, for example, $\chi_{[4]}(h)(2)=3$.
\end{example}

%\com{coeffictients de $\chi$ ?}

%%%%%%%%%%%%%%%%%%%%%%%%%%%%%%%%%%%%%%%%%%%%%%%%%%%%%%%%%%%%%%%%%%%%%%%%%%%%%%%%%%%%%%%%%%%%%%%%%%%%%%%%%%%%%%%%%%%%%%%%%%%%%%%%%%%%%%%%%%%%%%%%%%%%%%%%%%%%%%%%%%%%%%%%%%%%%%
%%%%%%%%%%%%%%%%%%%%%%%%%%%%%%%%%%%%%%%%%%%%%%%%%%%%%%%%%%%%%%%%%%% reciprocity theorem  %%%%%%%%%%%%%%%%%%%%%%%%%%%%%%%%%%%%%%%%%%%%%%%%%%%%%%%%%%%%%%%%%%%%%%%%%%%%%%%
%%%%%%%%%%%%%%%%%%%%%%%%%%%%%%%%%%%%%%%%%%%%%%%%%%%%%%%%%%%%%%%%%%%%%%%%%%%%%%%%%%%%%%%%%%%%%%%%%%%%%%%%%%%%%%%%%%%%%%%%%%%%%%%%%%%%%%%%%%%%%%%%%%%%%%%%%%%%%%%%%%%%%%%%%%%%%%

\subsection{Reciprocity theorem}\label{self_contained_rec_th}
We now give the reciprocity theorem which gives us an expression of $\chi^{\zeta}$ over negative integers as well as a combinatorial interpretation when possible.
A character $\zeta$ of $\HG$ is {\em odd} if $\zeta(h)=0$ for every $h$ with a connected component with an even number of vertices.
This can also be expressed by stating that the only connected hypergraphs on which $\zeta$ is not null have odd number of vertices.

Denote by $\cc(h)$ the number of connected components of $h$.
\begin{theorem}\label{th_reciprocity_hypergraphs}
    Let $\zeta$ be a character of $\HG$, $V$ be a finite set and $h\in\HG[V]$ a hypergraph then
    \begin{equation*}
        \chi^{\zeta}_V(h)(-n) = \sum_{f\in\acyclic_h} (-1)^{\cc(f(h))}\zeta(f(h))\card{\colorings{h,f,n}}.
    \end{equation*}
    Furthermore, if $\zeta$ is an odd characteristic function then $(-1)^{\card{V}}\chi^{\zeta}(h)(-n)$ 
    is the number of compatible pairs of acyclic $\zeta$-orientations of $h$ and colorings with $[n]$. In particular we have in this
    case that $(-1)^{\card{V}}\chi^{\zeta}(h)(-1)$ is the number of acyclic $\zeta$-orientations of $h$.
\end{theorem}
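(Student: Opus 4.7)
The plan is to imitate the self-contained strategy of the previous subsection: start from the explicit polynomial expression of $\chi^{\zeta}(h)(n)$ from Theorem~\ref{th_chromatic_polynomial} together with Proposition~\ref{prop_colorings}, substitute $n\mapsto -n$, and apply the reciprocity of generalized Faulhaber polynomials from Lemma~\ref{lemma_faulhaber_polynomials} to identify the result with the claimed formula.

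Combining Theorem~\ref{th_chromatic_polynomial} with Proposition~\ref{prop_colorings} gives
\begin{equation*}
    \chi^{\zeta}(h)(n) = \sum_{f\in\acyclic_h}\zeta(f(h))\,n^{\card{\I(h)}}\sum_{P\in P_{h,f}}F_{p_1,\dots,p_{l(P)}}(n).
\end{equation*}
Since $\chi^{\zeta}(h)$ is known to be polynomial in $n$ (Theorem~\ref{th_polynomial_invariant}), I would substitute $-n$ and apply Lemma~\ref{lemma_faulhaber_polynomials} termwise to each Faulhaber factor, rewriting the right-hand side as a triple sum: over acyclic orientations $f$, over compositions $P\in P_{h,f}$, and over refinements $q$ of the weight sequence of $P$, of signed evaluations $F_q(n+1)$.

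The crux is then to reindex this triple sum. For fixed $f$, I would build a bijection between the set of pairs $(P,q)$ with $P\in P_{h,f}$ and $q$ refining the weight sequence of $P$, and the set of compositions $P'\in P'_{h,f}$, matching $q$ with the weight sequence of $P'$. Under such a bijection, the inner double sum collapses to $\sum_{P'\in P'_{h,f}}F_{q'}(n+1)$, which by Proposition~\ref{prop_colorings} is precisely $\card{\colorings{h,f,n}}/n^{\card{\I(h)}}$. Combined with the factor $(-n)^{\card{\I(h)}}$ and the signs $(-1)^{d_{l(P)}}$ produced by Faulhaber reciprocity, this should assemble into the announced factor $(-1)^{\cc(f(h))}\zeta(f(h))\card{\colorings{h,f,n}}$. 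Establishing the bijection and reducing the residual signs to $(-1)^{\cc(f(h))}$, via an Euler-characteristic-style identity relating $\card{V}$, $\card{\NI(f(h))}$, $l(P)$ and $\cc(f(h))$, will be the main obstacle.

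For the \emph{furthermore} claim, if $\zeta$ is an odd characteristic function and $\zeta(f(h))\neq 0$, every connected component of $f(h)$ has odd cardinality, hence $\card{V}\equiv\cc(f(h))\pmod 2$ and $(-1)^{\cc(f(h))}=(-1)^{\card{V}}$; the interpretation as a count of compatible pairs is then immediate. Specializing to $n=1$, the unique coloring of $V$ with $[1]$ is compatible with every admissible orientation of $h$, so $\card{\colorings{h,f,1}}=1$ and the count reduces to the number of acyclic $\zeta$-orientations of $h$.
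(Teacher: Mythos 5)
Your overall strategy is the one the paper actually follows in its first, self-contained proof: expand $\chi^{\zeta}(h)(n)$ via Theorem~\ref{th_chromatic_polynomial} and Proposition~\ref{prop_colorings}, substitute $-n$, apply Lemma~\ref{lemma_faulhaber_polynomials} termwise, and then reorganize the resulting triple sum. The ``furthermore'' part of your argument is also correct: each component of $f(h)$ contributing an odd number of vertices gives $\card{V}\equiv\cc(f(h))\pmod 2$, and the $n=1$ specialization is fine.

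The gap is in your crux step. You propose a \emph{bijection} between pairs $(P,q)$, with $P\in P_{h,f}$ and $q$ a coarsening of the weight sequence of $P$ (note that Lemma~\ref{lemma_faulhaber_polynomials} sums over $p\prec q$, i.e.\ over coarsenings, not refinements), and the compositions in $P'_{h,f}$. No such bijection exists, and the double sum does not collapse term by term. Already for $f(h)$ with two unconstrained connected components $A,B$, the set $P_{h,f}=P'_{h,f}$ has three elements, while the pairs $(P,Q)$ with $P\in P_{h,f}$ and $P\prec Q$ number five; moreover the sign $(-1)^{d_{l(P)}}=(-1)^{\sum_i p_i+l(P)}$ depends on $l(P)$, which varies over the pairs mapping to a fixed coarsening, so no termwise matching can produce the single sign $(-1)^{\cc(f(h))}$. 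What is actually needed is a signed cancellation: after swapping the order of summation, for each fixed composition $Q\vDash f(h)$ one must evaluate $\sum_{P\prec Q,\ P\in P_{h,f}}(-1)^{l(P)}$, and show it vanishes unless $Q$ satisfies the weak inequalities defining $P'_{h,f}$, in which case it equals $(-1)^{\cc(f(h))-\card{\I(f(h))}}$. This is the content of Lemma~\ref{lemma_composition_sum} (a constrained inclusion--exclusion identity over refinements of a composition subject to the DAG on connected components of $f(h)$), which your proposal neither invokes nor replaces; it is the missing idea, and without it the reduction to $\sum_{P'\in P'_{h,f}}F(n+1)=\card{\colorings{h,f,n}}/n^{\card{\I(h)}}$ does not go through.
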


\begin{corollary}[Theorem 24 in \cite{my1}]\label{cor_rec_th}
    Let $V$ be a finite set and $h\in\HG[V]$ a hypergraph. Then $(-1)^{\card{V}}\chi^{\zeta_1}(h)(-n)$ is the number of compatible
    discrete acyclic orientation of $h$ and colorings with $[n]$. In particular, we have now that $(-1)^{\card{V}}\chi^{\zeta_1}(h)(-1)$ is the number
    of discrete acyclic orientations of $h$.
\end{corollary}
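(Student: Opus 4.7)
The plan is to prove the identity one orientation at a time — that is, to show
$$|\scolorings{h,f,-n}| = (-1)^{\cc(f(h))}|\colorings{h,f,n}|$$
for each $f\in\acyclic_h$ — and then multiply by $\zeta(f(h))$ and sum over $f$, invoking Theorem~\ref{th_chromatic_polynomial} evaluated at $-n$.

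First I would use Proposition~\ref{prop_colorings} to express both sides as polynomials:
$$|\scolorings{h,f,n}| = n^{|\I(h)|}\sum_{P\in P_{h,f}}F_{p_P}(n),\qquad |\colorings{h,f,n}| = n^{|\I(h)|}\sum_{P'\in P'_{h,f}}F_{p_{P'}}(n+1).$$
I would then evaluate the left-hand side at $-n$, separating $(-n)^{|\I(h)|}=(-1)^{|\I(h)|}n^{|\I(h)|}$ and applying Lemma~\ref{lemma_faulhaber_polynomials} to obtain $F_{p_P}(-n)=(-1)^{d_P}\sum_{p_P\prec q}F_q(n+1)$, where $d_P=|V|-|\I(h)|-|\NI(f(h))|+l(P)$. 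After accumulating the signs, the per-orientation identity reduces to the combinatorial statement
$$\sum_{P\in P_{h,f}}(-1)^{l(P)}\sum_{p_P\prec q}F_q(n+1) = (-1)^{c_f}\sum_{P'\in P'_{h,f}}F_{p_{P'}}(n+1),$$
where $c_f$ denotes the number of non-trivial connected components of $f(h)$; combining $c_f$ with the trivial components of $f(h)$, whose number is $|V|-|\NI(f(h))|$, produces the exponent $\cc(f(h))$.

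This re-indexing is the main obstacle of the proof. The intuition is that relaxing the strict inequalities defining $P_{h,f}$ into the weak inequalities defining $P'_{h,f}$ lets the parts of $P'$ cluster equivalence classes (under the relation generated by $f(e)\cap f(e')\noemp$) that were forced to be separated in $P$, which is exactly the freedom encoded by the integer refinements $p_P\prec q$ on the left. Tracking this correspondence carefully, the alternating sum over $P\in P_{h,f}$ behaves as an inclusion–exclusion that sifts the refinements down to those corresponding to compositions in $P'_{h,f}$, collapsing the varying sign $(-1)^{l(P)}$ into the single factor $(-1)^{c_f}$.

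For the second statement, a parity argument suffices. If $\zeta$ is an odd characteristic function and $\zeta(f(h))=1$, then every connected component of $f(h)$ has an odd number of vertices, so $\cc(f(h))\equiv|V|\pmod 2$ and $(-1)^{\cc(f(h))}\zeta(f(h))=(-1)^{|V|}\zeta(f(h))$. Multiplying the identity by $(-1)^{|V|}$ converts it into a nonnegative count of compatible pairs of acyclic $\zeta$-orientations and colorings with $[n]$. Evaluating at $n=1$ forces $\colorings{h,f,1}$ to be the singleton consisting of the monochromatic coloring, which is trivially compatible with any acyclic orientation, leaving exactly the number of acyclic $\zeta$-orientations of $h$.
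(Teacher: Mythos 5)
Your route is not the paper's: the paper obtains this corollary in one line by specializing Theorem~\ref{th_reciprocity_hypergraphs} to $\zeta=\zeta_1$, noting that $\zeta_1$ is an odd characteristic function (every connected component of a discrete hypergraph is a single vertex) and that the $\zeta_1$-orientations are exactly the discrete orientations. What you do instead is re-derive the reciprocity theorem itself from Theorem~\ref{th_chromatic_polynomial}, Proposition~\ref{prop_colorings} and Lemma~\ref{lemma_faulhaber_polynomials}, which retraces, almost step for step, the paper's self-contained proof of Theorem~\ref{th_reciprocity_hypergraphs}. Within that plan your bookkeeping is correct: $d_P=\card{V}-\card{\I(h)}-\card{\NI(f(h))}+l(P)$ is right, and splitting $\cc(f(h))$ into the $\card{V}-\card{\NI(f(h))}$ trivial components plus the $c_f$ non-trivial ones does reduce the per-orientation claim to your displayed identity. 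The closing parity argument and the evaluation at $n=1$ via the monochromatic coloring are also fine; note however that you never actually perform the specialization the corollary asks for --- you should state that $\zeta_1(f(h))\neq 0$ exactly when $f$ is a discrete orientation and that $\zeta_1$ is odd.

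The genuine gap is the step you yourself flag as the main obstacle: the identity $\sum_{P\in P_{h,f}}(-1)^{l(P)}\sum_{p_P\prec q}F_q(n+1)=(-1)^{c_f}\sum_{P'\in P'_{h,f}}F_{p_{P'}}(n+1)$ is asserted via an appeal to ``inclusion--exclusion that sifts the refinements,'' but no argument is given, and this is precisely where all the content lies. The paper handles it by (i) exhibiting the bijection $\phi$ between compositions $Q\vDash f(h)$ coarsening $P$ and integer compositions $q$ coarsening $(p_1,\dots,p_{l(P)})$, and checking that $\phi(Q)$ agrees with the sequence $(\card{\widetilde{Q_1}},\dots,\card{\widetilde{Q_{l(Q)}}})$ computed directly from $Q$; (ii) reinterpreting the elements of $P_{h,f}$ and $P'_{h,f}$ as compositions of the set of connected components of $f(h)$ constrained by a directed graph $g$, so that $\set{P\prec Q\,|\,P\in P_{h,f}}=C(g,Q)$; and (iii) using the acyclicity of $f$ to guarantee that $g$ is a directed acyclic graph, which is the hypothesis of Lemma~\ref{lemma_composition_sum} and the only place acyclicity of the orientation enters. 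Without these points the inner alternating sum is not evaluated, so neither the vanishing outside $P'_{h,f}$ nor the uniform sign $(-1)^{c_f}$ is established. Since Lemma~\ref{lemma_composition_sum} is available to you, the fix is to cite it and supply (i)--(iii) in place of the heuristic.
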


\begin{example}
    For any any hypergraph $h$ over $V$ and any odd character $\zeta$ of $\HG$, we have $\chi_V(h)(n) \leq (-1)^{|V|}\chi_V(h)(-n)$. 
    This comes from the fact that any strictly compatible pair is a compatible pair. This is observed for $\zeta=\zeta_1$ and 
    $h=\{\{1,2,3\},\{2,3,4\}\}\in \HG[[4]]$:
    \begin{equation*}
        \chi_{[4]}(h)(n)= n^4 - \frac{8}{3}n^3+\frac{5}{2}n^2-\frac{5}{6}n < n^4 + \frac{8}{3}n^3+\frac{5}{2}n^2+\frac{5}{6}n = (-1)^4\chi_{[4]}(h)(-n).
    \end{equation*}
    We also verify that $h$ does have $\chi_{[4]}(h)(-1) = 7$ acyclic discrete orientations ($3\times 3$ orientations minus the two cyclic orientations).
\end{example}

As announced at the beginning of this section, we give here a self-contained proof which uses the previous results on generalized 
Faulhaber's polynomials and on compatible colorings. Except for the defitions of some objects, this proof is essentialy the same 
than the one given in \cite{my1} for the case of the basic character and we will also need the following lemma.

Recall from subsection~\ref{decomp_op} the classical definitions over compositions: product, shuffle product and refinement.

\begin{lemma}[Lemma 23 in \cite{my1}]\label{lemma_composition_sum}
    Let $V$ be a set and $P\vDash V$ a composition of $V$. We have the identity:
    \begin{equation*}
        \sum_{Q\prec P} (-1)^{l(Q)} = (-1)^{\card{V}}.
    \end{equation*}

    Furthermore, let $g$ be a directed acyclic graph on $V$ and consider the {\em constrained set} \linebreak[4] 
    $C(g,P)=\{Q\prec P\,|\, \forall (v,v')\in g,Q(v)< Q(v')\}$. We have the more general identity:
    \begin{equation*}
        \sum_{Q\in C(g,P)} (-1)^{l(Q)} = \left\{\begin{array}{cl}
        0  & \text{if there exists $(v,v')\in g$ such that $P(v')<P(v)$},    \\ 
        (-1)^{\card{V}}  & \text{if not.}\end{array}\right.
    \end{equation*}
\end{lemma}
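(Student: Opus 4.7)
My plan is to prove the first identity via a product decomposition, then reduce the DAG-constrained identity to a single-set form and prove that form by induction on the number of vertices. For the first identity, I would use the standard bijection between refinements $Q \prec P$ of $P = (P_1, \ldots, P_l)$ and tuples $(Q^{(1)}, \ldots, Q^{(l)})$ where $Q^{(i)} \vDash P_i$: every $Q \prec P$ is the concatenation $Q^{(1)} \cdot Q^{(2)} \cdots Q^{(l)}$. Since $l(Q) = \sum_i l(Q^{(i)})$, the sum factorizes as
\begin{equation*}
    \sum_{Q \prec P} (-1)^{l(Q)} = \prod_{i=1}^{l} \Bigl( \sum_{Q^{(i)} \vDash P_i} (-1)^{l(Q^{(i)})} \Bigr),
\end{equation*}
so it suffices to show $\sum_{Q \vDash W} (-1)^{l(Q)} = (-1)^{\card{W}}$ for any non-empty finite $W$. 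This is a short induction on $\card{W}$: the base case $\card{W}=1$ is trivial, and adjoining a fresh vertex $v$ transforms each composition $Q'$ of $W$ of length $k$ into $k$ compositions of length $k$ (insert $v$ into one of the existing $k$ parts) plus $k+1$ compositions of length $k+1$ (insert $v$ as a new singleton in one of the $k+1$ gaps). The signed contribution is $k(-1)^k + (k+1)(-1)^{k+1} = -(-1)^k$, which flips the parity as required, so the product evaluates to $\prod_i (-1)^{\card{P_i}} = (-1)^{\card{V}}$.

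For the second identity, I would split on whether $P$ is compatible with $g$. If some $(v,v') \in g$ has $P(v') < P(v)$, then $Q \prec P$ forces $Q(v') < Q(v)$ (every element of $P_{P(v')}$ sits in a strictly earlier part of $Q$ than any element of $P_{P(v)}$), contradicting the constraint $Q(v) < Q(v')$; hence $C(g,P) = \emptyset$ and the sum vanishes. Otherwise every edge satisfies $P(v) \leq P(v')$. The edges with $P(v) < P(v')$ are automatically respected by any $Q \prec P$, so only the \emph{horizontal} sub-DAGs $g_i = \{(v,v') \in g : v,v' \in P_i\}$ on each $P_i$ impose genuine constraints. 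The factorization used for the first identity then specializes to
\begin{equation*}
    \sum_{Q \in C(g,P)} (-1)^{l(Q)} = \prod_{i=1}^{l} \Bigl( \sum_{\substack{Q^{(i)} \vDash P_i \\ Q^{(i)} \text{ respects } g_i}} (-1)^{l(Q^{(i)})} \Bigr),
\end{equation*}
so the second identity reduces to the following single-set claim: for any DAG $g$ on a non-empty finite set $W$, $\sum_{Q \vDash W,\, Q \text{ respects } g} (-1)^{l(Q)} = (-1)^{\card{W}}$.

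I would prove this claim by induction on $\card{W}$, picking a source $v$ of $g$ (which exists by acyclicity and finiteness) and letting $g'$ be the restriction of $g$ to $W \setminus \{v\}$. Given a composition $Q' = (Q'_1, \ldots, Q'_k)$ of $W \setminus \{v\}$ respecting $g'$, set $j^* := \min\{j : Q'_j \cap N_g^+(v) \noemp\}$, with the convention $j^* = k+1$ if $N_g^+(v) = \emptyset$. The compositions of $W$ respecting $g$ obtained by inserting $v$ into $Q'$ split into $j^* - 1$ options of length $k$ (insert $v$ into an existing part $Q'_j$ with $j < j^*$) and $j^*$ options of length $k+1$ (insert $v$ as a new singleton in any of the first $j^*$ slots; note that the slot immediately before $Q'_{j^*}$ is still allowed, since the out-neighbors of $v$ then shift one position to the right). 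The signed contribution per $Q'$ is $(j^* - 1)(-1)^k + j^* (-1)^{k+1} = -(-1)^k$, so summing over $Q'$ yields $-(-1)^{\card{W}-1} = (-1)^{\card{W}}$ by induction. The main subtlety will be getting this insertion count correct, in particular the off-by-one between inserting into an existing part (strict inequality $j < j^*$, giving $j^* - 1$ positions) and inserting as a new singleton (weak inequality $j \leq j^*$, giving $j^*$ positions); once this is established, the uniform cancellation $(j^* - 1) - j^* = -1$ carries the proof through, regardless of the value of $j^*$.
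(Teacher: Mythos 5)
Your argument is correct. Note first that the paper does not actually prove this lemma: it is imported verbatim as Lemma~23 of \cite{my1}, and the only editorial comment is the remark that it can be read as a statement about posets and linear extensions. So there is no in-paper proof to compare against; the natural reference proof would run through the identity $\sum_k(-1)^k k!\,S(n,k)=(-1)^n$ or a M\"obius-function computation on the partition/composition lattice, which is essentially what the poset remark is hinting at. Your route is more elementary and entirely self-contained, and all the delicate points check out: the factorization of refinements of $P$ into independent compositions of the blocks $P_i$ is exactly the paper's definition of $\prec$; the reduction of the constrained sum to the blocks is legitimate because edges with $P(v)<P(v')$ are automatically respected by every refinement and edges with $P(v')<P(v)$ kill the whole set; and in the key induction the count of admissible insertions of a source $v$ is indeed $j^*-1$ parts versus $j^*$ gaps (the gap immediately preceding $Q'_{j^*}$ is allowed because the out-neighbours shift right), giving the uniform signed contribution $(j^*-1)(-1)^k+j^*(-1)^{k+1}=-(-1)^k$ independently of $j^*$, which is what makes the induction close. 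The only cosmetic redundancy is that your first identity is the special case $g=\emptyset$ of your single-set claim, so the separate induction for it could be dropped.
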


While we interpreted Lemma~\ref{lemma_composition_sum} as a result on graphs and partitions, it can also be seen as a result on posets 
and linear extensions.

We can now give our first proof to Theorem~\ref{th_reciprocity_hypergraphs}.

\begin{proof}[Proof of Theorem~\ref{th_reciprocity_hypergraphs}.]
    From Theorem~\ref{th_chromatic_polynomial}, Proposition~\ref{prop_colorings} and Lemma~\ref{lemma_faulhaber_polynomials}, we have that: 
    \begin{equation*}
        \chi^{\zeta}_V(h)(-n) = (-n)^{\card{\I(h)}}\sum_{f\in \acyclic_h}\zeta(f(h))\sum_{P\in P_{h,f}}(-1)^{\sum_{i=1}^{l(P)}p_i + l(P)}\sum_{(p_1, \dots, p_{l(P)}) \prec q}F_q(n+1).
    \end{equation*}
    Let $P$ be a composition. We then have:
    \begin{itemize}
        \item $\sum_{i=1}^{l(P)}p_i = \card{\NI(h)} - \card{\NI(f(h))}$, since $(\widetilde{P_1},\dots \widetilde{P_{l(P)}}, \NI(f(h)))$ 
        is a partition of $\NI(h)$.
        \item The map
        \begin{equation*}\begin{split}
            \phi :\set{Q\vDash f(h)\,|\, P\prec Q} &\to \set{q\vDash(\card{\NI(h)} - \card{\NI(f(h))})\,|\, (p_1,\dots, p_{l(P)}) \prec q}\\
            Q &\mapsto (\card{\widetilde{Q_1}},\dots, \card{\widetilde{Q_{l(Q)}}})
        \end{split}\end{equation*}
        is a bijection ($\widetilde{Q}_i$ is defined in the same way that $\widetilde{P_i}$ in Proposition~\ref{prop_colorings}). Indeed, the two sets 
        have same cardinality $\sum_{k=0}^{l(P)-1}\binom{l(P)}{k}$: in both cases we choose the number $k+1$ of elements of the composition and then which 
        consecutive elements of the composition to merge: $\binom{l(P)}{k}$. Furthermore the map $\phi$ is a surjection, since the 
        composition $(q_1,\dots, q_k)$ with $q_i = \sum_{j = j_i}^{k_i} p_j$ is the image of the composition $Q_1,\dots,Q_k$ with $Q_i = \sqcup_{j = j_i}^{k_i} P_j$.
        This comes from the fact that for any two disjoint sets of sets $A,B$ we have
        $\bigcup_{e\in A\sqcup B}e = \bigcup_{e\in A}e \cup \bigcup_{e\in B}e = \bigcup_{e\in A}e \sqcup \left(\bigcup_{e\in B}e \right)\cap\left(\bigcup_{e\in A}e\right)^c$
        and that the sets $f^{-1}(P_i)$ are pairwise disjoint.
    \end{itemize}
    These two remarks lead to:
    \begin{equation*}\begin{split}
        \chi^{\zeta}_V(h)(-n) &= n^{\card{\I(h)}}\sum_{f\in \acyclic_h}(-1)^{\card{V}-\card{\NI(f(h))}}\zeta(f(h))\sum_{P\in P_{h,f}}(-1)^{l(P)}\sum_{P\prec Q} F_{\phi(Q)}(n+1)\\
        &= n^{\card{\I(h)}}\sum_{f\in \acyclic_h}(-1)^{\card{V}-\card{\NI(f(h))}}\zeta(f(h))\sum_{Q\vDash f(h)}\left(\sum_{\substack{P\prec Q\\P\in P_{h,f}}} (-1)^{l(P)}\right)F_{\phi(Q)}(n+1).
    \end{split}\end{equation*}
    Let $g$ be the graph with vertices the connected components of $f(h)$ and with an oriented edge from a connected component $h_1$ to another
    connected component $h_2$ if there is $e_1\in h_1$ and $e_2\in h_2$ such that $f(e_1)\cap f_s(e_2)\noemp$. Since $f$ is acyclic, $g$ is a directed
    acyclic graph. We can see the compositions in $P_{h,f}$ and $P_{h,f}'$ (Proposition~\ref{prop_colorings}) as compositions over connected components of $f(h)$ 
    since for such compositions, two edges of $f(h)$ with a non empty intersection must be in the same part by definition. Remarking then that with this 
    point of view $\set{ P\prec Q\,|\, P\in P_{h,f}} = C(g,Q)$, Lemma~\ref{lemma_composition_sum} leads to:
    \begin{equation*}\begin{split}
        \chi_V(h)(-n) &= n^{\card{\I(h)}}\sum_{f\in \acyclic_h}(-1)^{\card{V}-\card{\NI(f(h))}}\zeta(f(h))\sum_{\substack{P\vDash f(h)\\ P(v)\leq P(v') \forall(v,v')\in g}} (-1)^{\cc(f(h))-\card{\I(f(h))}}F_{\phi(P)}(n+1) \\
        &= n^{\card{\I(h)}}\sum_{f\in \acyclic_h}(-1)^{\card{V}-\card{\NI(f(h))}-\card{\I(f(h))}+\cc(f(h))}\zeta(f(h))\sum_{P\in P_{h,f}'} F_{p_1,\dots, p_{l(P)}}(n+1) \\
        &= n^{\card{\I(h)}}\sum_{f\in \acyclic_h}(-1)^{\cc(f(h))}\card{\colorings{f,h,n}},
    \end{split}\end{equation*}
    where the last equality is Proposition~\ref{prop_colorings}.

    To complete this proof, note that when $\zeta$ is odd, $\zeta(f(h))\not=0$ implies that $\card{V}-\cc(f(h))$ is even since each connected component $h'$ of $f(h)$
    participate for $V(h')-1$ which is even.
\end{proof}

%%%%%%%%%%%%%%%%%%%%%%%%%%%%%%%%%%%%%%%%%%%%%%%%%%%%%%%%%%%%%%%%%%%%%%%%%%%%%%%%%%%%%%%%%%%%%%%%%%%%%%%%%%%%%%%%%%%%%%%%%%%%%%%%%%%%%%%%%%%%%%%%%%%%%%%%%%%%%%%%%%%%%%%%%%%%%%
%%%%%%%%%%%%%%%%%%%%%%%%%%%%%%%%%%%%%%%%%%%%%%%%%%%%%%%%%%%%%%%%%%% Alternate proof  %%%%%%%%%%%%%%%%%%%%%%%%%%%%%%%%%%%%%%%%%%%%%%%%%%%%%%%%%%%%%%%%%%%%%%%%%%%%%%%
%%%%%%%%%%%%%%%%%%%%%%%%%%%%%%%%%%%%%%%%%%%%%%%%%%%%%%%%%%%%%%%%%%%%%%%%%%%%%%%%%%%%%%%%%%%%%%%%%%%%%%%%%%%%%%%%%%%%%%%%%%%%%%%%%%%%%%%%%%%%%%%%%%%%%%%%%%%%%%%%%%%%%%%%%%%%%%

\subsection{Alternative proof}\label{alt_proof}
Let us now give the second proof. As announced this proof is shorter and we will prove both Theorem~\ref{th_chromatic_polynomial}
and Theorem~\ref{th_reciprocity_hypergraphs} at the same time. 

This proof rest on the existence of a bijection between hypergraphs and hypergraphic polytopes.
Indeed for $h\in\HG[V]$ a hypergraph, 
denote by $\Delta_h$ the Minkowski sum $\sum_{e\in h}\Delta_e$. Then the map $\Delta:h\mapsto\Delta_h$ clearly is a species isomorphism 
from $\HG$ to $\HGP$.
We further extend the similarities of hypergraphs and hypergraphic polytopes with the following lemma.

\begin{lemma}\label{lemma_acyclic_faces}
    Let $V$ be a finite set and $h$ a hypergraph over $V$. Then the faces of $\Delta_h$ are exactly the hypergraphic polytopes 
    $\Delta_{f(h)}$ for $f\in\acyclic_h$. Furthermore, for $f$ an acyclic orientation of $h$, $\scolorings{h,f,n}=\scone{\Delta_h}(\Delta_{f(h)})_n$
    and $\colorings{h,f,n}=\cone{\Delta_h}(\Delta_{f(h)})_n$.
\end{lemma}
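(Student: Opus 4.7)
The plan is to push everything through the elementary Minkowski identity $(P+Q)_y=P_y+Q_y$, which, combined with the formula $(\Delta_e)_y=\Delta_{M_y(e)}$ for $M_y(e):=\{v\in e\,:\,y(v)=\max_{v'\in e}y(v')\}$, yields
\begin{equation*}
(\Delta_h)_y=\sum_{e\in h}\Delta_{M_y(e)}=\Delta_{M_y(h)}.
\end{equation*}
Thus every face of $\Delta_h$ is of the form $\Delta_{f(h)}$ for $f=M_y$ some orientation arising from a direction. That $M_y\in\acyclic_h$ is immediate: a cyclic configuration $e_1,\dots,e_k$ as in Definition~\ref{orientations} would yield $\max_{e_1}y\leq\cdots\leq\max_{e_k}y<\max_{e_1}y$, the strict last inequality coming from $f(e_k)\cap f_s(e_1)\neq\emptyset$ and the weak inner ones from either of the two alternatives in the definition, a contradiction.

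For the converse, given $f\in\acyclic_h$, I would build a direction $y$ with $M_y=f$ by a topological-sort construction. Generate a preorder on $V$ by the relations $v\leq v'$ whenever $v\in e\setminus f(e)$ and $v'\in f(e)$ for some $e\in h$, together with $v\leq v'$ and $v'\leq v$ whenever $v,v'\in f(e)$ for some $e\in h$. Acyclicity of $f$ prevents the preorder from strictly ordering a pair in both directions, since any such configuration would assemble into a cyclic sequence of edges. Any $y:V\to\R$ compatible with the preorder (constant on equivalence classes, strictly increasing along strict parts) then satisfies $M_y=f$, completing the identification of the face lattice of $\Delta_h$ with $\acyclic_h$.

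Finally, for the cone statements, I view a coloring $c:V\to[n]$ as a direction in $\R^V$. Strict compatibility $f(e)=M_c(e)$ for all $e\in h$ is equivalent to $(\Delta_h)_c=\Delta_{M_c(h)}=\Delta_{f(h)}$, i.e.\ to $c\in\scone{\Delta_h}(\Delta_{f(h)})$; the nontrivial direction passes from $\Delta_{M_c(h)}=\Delta_{f(h)}$ to multiset equality $M_c(h)=f(h)$ via the species isomorphism $\Delta:\HG\to\HGP$, then to pointwise equality using acyclicity of both orientations. For the closed cone, $f(e)\subseteq M_c(e)$ for every $e\in h$ is equivalent to $g:M_c(e)\mapsto f(e)$ being a well-defined acyclic orientation of the hypergraph $M_c(h)$ with $g(M_c(h))=f(h)$ (acyclicity follows because a cycle of $g$ in $M_c(h)$ pulls back along $f(e_i)\subseteq M_c(e_i)\subseteq e_i$ to a cycle of $f$ in $h$); applying the first part of the lemma to $M_c(h)$ then identifies $\Delta_{f(h)}$ as a face of $\Delta_{M_c(h)}=(\Delta_h)_c$, giving $c\in\cone{\Delta_h}(\Delta_{f(h)})$. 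Restricting directions to $[n]^V$ yields the two claimed equalities. The main obstacles I anticipate are the topological-sort construction of $y$ from $f$, where acyclicity is genuinely needed, and the multiset-to-pointwise promotion in the strict case.
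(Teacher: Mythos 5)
Your computation of the face lattice is essentially the paper's: the identity $(\Delta_h)_y=\Delta_{M_y(h)}$ is exactly what the paper obtains by writing a point of $\Delta_h$ as $\sum_{e}\sum_{v\in e}a_{e,v}v$ and maximizing $y(p)$, and your verification that $M_y$ is acyclic reproduces the paper's earlier observation that $\max_c$ is always acyclic. Where you genuinely diverge is the surjectivity step: the paper simply opens with ``let $c$ be a coloring strictly compatible with $f$'', leaving the existence of such a $c$ for an arbitrary $f\in\acyclic_h$ implicit (it is hidden in the nonemptiness of $P_{h,f}$ in Proposition~\ref{prop_colorings}), whereas you construct a direction $y$ with $M_y=f$ by an explicit topological sort. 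This makes the proof more self-contained, and your treatment of the two cone equalities is likewise more detailed than the paper's one-line ``directly follow from the preceding''.

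That said, the topological sort is exactly where your argument is thinnest, and the one-sentence claim that a strict cycle in the generated preorder ``would assemble into a cyclic sequence of edges'' does not yet meet Definition~\ref{orientations}. Two points need care. First, that definition requires the edges $e_1,\dots,e_k$ to be \emph{distinct}, while a cycle of generating relations may reuse an edge; one must split such a cycle at a repeated edge $e$ (using that both connecting vertices lie in $f(e)$, hence are forced equivalent) and check that one of the two shorter cycles still carries a strict step. Second, for consecutive witnessing edges $e_j,e_{j+1}$ whose connecting vertex lies in $f(e_j)\cap f(e_{j+1})$, it can happen that $f(e_j)\cap e_{j+1}=f(e_{j+1})$ exactly, in which case \emph{neither} alternative of Definition~\ref{orientations} links $e_j$ to $e_{j+1}$; one must then delete $e_{j+1}$ from the sequence (possible because $f(e_{j+1})\subseteq f(e_j)$, so the next connecting vertex already lies in $f(e_j)$) and recurse. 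A similar promotion from multiset equality to pointwise equality is asserted but not carried out in the strict-cone step ($\Delta_{M_c(h)}=\Delta_{f(h)}\Rightarrow M_c(e)=f(e)$ for every $e$); it is true, by chasing a cycle of the permutation matching $M_c(h)$ with $f(h)$ and comparing maximal colors around it, but it is not immediate from the species isomorphism alone. None of these are wrong turns --- the statements are true and the reductions go through --- but as written they are gaps, and they sit precisely where the combinatorial content of acyclicity is actually used.
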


\begin{remark}
    This lemma is equivalent to Theorem 2.18 of \cite{HypB}. Our approach and notations being different than in \cite{HypB},
    we preferred to give an alternative proof of this lemma.
\end{remark}

\begin{proof}
    Let be $f\in\acyclic_h$. We first show that $\Delta_{f(h)}$ is indeed a face of $\Delta_h$. Let $c$ be a coloring
    strictly compatible with $f$ \textit{i.e.}  $f=\max_c$. We show that $\Delta_{f(h)}$ 
    is the $c$-maximum face of $\Delta_h$. Let $p$ be a point in $\Delta_h$. Then by definition of $\Delta_h$, $p$ can be written
    as $\sum_{e\in h}\sum_{v\in e}a_{e,v} v$ where for each edge $e$ the $a_{e,v}$ are positive real numbers summing to one:
    $\sum_{v\in e} a_{e,v}=1$. We then have that,
    \begin{equation*}\begin{split}
        c(p) &= \sum_{e\in h}\sum_{v\in e}a_{e,v} c(v)\\
        &=\sum_{e\in h} \left(c(f(e))\sum_{v\in f(e)}a_{e,v}+\sum_{v\in f_s(e)} a_{e,v}c(v)\right),
    \end{split}\end{equation*}
    which is maximum when $a_{e,v} = 0$ for every edge $e$ and every $v\in f_s(e)$. This implies that the $c$-maximum face of $\Delta_h$
    is the set of points of the form $\sum_{e\in h} \sum_{v\in f(e)}a_{e,v}v$ with $\sum_{v\in f(e)} a_{e,v} = 1$. This is exactly
    $\Delta_{f(h)}$.

    Let now $Q$ be a face of $\Delta_h$. For $P$ a polytope and $y$ a direction, the $y$-maximum face $P_y$ does not depend on the exact
    values of $y$ but only on the induced order $v<v'$ if $y(v)<y(v')$. Let hence be $c$ a coloring with value in $[n]$ such 
    that $Q=P_c$. Then by what precedes, $Q=\Delta_{\max_c(h)}$.

    The equalities between the sets of compatible colorings and the cones directly follow from the preceding.
\end{proof}

This lemma together with Proposition~\ref{prop_polynomial_invariant} is enough to give the desired proof.

\begin{proof}[Proof of Theorem~\ref{th_reciprocity_hypergraphs}]
    Let $\zeta$ be a character of $\HG$ and define $\zeta'$ the character of $\GP$ defined by $\zeta'(P) = \zeta(h)$ if $P=\Delta_h$ is a
    hypergraphic polytope and $\zeta'(P)=0$ else. Let $h$ be a hypergraph over $V$. Then, applying Proposition~\ref{prop_polynomial_invariant}, 
    Theorem~\ref{th_pol_inv_gp} and finally Lemma~\ref{lemma_acyclic_faces}, we have:
    \begin{equation*}\begin{split}
        \chi^{\HG,\zeta}_V(h)(n) &=\chi^{\GP,\zeta'}_V(\Delta_h)(n) = \sum_{Q\leq \Delta_h} \zeta'(Q)\card{\scone{\Delta_h}(Q)_n} \\
        &= \sum_{f\in\acyclic_h} \zeta(f(h))\card{\scone{\Delta_h}(\Delta_{f(h)})_n} \\
        &= \sum_{f\in\acyclic_h} \zeta(f(h))\card{\scolorings{h,f,n}}.
    \end{split}\end{equation*} 
    The formula over non positive integers is obtained analogously.
\end{proof}

\begin{remark}
    As a corollary from Lemma~\ref{lemma_acyclic_faces} and Theorem~\ref{th_antipode_gp}, we have that the antipode of $\HG$
    is given by the cancellation-free and grouping-free expression
    \begin{equation*}
        \antipode_V(h)=\sum_{f\in\acyclic_h}(-1)^{\cc(f(h))}f(h).
    \end{equation*}
    While this expression express the antipode only in term of acyclic orientations, faces of polytopes have much more
    apparent structure than acyclic orientations and are easier and more intuitive to work with.
\end{remark}

%%%%%%%%%%%%%%%%%%%%%%%%%%%%%%%%%%%%%%%%%%%%%%%%%%%%%%%%%%%%%%%%%%%%%%%%%%%%%%%%%%%%%%%%%%%%%%%%%%%%%%%%%%%%%%%%%%%%%%%%%%%%%%%%%%%%%%%%%%%%%%%%%%%%%%%%%%%%%%%%%%%%%%%%%%%%%%
%%%%%%%%%%%%%%%%%%%%%%%%%%%%%%%%%%%%%%%%%%%%%%%%%%%%%%%%%%%%%%%%%%%%%%%%%%%%%%%%%%%%%%%%%%%%%%%%%%%%%%%%%%%%%%%%%%%%%%%%%%%%%%%%%%%%%%%%%%%%%%%%%%%%%%%%%%%%%%%%%%%%%%%%%%%%%%
%%%%%%%%%%%%%%%%%%%%%%%%%%%%%%%%%%%%%%%%%%%%%%%%%%%%%%%%%%%%%%%%%% Othe Hopf monoids %%%%%%%%%%%%%%%%%%%%%%%%%%%%%%%%%%%%%%%%%%%%%%%%%%%%%%%%%%%%%%%%%%%%%%%%%%%%%%%%%%%%%%%%%
%%%%%%%%%%%%%%%%%%%%%%%%%%%%%%%%%%%%%%%%%%%%%%%%%%%%%%%%%%%%%%%%%%%%%%%%%%%%%%%%%%%%%%%%%%%%%%%%%%%%%%%%%%%%%%%%%%%%%%%%%%%%%%%%%%%%%%%%%%%%%%%%%%%%%%%%%%%%%%%%%%%%%%%%%%%%%%
%%%%%%%%%%%%%%%%%%%%%%%%%%%%%%%%%%%%%%%%%%%%%%%%%%%%%%%%%%%%%%%%%%%%%%%%%%%%%%%%%%%%%%%%%%%%%%%%%%%%%%%%%%%%%%%%%%%%%%%%%%%%%%%%%%%%%%%%%%%%%%%%%%%%%%%%%%%%%%%%%%%%%%%%%%%%%%

\section{Other Hopf monoids}\label{sec_other_hopf}

In this section we use Theorem~\ref{th_chromatic_polynomial} and Theorem~\ref{th_reciprocity_hypergraphs} to obtain similar results on other 
Hopf monoids, more precisely the Hopf monoids from sections 19 to 24 of \cite{AA}.
The general method used here is to use the fact that these Hopf monoids can be seen as sub-monoids of (most of the times) the Hopf monoid 
of simple hypergraphs, and then present an interpretation of what is an acyclic orientation on these particular Hopf monoids. More precisely, 
Proposition~\ref{prop_polynomial_invariant} tells us that if we have a morphism $\phi:\spe{M}\to\SHG$ then $\chi^{\spe{M},\zeta\circ\phi}=\chi^{\SHG,\zeta}\circ\phi$
for $\zeta$ any character of $\SHG$. Since $\phi$ will be injective in our cases, restricting its co-domain to is image makes it an isomorphism.
We then have $\chi^{\spe{M},\zeta} = \chi^{\SHG,\zeta\circ\phi^{-1}}\circ\phi$ for $\zeta$ any character of $\spe{M}$. What remains is to 
find a combinatorial interpretation of acyclic orientations on the objects of $\spe{M}$. 

\begin{remark}
    Note that this is exactly how we obtained our second proof of the combinatorial interpretation of $\chi^{\HG}$. Here, since we are only working on purely
    combinatorial objects, the morphism $\phi$ is simpler than in the case of $\Delta:\HG\to\HGP$.
\end{remark}

Not only the results of this section generalize a lot of other results, but they are also obtained with a uniform approach. We provide 
details at the beginning of each subsection on the links between our results and already existing ones.

%%%%%%%%%%%%%%%%%%%%%%%%%%%%%%%%%%%%%%%%%%%%%%%%%%%%%%%%%%%%%%%%%%%%%%%%%%%%%%%%%%%%%%%%%%%%%%%%%%%%%%%%%%%%%%%%%%%%%%%%%%%%%%%%%%%%%%%%%%%%%%%%%%%%%%%%%%%%%%%%%%%%%%%%%%%%%%
%%%%%%%%%%%%%%%%%%%%%%%%%%%%%%%%%%%%%%%%%%%%%%%%%%%%%%%%%%%%%%%%%%%% simple hypergraphs %%%%%%%%%%%%%%%%%%%%%%%%%%%%%%%%%%%%%%%%%%%%%%%%%%%%%%%%%%%%%%%%%%%%%%%%%%%%%%%%%%%%%
%%%%%%%%%%%%%%%%%%%%%%%%%%%%%%%%%%%%%%%%%%%%%%%%%%%%%%%%%%%%%%%%%%%%%%%%%%%%%%%%%%%%%%%%%%%%%%%%%%%%%%%%%%%%%%%%%%%%%%%%%%%%%%%%%%%%%%%%%%%%%%%%%%%%%%%%%%%%%%%%%%%%%%%%%%%%%%

\subsection{Simple hypergraphs}
A {\em simple hypergraph over $V$} is a set $h$ of non empty parts of $V$. $\SHG$ admits a similar Hopf
monoid structure to $\HG$. In fact its structure maps can be defined in the same way
\begin{align*}
    \mu_{V_1,V_2}: \SHG[V_1]\otimes \SHG[V_2] &\to \SHG[V] & \Delta_{V_1,V_2}: \SHG[V] &\to \SHG[V_1]\otimes \SHG[V_2] \\
    h_1\otimes h_2 &\mapsto h_1\sqcup h_2 & h &\mapsto h|_{V_1}\otimes h/_S{V_2}, \nonumber
\end{align*}
where we also have $h|_V = \set{e\in h\, |\, e\subseteq V}$ and $h/_V= \set{e\cap V^c\, |\, e \nsubseteq V}$. The difference with $\HG$ is that here 
we are working with sets instead of multisets. So even if two edges $e_1$ and $e_1$ are such that $e_1\cap V= e_2\cap V=e$, there will only be one edge $e$
in $h/_V$.

As this structure is very similar to the one over hypergraphs, it is of no surprise that the polynomial invariants also have similar expression.

\begin{proposition}\label{prop_simple_hypergraphs}
    Let $\zeta$ be a character of $\SHG$, $V$ be a finite set and $h\in\SHG[V]$ a simple hypergraph. We then have:
    \begin{equation*}\begin{split}
        \chi^{\zeta}_V(h)(n) &= \sum_{f\in\acyclic_h} \zeta(f(h))\card{\scolorings{h,f,n}},\\
        \chi^{\zeta}_V(h)(-n) &= \sum_{f\in\acyclic_h} (-1)^{\cc(f(h))}\zeta(f(h))\card{\colorings{h,f,n}}. \nonumber
    \end{split}\end{equation*}
    If $\zeta$ is a characteristic function, then $\chi^{\zeta}(h)(n)$ is the number of strictly compatible pairs of acyclic $\zeta$-orientations of 
    $h$ and colorings with $[n]$. Furthermore, if $\zeta$ is odd $(-1)^{\card{V}}\chi^{\zeta}_V(h)(-n)$ is the number of compatible ones. In particular,
    $(-1)^{\card{V}}\chi^{\zeta}_V(h)(-1)$ is the number of acyclic $\zeta$-orientations of $h$.
\end{proposition}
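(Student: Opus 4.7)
The plan is to deduce both identities from Theorems~\ref{th_chromatic_polynomial} and~\ref{th_reciprocity_hypergraphs} via Proposition~\ref{prop_polynomial_invariant}, using the forgetful map $q:\HG\to\SHG$ sending a hypergraph (multiset of edges) to its underlying set of edges.

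First I would verify that $q$ is a morphism of Hopf monoids. Product compatibility is immediate: for disjoint $V_1$ and $V_2$, any edge of $h_1\in\HG[V_1]$ is distinct from any edge of $h_2\in\HG[V_2]$, so the underlying set of $h_1\sqcup h_2$ is the disjoint union of the underlying sets of $h_1$ and $h_2$. Coproduct compatibility holds because both $h|_V=\set{e\in h\,|\,e\subseteq V}$ and $h/_V=\set{e\cap V^c\,|\,e\nsubseteq V}$ depend only on which edges appear in $h$, not on their multiplicities, so $q$ commutes with restriction and contraction.

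Given a character $\zeta$ of $\SHG$, the composition $\zeta\circ q$ is then a character of $\HG$, and Proposition~\ref{prop_polynomial_invariant} yields $\chi^{\SHG,\zeta}\circ q=\chi^{\HG,\zeta\circ q}$. A simple hypergraph $h$ viewed as an element of $\HG[V]$ has no repeated edges, so $q(h)=h$ and consequently $\chi^{\SHG,\zeta}(h)=\chi^{\HG,\zeta\circ q}(h)$. Applying Theorem~\ref{th_chromatic_polynomial} to the right-hand side gives the first formula and Theorem~\ref{th_reciprocity_hypergraphs} gives the second, once we match up the combinatorial data.

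The main step to check, and essentially the only place requiring care, is precisely this matching. The notions of $\acyclic_h$, $\scolorings{h,f,n}$, $\colorings{h,f,n}$ and $\cc(f(h))$ all depend only on the underlying set of edges of $h$ (respectively $f(h)$), so they coincide under the $\HG$ and $\SHG$ interpretations. The only subtlety is that starting from a simple $h$, the multiset $f(h)=\set{f(e)\,|\,e\in h}$ computed in $\HG$ may contain repetitions whenever distinct $e,e'\in h$ satisfy $f(e)=f(e')$; but $q$ collapses these, so $(\zeta\circ q)(f(h))$ is exactly $\zeta$ evaluated on $f(h)$ viewed as a simple hypergraph, matching the right-hand sides of the claimed formulas. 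The interpretations for characteristic functions and odd characters then follow formally, exactly as in the proofs of Theorems~\ref{th_chromatic_polynomial} and~\ref{th_reciprocity_hypergraphs}.
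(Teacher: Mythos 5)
Your proof is correct and follows essentially the same route as the paper: the paper works with the forgetful morphism (called $\dom$ there) from $\HG$ to $\SHG$ together with its right inverse embedding $\text{mult}:\SHG\to\HG$, pulls the character back to $\zeta\circ\dom$, and invokes Proposition~\ref{prop_polynomial_invariant} exactly as you do. Your extra care in checking that the forgetful map is a Hopf morphism and that $\zeta$ applied to $f(h)$ as a set matches $(\zeta\circ q)$ applied to $f(h)$ as a multiset makes explicit a point the paper passes over silently, but it is the same argument.
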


\begin{proof}
    Let $\text{mult}$ be the species morphism from $\SHG$ to $\HG$ which sends a simple hypergraph on the same hypergraph. It is the right inverse of
    $\dom$ which sends a hypergraph to its domain. Let $\zeta$ be a character of $\SHG$, and remark that $\dom$ is a morphism of Hopf monoids so 
    $\zeta\circ\dom$ is a character of $\HG$.
    Then Proposition~\ref{prop_polynomial_invariant} with $\phi=\dom$ gives us, for $h$ a simple hypergraph
    \begin{equation*}
        \chi^{\HG,\zeta\circ\dom}(\text{mult}(h))(n) = \chi^{\SHG,\zeta}(\dom(\text{mult}(h)))(n) = \chi^{\SHG,\zeta}(h)(n).
    \end{equation*}
    The result follows, since $\acyclic_{\text{mult}(h)}=\acyclic_h$. Moreover, the same goes with (strictly) compatible colorings.
\end{proof}

\begin{corollary}
    Let $V$ be a finite set and $h\in\SHG[V]$ a hypergraph. Then $\chi^{\zeta_1}(h)(n)$ is the number of colorings of $V$ such that every 
    edge of $h$ has only one maximal vertex and $(-1)^{\card{V}}\chi^{\zeta_1}(h)(-n)$ is the number of compatible
    discrete acyclic orientation of $h$ and colorings with $[n]$. In particular, $(-1)^{\card{V}}\chi^{\zeta_1}(h)(-1)$ is the number
    of discrete acyclic orientations of $h$. 
\end{corollary}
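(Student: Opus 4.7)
The plan is to derive this as a direct specialization of Proposition~\ref{prop_simple_hypergraphs} to the basic character $\zeta_1$. The first step is to identify the $\zeta_1$-orientations, which requires unpacking the notion of discreteness in $\SHG$. Since $\SHG[\set{v}]$ consists only of $\emptyset$ and $\set{\set{v}}$, a discrete simple hypergraph on $V$ is precisely one in which every edge is a singleton. Hence an admissible orientation $f$ of $h$ satisfies $\zeta_1(f(h)) = 1$ if and only if $f(e)$ is a singleton for every edge $e$, that is, $f$ is a discrete orientation.

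With this identification in hand, a coloring $c$ is strictly compatible with a discrete acyclic $f$ exactly when each edge has a unique maximal vertex (necessarily $f(e)$), and conversely each such coloring determines a unique strictly compatible discrete acyclic orientation. The first part of Proposition~\ref{prop_simple_hypergraphs} therefore yields the claimed combinatorial meaning of $\chi^{\zeta_1}(h)(n)$ without further work. Similarly, $c$ is compatible with a discrete acyclic $f$ iff for each edge $e$ the vertex $f(e)$ has maximal color in $e$, which is what the statement over negative integers encodes.

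For the reciprocity interpretation to apply, I still need to verify that $\zeta_1$ is odd. A discrete simple hypergraph has only singleton edges, so each of its connected components consists of a single vertex and has odd size; consequently $\zeta_1(h) = 0$ as soon as $h$ has any connected component of even size, so $\zeta_1$ is odd. The specialization at $n=1$ then follows immediately, since every coloring with a single color is trivially compatible with every orientation, reducing the pair count to the number of discrete acyclic orientations of $h$. No substantive obstacle is anticipated: the whole corollary is the translation of Proposition~\ref{prop_simple_hypergraphs} through the explicit description of discreteness in $\SHG$.
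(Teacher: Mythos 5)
Your proposal is correct and follows exactly the route the paper intends: the corollary is stated without proof precisely because it is the specialization of Proposition~\ref{prop_simple_hypergraphs} to $\zeta_1$, mirroring the one-line arguments given for Corollary~\ref{cor_chrom_pol} and Corollary~\ref{cor_rec_th} in the hypergraph case (identify $\zeta_1$-orientations with discrete orientations, note that the unique strictly compatible orientation of a coloring is discrete iff every edge has one maximal vertex, and check oddness of $\zeta_1$ via the singleton connected components). Your write-up is, if anything, more careful than the paper in spelling out the description of discrete elements of $\SHG$ and the oddness verification.
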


%%%%%%%%%%%%%%%%%%%%%%%%%%%%%%%%%%%%%%%%%%%%%%%%%%%%%%%%%%%%%%%%%%%%%%%%%%%%%%%%%%%%%%%%%%%%%%%%%%%%%%%%%%%%%%%%%%%%%%%%%%%%%%%%%%%%%%%%%%%%%%%%%%%%%%%%%%%%%%%%%%%%%%%%%%%%%%
%%%%%%%%%%%%%%%%%%%%%%%%%%%%%%%%%%%%%%%%%%%%%%%%%%%%%%%%%%%%%%%%%%%%%%%%%%%%%%%% graphs %%%%%%%%%%%%%%%%%%%%%%%%%%%%%%%%%%%%%%%%%%%%%%%%%%%%%%%%%%%%%%%%%%%%%%%%%%%%%%%%%%%%%
%%%%%%%%%%%%%%%%%%%%%%%%%%%%%%%%%%%%%%%%%%%%%%%%%%%%%%%%%%%%%%%%%%%%%%%%%%%%%%%%%%%%%%%%%%%%%%%%%%%%%%%%%%%%%%%%%%%%%%%%%%%%%%%%%%%%%%%%%%%%%%%%%%%%%%%%%%%%%%%%%%%%%%%%%%%%%%

\subsection{Graphs}\label{cor_graphs}
Recall that A {\em graph over $V$} is a hypergraph whose edges are all of cardinality 2. The species $\G$ is not 
stable under the restriction in $\HG$ but it still admits a close Hopf monoid structure. The 
structure maps are given by
\begin{align*}
\mu_{V_1,V_2}: \G[V_1]\otimes\G[V_2] &\to\G[V] & \Delta_{V_1,V_2}:\G[V] &\to\G[V_1]\otimes\G[V_2] \\
g_1\otimes g_2 &\mapsto g_1\sqcup g_2 & g &\mapsto g|_{V_1}\otimes g/_{V_1}, \nonumber
\end{align*}
where $g/_{V_2}$ is defined in the same way as in $\HG$ and $g|_{V_1}=\set{e\in g\,|\, e\subseteq V_1}$ is the contraction of $V_2$ to $g$ as a 
hypergraph \textit{i.e.}  $g|_{V_1} = g/_{V_2}$.

Usually an orientation of a graph is what we call here a discrete orientation. We hence introduce the notion of partial orientation of a graph which 
we think is more intuitive than the notion of admissible orientation in the case of graphs.
A {\em partial orientation} of a graph $g$ is a discrete orientation of a sub-graph $h$ of $g$. The partial orientations of $g$ are in bijection 
with the admissible orientations of $g$ by the map $\kappa$ which sends a partial orientation $f$ on the admissible orientation $\kappa_f$ defined 
by $\kappa_f(e) = f(e)$, if $f(e)$ is defined, and $\kappa_f(e)=e$ else. For $f$ a partial orientation of $g$, we denote by $f(g)$ the sub-graph of 
$g$ formed of the non-oriented edges. One can think of it as if we followed the oriented edges while erasing them behind us. This is the same than 
the graph obtained by leaving aside the edges of size 1 in $\kappa_f(e)$. A partial orientation is {\em acyclic} if it is trivial (no edge is oriented) 
or it is not possible to complete it in order to obtain a cycle. It is equivalent to say that its image by $\kappa$ is acyclic and we will consider 
$\acyclic_g$ as the set of acyclic partial orientations on $g$ in the rest of this subsection. 

A coloring $c$ of $V$ is {\em strictly compatible (resp compatible)} with a partial orientation $f$ of $g$ if $f(e) = \max_c(e)$ (resp $f(e)\subseteq \max_c(e)$)
when $f(e)$ is defined and the rest of the edges are monochromatic, $\max_c(e)=e$.

\begin{example}
    We represent here a cyclic partial orientation and a coloring with $\set{{\color{Blue}1},{\color{Red}2}}$ along its strictly compatible partial 
    orientation.
    \begin{equation*}
        \begin{tabular}{cc}
            \begin{tikzpicture}[Centering,scale=1.2]
                \node[NodeHyper](a)at(0.5,1){$a$};
                \node[NodeHyper](c)at(1.5,1){$c$};
                \node[NodeHyper](b)at(0,0){$b$};
                \node[NodeHyper](d)at(1,0){$d$};
                \draw[EdgeHyper](b)--(a);
                \draw[EdgeHyper,->](b)--(c);
                \draw[EdgeHyper,->](b)--(d);
                \draw[EdgeHyper,->](c)--(a);
                \draw[EdgeHyper](c)--(d);
            \end{tikzpicture}
            &
            \begin{tikzpicture}[Centering,scale=1.2]
                \node[NodeHyper,draw=Red, fill=Red!20](a)at(0.5,1){$a$};
                \node[NodeHyper,draw=Red, fill=Red!20](c)at(1.5,1){$c$};
                \node[NodeHyper,draw=Blue, fill=Blue!20](b)at(0,0){$b$};
                \node[NodeHyper,draw=Red, fill=Red!20](d)at(1,0){$d$};
                \draw[EdgeHyper, ->](b)--(a);
                \draw[EdgeHyper,->](b)--(c);
                \draw[EdgeHyper,->](b)--(d);
                \draw[EdgeHyper](c)--(a);
                \draw[EdgeHyper](c)--(d);
            \end{tikzpicture} \\
            \text{\small{A cyclic partial orientation}}
            &
            \text{\small{A coloring and its strictly compatible partial orientation}}
        \end{tabular}
    \end{equation*}
    % \begin{center}
    %     \begin{tabular}{cc}
    %         \fig[1.5]{hyperAA/ex_partor1.pdf} \hspace{1cm}&
    %         \hspace{1cm}\fig[1.5]{hyperAA/ex_partor2.pdf} \\
    %         \small{A cyclic partial orientation}\hspace{1cm}&
    %         \hspace{1cm}\small{A coloring and its stricly compatible parital orientation}
    %     \end{tabular}
    % \end{center}
\end{example}

For $\zeta$ a character of $\G$, a $\zeta$-partial orientation of $g$ is a partial orientation $f$ such that $\zeta(f(g))\not=0$. A character $\zeta$ 
is odd if the connected graphs on which it is not null have an odd number of vertices.

\begin{remark}
    In the literature, the preferred notion is that of pairs of flats $F$ and discrete acyclic orientations of the quotient graph $g/F$. A {\em flat} $F$ of a
    graph $g$ is a sub-graph of $g$ of the form $\mu_D\circ\Delta_D(g)=g|_{D_1}\sqcup\dots\sqcup g|_{D_n}$ for $D\vDash g$. The {\em quotient}
    $g/F$ is then the graph obtained by deleting the edges in $F$ and merging all the vertices which shared a connected component in $F$. The bijection
    with acyclic orientations of $g$ is again easy to see: send a pair $(F,a)$ of a flat and a discrete acyclic orientation of $g/F$ on the acyclic orientation
    $f$ defined by $f(e)=a(e)$ if $e\not\in F$ and $f(e)=e$ else.

    We preferred the notion of acyclic partial orientation which is more coherent in our context. All our results over hypergraphs were expressed in terms
    of acyclic orientations and not pairs of flat and acyclic orientation, as is done in \cite{HypB}. Still, we also give our result in term of flats
    for the sake of completeness. For $\zeta$ a character of $\G$, a {\em $\zeta$-flat} of $g$ is a flat on which $\zeta$ is not null.
\end{remark}

\begin{proposition}
    Let $\zeta$ be a character of $\G$, $V$ be a finite set and $g\in\G[V]$ a graph. We then have:
    \begin{equation*}\begin{split}
        \chi^{\zeta}_V(g)(n) &= \sum_{f\in\acyclic_g} \zeta(f(g))\card{\scolorings{g,f,n}} 
        = \sum_{F\in\text{Flats}(g)}\sum_{a\in\acyclic_{g/F}}\zeta(F)\card{\scolorings{g/F,a,n}},\\
        \chi^{\zeta}_V(g)(-n) &= \sum_{f\in\acyclic_g} (-1)^{\cc(f(g))}\zeta(f(g))\card{\colorings{g,f,n}} 
        = \sum_{F\in\text{Flats}(g)}\sum_{a\in\acyclic_{g/F}}\zeta(F)\card{\colorings{g/F,a,n}}.
    \end{split}\end{equation*}
    If $\zeta$ is a characteristic function, then $\chi^{\zeta}(g)(n)$ is the number of strictly compatible pairs of acyclic $\zeta$-partial orientations 
    of $g$ and colorings with $[n]$. Furthermore, if $\zeta$ is odd $(-1)^{\card{V}}\chi^{\zeta}_V(g)(-n)$ the number of compatible ones. In particular,
    $(-1)^{\card{V}}\chi^{\zeta}_V(g)(-1)$ is the number of acyclic $\zeta$-partial orientations of $g$.
\end{proposition}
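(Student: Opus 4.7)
The plan is to prove both formulas by unfolding the Hopf monoid definition of $\chi^{\G,\zeta}$ and then combining with a computation of $\antipode_{\G}$ via the graphical zonotope.

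First I would expand $\chi^{\G,\zeta}_V(g)(n) = \sum_D \zeta(\mu_D\circ\Delta_D(g))$ over length-$n$ decompositions $D$ of $V$, equivalently over colorings $c:V\to[n]$. Iterating the deletion co-product of $\G$ shows that $\mu_D\circ\Delta_D(g)$ equals the monochromatic subgraph $F(c)=\set{e\in g : c\text{ is constant on }e}$, which is by definition a flat. To each coloring $c$ I associate the acyclic partial orientation $f_c$ that orients every non-monochromatic edge toward its maximum-color endpoint and leaves every monochromatic edge unoriented; then $f_c(g)=F(c)$ and $f_c$ is the unique acyclic partial orientation with which $c$ is strictly compatible. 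Summing over $c$ and regrouping by $f=f_c$ yields the first equality. The equivalent $(F,a)$ formulation follows from the standard bijection $f\leftrightarrow(f(g),a)$ with $a$ the discrete orientation of $g/f(g)$ induced by $f$: this bijection identifies $\scolorings{g,f,n}$ with $\scolorings{g/F,a,n}$ because a coloring of $V$ strictly compatible with $f$ is exactly a coloring monochromatic on each component of $F$ whose quotient on $g/F$ is strictly compatible with $a$. The characteristic-function statement is then immediate.

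For the reciprocity I would first establish a graph analogue of Lemma~\ref{lemma_acyclic_faces}: the graphical zonotope $Z_g=\sum_{\{u,v\}\in g}\Delta_{\{u,v\}}\in\HGP[V]$ has faces in bijection with $\acyclic_g$, each face being a translate of $Z_{f(g)}$, and the corresponding (strict) normal cones intersected with $[n]^V$ coincide with $\scolorings{g,f,n}$ and $\colorings{g,f,n}$. Plugging this into Theorem~\ref{th_antipode_gp} yields the formula $\antipode_{\G}(g)=\sum_{f\in\acyclic_g}(-1)^{\cc(f(g))}f(g)$, and then Theorem~\ref{th_polynomial_invariant}.3 together with the positive-$n$ formula delivers the first reciprocity identity. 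The $(F,a)$ reformulation follows from the same bijection as above. For the odd-character statement, one verifies just as in subsection~\ref{self_contained_rec_th} that $\card{V}-\cc(f(g))$ is even whenever $\zeta(f(g))\neq 0$, so $(-1)^{\card{V}}\chi^{\zeta}(g)(-n)$ collects a positive count of compatible pairs.

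The main obstacle I expect is that $g\mapsto Z_g$ is not literally a Hopf-monoid morphism $\G\to\GP$: an edge $\{u,v\}$ crossing a decomposition $V_1\sqcup V_2$ contributes a translation to each $\GP$ co-product piece, whereas the $\G$ co-product simply discards the edge. Hence Proposition~\ref{prop_polynomial_invariant} cannot be invoked verbatim as in the hypergraph case. I expect this resolves by noting that Minkowski translation preserves normal fans, face lattices, and all the relevant cardinalities, so the polytopal bookkeeping still produces the right counts. Should that route prove delicate, the self-contained argument of subsection~\ref{self_contained_rec_th} transports mutatis mutandis to graphs: an analogue of Proposition~\ref{prop_colorings} holds verbatim, and Lemma~\ref{lemma_composition_sum} applied to the directed acyclic graph on connected components of $f(g)$ collapses the nested signed sum to the advertised form.
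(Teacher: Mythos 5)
Your argument is essentially correct, but it takes a genuinely different and much longer route than the paper. The paper's entire proof is a reduction: it introduces the Hopf sub-monoid $\HG_{\leq 2}$ of hypergraphs with edges of size at most $2$, observes that the map $s:\HG_{\leq 2}\to\G$ forgetting edges of size $1$ is a morphism of Hopf monoids, and then invokes Proposition~\ref{prop_polynomial_invariant} to pull back everything from Theorem~\ref{th_chromatic_polynomial} and Theorem~\ref{th_reciprocity_hypergraphs}; the translation from admissible orientations of the hypergraph to partial orientations of the graph is handled once and for all by the bijection $\kappa$ set up just before the proposition. You instead re-derive the positive-$n$ formula from scratch (your computation that $\mu_D\circ\Delta_D(g)$ is the monochromatic flat and that each coloring has a unique strictly compatible acyclic partial orientation is correct, and mirrors the paper's direct proof in the hypergraph case), and your flat/orientation dictionary $f\leftrightarrow(f(g),a)$ is the same as the one in the paper's preceding remark. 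What your route buys is self-containedness; what the paper's route buys is brevity and the automatic inheritance of the antipode bookkeeping.

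The one step you should be more careful about is the primary route you propose for reciprocity. You correctly identify that $g\mapsto Z_g$ is not a Hopf monoid morphism $\G\to\GP$, but your proposed fix (``translation preserves normal fans, so the bookkeeping still works'') is a heuristic, not an argument: Theorem~\ref{th_antipode_gp} computes $\antipode_{\GP}(Z_g)$ as a signed sum of \emph{translates} of the polytopes $Z_{f(g)}$, and without a morphism you cannot transport that identity to $\antipode_{\G}(g)$, nor invoke Proposition~\ref{prop_polynomial_invariant} to equate $\chi^{\G,\zeta}(g)$ with $\chi^{\GP,\zeta'}(Z_g)$. To make this route rigorous you would need to define $\zeta'$ as a translation-invariant character supported on translates of graphical zonotopes and check it is well defined and multiplicative. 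The paper's interposition of $\HG_{\leq 2}$ is precisely the device that absorbs the offending translations at the combinatorial level (a crossing edge becomes a singleton edge, which $\Delta:h\mapsto\Delta_h$ sends to the translating point, and which $s$ then discards), so the composite $\G\xleftarrow{\;s\;}\HG_{\leq 2}\hookrightarrow\HG\xrightarrow{\;\Delta\;}\GP$ is a chain of honest morphisms. Your stated fallback, transporting the self-contained Faulhaber argument of subsection~\ref{self_contained_rec_th}, does close the gap, so the proof is salvageable as written; but the clean resolution is the paper's one-line reduction.
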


\begin{proof}
    Let $\HG_{\leq 2}$ be the sub-species of $\HG$ of hypergraphs with edges of size at most 2. The species $\HG_{\leq 2}$ is stable under
    product and co-product and is hence a Hopf sub-monoid. Let $s:\HG_{\leq 2}\to\G$ be the species morphism which forget edges of size 1. 
    Then $s:\HG_{\leq 2}\to\G$ is a morphism of Hopf monoid. Let $\zeta$ be a character of $\G$. Proposition~\ref{prop_polynomial_invariant} gives us, 
    for $g$ a graph,
    \begin{equation*}
        \chi^{\HG_{\leq 2},\zeta\circ s}(g)(n) = \chi^{\G,\zeta}(s(g))(n)=\chi^{\G,\zeta}(g).
    \end{equation*} 
    This concludes the proof.
\end{proof}

A {\em proper coloring} of a graph is a coloring such that no edge has its two vertices of the same color. The {\em chromatic polynomial} 
of a graph $g$ is the polynomial $T_g$ such that $T_g(n)$ is the number of proper colorings with $n$ colors.

\begin{corollary}[Proposition 18.1 in \cite{AA}]
    Let $g$ be a graph. Then $\chi^{\G,\zeta_1}(g)(n)=T_g(n)$.
\end{corollary}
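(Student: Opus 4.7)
The plan is to apply the preceding proposition specialized to $\zeta_1$ and then identify the combinatorial objects counted with proper colorings. First I would observe that in $\G$, the discrete elements are exactly the edgeless graphs, so $\zeta_1$ is the characteristic function of graphs with no edges. Since $\zeta_1$ takes values in $\set{0,1}$, the proposition tells us that $\chi^{\G,\zeta_1}(g)(n)$ counts strictly compatible pairs $(f,c)$ where $f\in\acyclic_g$ is a $\zeta_1$-partial orientation and $c$ is a coloring of $V$ with $[n]$.

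Next I would unpack what a $\zeta_1$-partial orientation means: by definition this requires $\zeta_1(f(g))\neq 0$, and $f(g)$ is the sub-graph formed by the unoriented edges of $f$; for this to be edgeless, $f$ must orient every edge of $g$, i.e., $f$ is a full discrete acyclic orientation of $g$. Strict compatibility of such an $f$ with a coloring $c$ then says $f(e)=\max_c(e)$ for every edge $e$, which forces each edge to have a unique vertex of maximum color, hence the two endpoints to have distinct colors: $c$ is a proper coloring.

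The final step is to exhibit the bijection between the counted pairs and proper colorings. Sending a pair $(f,c)$ to $c$ is clearly injective since $f=\max_c$ is recovered from $c$. Conversely, given a proper coloring $c$, define $f:=\max_c$; strict compatibility is immediate, and $f$ is discrete since each edge has a unique maximum, and acyclic because it is induced by an $\R$-valued function on vertices (this acyclicity was already observed in the paragraph following Definition~\ref{def_orientations}). Thus the number of pairs equals $T_g(n)$.

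There is no real obstacle here: the only thing to watch is to correctly identify the discrete elements of $\G$ and to check that $\max_c$ lies in $\acyclic_g$, which is a one-line verification. The corollary then falls out of the proposition essentially by definition-chasing.
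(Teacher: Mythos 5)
Your proof is correct. It takes a mildly different route from the paper's: the paper simply quotes Corollary~\ref{cor_chrom_pol} (the already-established interpretation of $\chi^{\zeta_1}$ for hypergraphs as counting colorings in which every edge has a unique maximal vertex, transported to $\G$ via the morphism $s:\HG_{\leq 2}\to\G$) and then observes that for edges of size $2$ this is exactly properness; you instead start from the graph proposition's pair-counting formulation, identify the discrete elements of $\G$ as the edgeless graphs, deduce that the acyclic $\zeta_1$-partial orientations are precisely the full discrete acyclic orientations, and then exhibit the bijection $(f,c)\mapsto c$ onto proper colorings, recovering $f=\max_c$ in the other direction. Both arguments are definition-chases ending at the same observation; yours is slightly longer but self-contained within the graphs subsection and makes explicit the orientation side of the count (in particular why $\max_c$ is discrete and acyclic for a proper coloring), whereas the paper's is a one-line citation that leaves the transfer from $\HG$ to $\G$ implicit. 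All the claims you make along the way (edgeless graphs as discrete elements, $f$ orienting every edge, acyclicity of $\max_c$) check out against the paper's definitions.
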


\begin{proof}
    From Corollary~\ref{cor_chrom_pol} we know that $\chi^{\G,\zeta_1}(g)(n)$ is the number of colorings with $[n]$ such that each edge has
    a unique maximal vertex. In the case of a graph, this is equivalent to saying that no edge has its two vertices of the same color, \textit{i.e.}  it
    is a proper coloring.
\end{proof}

In particular, by evaluating $\chi^{\G,\zeta_1}$ on non positive integers, we recover the classical reciprocity theorem of Stanley \cite{stan}.

\begin{remark}
    As was the case for simple hypergraphs and hypergraphs, the polynomial invariants of the Hopf sub-monoid of simple graphs $\SG$ of $\SHG$
    admit the same formulas than the polynomial invariants defined there.
\end{remark}

%%%%%%%%%%%%%%%%%%%%%%%%%%%%%%%%%%%%%%%%%%%%%%%%%%%%%%%%%%%%%%%%%%%%%%%%%%%%%%%%%%%%%%%%%%%%%%%%%%%%%%%%%%%%%%%%%%%%%%%%%%%%%%%%%%%%%%%%%%%%%%%%%%%%%%%%%%%%%%%%%%%%%%%%%%%%%%
%%%%%%%%%%%%%%%%%%%%%%%%%%%%%%%%%%%%%%%%%%%%%%%%%%%%%%%%%%%%%%%%%%%%%%%%%% simplicial complexes %%%%%%%%%%%%%%%%%%%%%%%%%%%%%%%%%%%%%%%%%%%%%%%%%%%%%%%%%%%%%%%%%%%%%%%%%%%%%%%
%%%%%%%%%%%%%%%%%%%%%%%%%%%%%%%%%%%%%%%%%%%%%%%%%%%%%%%%%%%%%%%%%%%%%%%%%%%%%%%%%%%%%%%%%%%%%%%%%%%%%%%%%%%%%%%%%%%%%%%%%%%%%%%%%%%%%%%%%%%%%%%%%%%%%%%%%%%%%%%%%%%%%%%%%%%%%%

\subsection{Simplicial complexes}
In \cite{sc} Benedetti, Hallam, and Machacek constructed a combinatorial Hopf algebra of simplicial complexes. In particular they obtained 
results over some polynomial invariant which we generalize in this subsection. 

An {\em abstract simplicial complex}, or simplicial complex, on $V$ is a collection $C$ of subsets of $V$, called {\em faces}, such that any 
non empty subset of a face is a face \textit{i.e.}  $I \in C$ and $\emptyset\subsetneq J \subset I$ implies $J \in C$. We denote by $\SC$ the species 
of simplicial complexes. Proposition 21.1 of \cite{AA}, states that the species $\SC$ of simplicial complexes is a sub-monoid of $\SHG$.

Let us now give a simple lemma which will be useful in this subsection and the next one, see Figure~\ref{fig_lemma_sc} for an example
of what this lemma is about.

\begin{lemma}\label{lemma_inclusion}
    Let $V$ be finite set, $h\in\SHG[V]$ be a simple hypergraph and $f$ an acyclic orientation of $h$. Let and $e$ and $e'$ two edges of $h$ of size 
    at least 2 such that $e'\subset e$. Then if $f(e)\cap e'\noemp$, necessarily $f(e)\cap e' = f(e')$.
\end{lemma}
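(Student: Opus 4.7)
The plan is to argue by contradiction: suppose $f(e) \cap e' \noemp$ but $f(e) \cap e' \neq f(e')$, and exhibit a forbidden 2-cycle in $f$. Set $A = f(e) \cap e'$ and $B = f(e')$; both are nonempty subsets of $e'$, and by assumption $A \neq B$. I would rely on two elementary translations: $A \not\subseteq B$ is equivalent to $f(e) \cap f_s(e') \noemp$, while $B \not\subseteq A$ is equivalent (using $B \subseteq e' \subsetneq e$, so $f(e') \cap e = f(e')$) to $f(e') \cap f_s(e) \noemp$.

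Since $A \neq B$, at least one of the two non-inclusions must hold, and I would case-split accordingly. If both $A \not\subseteq B$ and $B \not\subseteq A$, then the sequence $(e_1, e_2) = (e, e')$ directly witnesses a 2-cycle: the middle condition is supplied by $f(e) \cap f_s(e') \noemp$ and the closing condition by $f(e') \cap f_s(e) \noemp$. If only $A \not\subseteq B$ (so $B \subsetneq A \subseteq f(e)$), I would instead use $(e', e)$: the middle alternative $\emptyset \subsetneq f(e') \cap e \subsetneq f(e)$ holds because $f(e') \subseteq e$ and $B \subsetneq f(e)$, while $f(e) \cap f_s(e') \noemp$ supplies the closing condition. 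Symmetrically, if only $B \not\subseteq A$ (so $A \subsetneq B$), the sequence $(e, e')$ works via $\emptyset \subsetneq f(e) \cap e' \subsetneq f(e')$ as the middle condition and $f(e') \cap f_s(e) \noemp$ as the closing one.

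In every case the resulting 2-cycle contradicts the acyclicity of $f$, forcing $f(e) \cap e' = f(e')$. The main subtlety, and the only reason any case analysis is needed at all, is the asymmetry in the definition of acyclicity: the closing condition only admits the intersection-with-$f_s$ alternative, not the $\subsetneq$-variant, so one must orient the 2-cycle (start at $e$ vs.\ at $e'$) according to which of the inclusions between $A$ and $B$ has failed.
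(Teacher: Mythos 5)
Your proof is correct and is essentially the paper's own argument: both reduce the failure of either inclusion between $f(e)\cap e'$ and $f(e')$ to a forbidden two-edge cycle $(e,e')$ or $(e',e)$, using exactly the same translations of $A\not\subseteq B$ and $B\not\subseteq A$ into $f(e)\cap f_s(e')\noemp$ and $f(e')\cap f_s(e)\noemp$, with the strict-inclusion alternative supplying the middle condition when only one inclusion fails. The only difference is organizational, as the paper proves the two inclusions separately by contradiction while you run a single three-way case split.
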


\begin{center}
\begin{equation}\label{fig_lemma_sc}
    \begin{tikzpicture}[Centering,scale=1.2]
        \node[NodeHyper](a)at(-2,0.5){$a$};
        \node[NodeHyper](b)at(-0.5,2){$b$};
        \node[NodeHyper](c)at(0,0){$c$};
        \draw[EdgeHyper](a)edge[bend left=20](c);
        \draw[EdgeHyper](a)edge[bend right=20](b);
        \draw[EdgeHyper](b)edge[bend right=20](c);
        \draw[EdgeHyper,->](-0.8,0.8)--(-0.65,1.4);
        \draw[EdgeHyper](-0.8,0.8)--(-1.4,0.65);
        \draw[EdgeHyper,->](-0.8,0.8)--(-0.4,0.4);
    \end{tikzpicture}
    \qquad\qquad
    \begin{tikzpicture}[Centering,scale=1.2]
        \node[NodeHyper](a)at(-2,0.5){$a$};
        \node[NodeHyper](b)at(-0.5,2){$b$};
        \node[NodeHyper](c)at(0,0){$c$};
        \draw[EdgeHyper](a)edge[bend left=20](c);
        \draw[EdgeHyper](a)edge[bend right=20](b);
        \draw[EdgeHyper](b)edge[bend right=20](c);
        \draw[EdgeHyper,<->](b)edge[bend left=20](c);
        \draw[EdgeHyper](-0.8,0.8)--(-0.65,1.4);
        \draw[EdgeHyper](-0.8,0.8)--(-1.4,0.65);
        \draw[EdgeHyper,->](-0.8,0.8)--(-0.4,0.4);
    \end{tikzpicture}
    \end{equation}
    \small{Two counter examples of Lemma~\ref{lemma_inclusion}. We see that we have a cycle in both cases.}
\end{center}

\begin{proof}
    Let $e$ and $e'$ be two such edges. Suppose there exists $v\in f(e')$ such that $v\not\in f(e)\cap e'$. Then $f(e')\cap f_s(e)\noemp$ and 
    since $f(e)\cap e'\noemp$, we have either $f(e)\cap f_s(e')\noemp$ or the strict inclusions $\emptyset\subsetneq f(e)\cap f(e') \subsetneq f(e')$. 
    This makes the sequence $e',e$ 
    a cycle. Hence $f(e')\subseteq f(e)\cap e'$. Suppose now there exists $v\in f(e)\cap e'$ such that $v\not\in f(e')$. Then similarly to the
    previous case, $f(e)\cap f_s(e')\noemp$ and either $f(e')\cap f_s(e)$ or $\emptyset\subsetneq f(e')\cap f(e)\subsetneq f(e)$. Hence
    $f(e)\cap e' \subseteq f(e')$ and so $f(e')=f(e)\cap e'$.
\end{proof}

The {\em 1-skeleton} of a simplicial complex is the simple graph formed by its faces of cardinality 2. 

% \begin{figure}[htbp]\label{ex_skeleton}
%     \begin{center}
%         \fig[0.8]{hyperAA/ex_simpcomp1}
%         \caption{A simplicial complex with its 1-skeleton in red.}
%     \end{center}
% \end{figure}

\begin{lemma}\label{lemma_sc}
    Let be $V$ a finite set, $C\in\SC[V]$ and $g$ the 1-skeleton of $C$. Then $\acyclic_C\cong\acyclic_g$.
\end{lemma}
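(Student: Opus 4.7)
The plan is to construct an explicit bijection $\Phi : \acyclic_C \to \acyclic_g$ by restriction, $f \mapsto f|_g$, where $f|_g$ denotes the restriction of $f$ to the 2-element faces (the edges of $g$). The crucial enabling fact is that since $C$ is a simplicial complex, every face $e \in C$ with $\card{e} \geq 3$ has all its 2-element subsets in $C$, hence in $g$, so Lemma~\ref{lemma_inclusion} applies to each pair $\set{v, v'} \subsetneq e$.

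The first step is to establish, for every $f \in \acyclic_C$ and every $e \in C$, the pointwise formula
\[
    f(e) = \set{v \in e \,|\, v \in f(\set{v, v'}) \text{ for every } v' \in e \setminus \set{v}}
\]
(which is trivial for $\card{e} \leq 2$). The inclusion $\subseteq$ follows from Lemma~\ref{lemma_inclusion}: for $v \in f(e)$ and $v' \in e \setminus \set{v}$, the intersection $f(e) \cap \set{v, v'}$ contains $v$ and is therefore equal to $f(\set{v, v'})$, so $v \in f(\set{v, v'})$. The reverse inclusion is a short contradiction argument: if $v$ satisfies the right-hand side but $v \notin f(e)$, pick any $w \in f(e)$ and apply Lemma~\ref{lemma_inclusion} to $\set{v, w} \subsetneq e$ to get $f(e) \cap \set{v, w} = f(\set{v, w}) \ni v$, contradicting $v \notin f(e)$. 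From the formula, $f$ is entirely determined by $f|_g$, so $\Phi$ is injective; it is well defined because any cycle of $f|_g$ in $g$ would be automatically a cycle of $f$ in $C$.

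For surjectivity, given $\tilde f \in \acyclic_g$, I would use that any acyclic orientation of a hypergraph admits at least one strictly compatible coloring (this is a consequence of Proposition~\ref{prop_colorings} applied to $g$, or of Lemma~\ref{lemma_acyclic_faces} by picking a direction in the strict normal cone of $\Delta_{\tilde f(g)}$). Fixing such a $c$, so that $\tilde f = \max_c$ on $g$, define $f := \max_c$ on the whole of $C$. Then $f$ is an admissible orientation of $C$, it is acyclic (any $\max_c$ is, as observed just after Definition~\ref{def_orientations}), and $f|_g = \tilde f$ by the strict compatibility of $c$ with $\tilde f$. That $f$ depends only on $\tilde f$ (not on the choice of $c$) is immediate from the formula above, since $c(v) \geq c(v')$ is equivalent to $v \in \max_c(\set{v, v'}) = \tilde f(\set{v, v'})$; hence $\max_c(e) = \set{v \in e \,|\, c(v) \geq c(v')\ \forall v' \in e}$ matches the right-hand side of the pointwise formula.

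The main obstacle, and the place where the simplicial complex hypothesis is essential, is precisely the pointwise formula for $f(e)$: without closure under non-empty subsets the 2-element subsets of $e$ need not belong to the hypergraph, Lemma~\ref{lemma_inclusion} would not apply, and $f(e)$ could genuinely encode information beyond what $f|_g$ records. Once this formula is in hand, both directions of the bijection are essentially bookkeeping.
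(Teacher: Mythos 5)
Your proof is correct and follows the same route as the paper: the bijection is restriction to the 1-skeleton, with Lemma~\ref{lemma_inclusion} showing that an acyclic orientation of $C$ is determined by its values on the faces of size $2$. You actually supply more than the paper does, since your explicit pointwise formula for $f(e)$ and your construction of the inverse via a strictly compatible coloring (turning $\tilde f=\max_c$ on $g$ into $\max_c$ on all of $C$) make the surjectivity precise where the paper only asserts it in one sentence.
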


\begin{proof}
    The fact that every acyclic orientation of $C$ gives rise to an acyclic orientation of $g$ is clear: if $f\in\acyclic_C$ then a cycle
    in $f_{|g}$ is also a cycle in $f$ and hence it is not possible. This is a bijection because from Lemma~\ref{lemma_inclusion} an orientation
    of a simplicial complex only needs to be defined on its faces of size 2.
\end{proof}

For $C$ a simplicial complex and $f$ an acyclic orientation of its 1-skeleton, we will also denote by $f$ the image of $f$ by this
bijection.

\begin{proposition}
    Let $\zeta$ be a character of $\SC$, $V$ be a finite set, $C\in\SC[V]$ be a simplicial complex and $g$ be the 1-skeleton of $C$. We then have:
    \begin{equation*}\begin{split}
        \chi^{\zeta}_V(C)(n) &= \sum_{f\in\acyclic_g} \zeta(f(C))\card{\scolorings{g,f,n}};\\
        \chi^{\zeta}_V(C)(-n) &= \sum_{f\in\acyclic_g} (-1)^{\cc(f(g))}\zeta(f(C))\card{\colorings{g,f,n}}.
    \end{split}\end{equation*}
    If $\zeta$ is a characteristic function, then $\chi^{\zeta}(C)(n)$ is the number of strictly compatible pairs of acyclic $\zeta$-orientations of 
    $C$ and colorings with $[n]$. Furthermore, if $\zeta$ is odd $(-1)^{\card{V}}\chi^{\zeta}_V(C)(-n)$ is the number of compatible ones. In particular,
    $(-1)^{\card{V}}\chi^{\zeta}_V(C)(-1)$ is the number of acyclic $\zeta$-orientations of $C$.
\end{proposition}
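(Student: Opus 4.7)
The plan is to follow the general strategy announced at the beginning of Section~\ref{sec_other_hopf}: use the fact that $\SC$ is a Hopf sub-monoid of $\SHG$ (Proposition 21.1 of \cite{AA}) to transfer the formulas of Proposition~\ref{prop_simple_hypergraphs} to $\SC$, and then reindex the resulting sums using Lemma~\ref{lemma_sc}. Concretely, since the inclusion $\iota:\SC\hookrightarrow\SHG$ is a morphism of Hopf monoids, a character $\zeta$ of $\SC$ lifts to a character $\zeta'$ of $\SHG$ satisfying $\zeta'\circ\iota=\zeta$. Proposition~\ref{prop_polynomial_invariant} gives $\chi^{\SC,\zeta}=\chi^{\SHG,\zeta'}\circ\iota$, and Proposition~\ref{prop_simple_hypergraphs} applied to $C$ (seen as a simple hypergraph) yields
\begin{equation*}
    \chi^{\SC,\zeta}_V(C)(n)=\sum_{f\in\acyclic_C}\zeta(f(C))\card{\scolorings{C,f,n}},
\end{equation*}
and similarly for $\chi^{\SC,\zeta}_V(C)(-n)$.

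It then remains to reindex the sum through the bijection $\acyclic_C\cong\acyclic_g$ of Lemma~\ref{lemma_sc}. Writing $\tilde f\in\acyclic_C$ for the unique lift of $f\in\acyclic_g$, the character term matches tautologically, $\zeta(\tilde f(C))=\zeta(f(C))$ once we view both orientations as acting on $C$. The point that must be checked is that the (strictly) compatible colorings correspond: $\scolorings{C,\tilde f,n}=\scolorings{g,f,n}$ and $\colorings{C,\tilde f,n}=\colorings{g,f,n}$. The inclusion $\scolorings{C,\tilde f,n}\subseteq\scolorings{g,f,n}$ is immediate by restricting the strict compatibility condition to faces of size $2$.

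The main obstacle, and the only nontrivial step, is the reverse inclusion: given $c$ strictly compatible with $f$ on $g$, one must show that for every face $e\in C$ of size at least $3$, $\tilde f(e)=\max_c(e)$. The key leverage is Lemma~\ref{lemma_inclusion}: for any pair $\{v,w\}\subset e$, which is again a face of $C$ since $C$ is a simplicial complex, whenever $\tilde f(e)\cap\{v,w\}\noemp$ we have $\tilde f(e)\cap\{v,w\}=\tilde f(\{v,w\})=f(\{v,w\})=\max_c(\{v,w\})$. A two-line case analysis on which of $v,w$ belong to $\tilde f(e)$ then shows that all vertices of $\tilde f(e)$ share a common color strictly greater than the color of every vertex in $e\setminus\tilde f(e)$, so $\tilde f(e)=\max_c(e)$ as required. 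The same argument, with $=$ replaced by $\subseteq$ throughout, yields $\colorings{C,\tilde f,n}=\colorings{g,f,n}$ and thus the reciprocity formula. The characteristic-function statements are inherited directly from their counterparts in Proposition~\ref{prop_simple_hypergraphs}.
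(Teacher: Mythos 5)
Your proof is correct and follows essentially the same route as the paper: reduce to Proposition~\ref{prop_simple_hypergraphs} via the inclusion $\SC\hookrightarrow\SHG$ and reindex the sum through the bijection $\acyclic_C\cong\acyclic_g$ of Lemma~\ref{lemma_sc}. The only difference is that you spell out, via Lemma~\ref{lemma_inclusion}, why $\scolorings{C,\tilde f,n}=\scolorings{g,f,n}$ and $\colorings{C,\tilde f,n}=\colorings{g,f,n}$, an identity the paper's proof simply asserts.
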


\begin{proof}
    This is just a direct application of Lemma~\ref{lemma_sc}. We just observe that while we do have $\scolorings{C,f,n}=\scolorings{g,f,n}$,
    $\colorings{C,f,n}=\colorings{g,f,n}$ and $\cc(f(C))=\cc(f(g))$, we do not have $f(C)=f(g)$, $f(g)$ being the 1-skeleton of $f(C)$. Hence we
    can not replace $\zeta(f(C))$ and $\zeta$-orientation of $C$ with $\zeta(f(g))$ and $\zeta$-orientation of $g$.
\end{proof}

We say that a character $\zeta$ of $\SC$ is {\em downward compatible} if $\zeta(C)=\zeta(g)$ for any simplicial complex $C$ and $g$ its 1-skeleton.
The map which add to a graph  all the edges of size 1 is an injective map from $\SG$ to $\SHG$. We consider $\SG$ as a sub-species of $\SHG$
under this map.

\begin{corollary}
    Let $\zeta$ be a downward compatible character of $\SC$, $C\in\SC[V]$ be a simplicial complex and $g$ 
    be the 1-skeleton of $C$. Then $\chi^{\SHG,\zeta}_V(C)(n)=\chi^{\SG,\zeta}_V(g)(n)$.
\end{corollary}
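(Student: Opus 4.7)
The plan is to apply the preceding proposition for $\SC$ to $C$ and the analogous graph-case proposition from Section~\ref{cor_graphs} to its $1$-skeleton $g$, and then to match the two resulting sums term by term. Because $\SC$ is a Hopf sub-monoid of $\SHG$, Proposition~\ref{prop_polynomial_invariant} identifies $\chi^{\SHG,\zeta}_V(C)(n)$ with $\chi^{\SC,\zeta}_V(C)(n)$, and the two propositions then give
\begin{equation*}
    \chi^{\SC,\zeta}_V(C)(n) = \sum_{f\in\acyclic_g}\zeta(f(C))\card{\scolorings{g,f,n}}
    \qquad\text{and}\qquad
    \chi^{\SG,\zeta}_V(g)(n) = \sum_{f\in\acyclic_g}\zeta(f(g))\card{\scolorings{g,f,n}},
\end{equation*}
where $\zeta$ is restricted to $\SG$ via the ``add all singletons'' embedding $\SG\hookrightarrow\SC$. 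Both sums are indexed by the same set $\acyclic_g$ (by Lemma~\ref{lemma_sc}) with identical coefficients, so the corollary reduces to the term-wise equality $\zeta(f(C))=\zeta(f(g))$ for every $f\in\acyclic_g$.

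Next I would check that $f(C)$ is itself a simplicial complex, so that the downward compatibility of $\zeta$ may be invoked on it. For any non-empty $J\subseteq f(e)$ with $e\in C$, the inclusion $J\subseteq e\in C$ gives $J\in C$, and all vertices of $J$ share the maximal colour of $e$ in any strictly compatible colouring, so $f(J)=J$ and hence $J\in f(C)$. Downward compatibility then yields $\zeta(f(C))=\zeta(\text{$1$-skeleton of }f(C))$, and the whole problem reduces to identifying the $1$-skeleton of $f(C)$ with the simple graph $f(g)$.

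The main obstacle is precisely this last identification, which is where Lemma~\ref{lemma_inclusion} enters. The inclusion of $f(g)$ into the $1$-skeleton of $f(C)$ is immediate: an unoriented edge $e\in g$ with $f(e)=e$ is a $2$-face of $f(C)$. The delicate converse concerns higher-dimensional faces $e\in C$ whose image has size $2$: by the preceding step $f(e)\in C$, and since $|f(e)|=2$ we have $f(e)\in g$; applying Lemma~\ref{lemma_inclusion} to the pair $f(e)\subsetneq e$ (both of size at least $2$), the observation $f(e)\cap f(e)=f(e)\neq\emptyset$ forces $f(f(e))=f(e)$, so $f(e)$ is unoriented under $f$ and thus lies in $f(g)$. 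Combining the two inclusions gives the equality of graphs, whence $\zeta(f(C))=\zeta(f(g))$, concluding the proof.
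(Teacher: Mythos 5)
Your argument is correct and follows the route the paper intends: the corollary is left without proof there, but the key identification --- that $f(g)$ is the $1$-skeleton of $f(C)$ --- is already asserted in the proof of the preceding proposition, and your use of Lemma~\ref{lemma_inclusion} supplies exactly the missing justification for that fact (and for $f(C)$ being a simplicial complex, so that downward compatibility applies). Nothing essential differs from the paper's approach.
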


\begin{corollary}[Corollary 28 in \cite{my1}]
    Let $V$ be a finite set, $C\in\SC[V]$ be a simplicial complex and $g$ be the 1-skeleton of $C$. Then $\chi^{\zeta_1}_V(C)$ is the chromatic 
    polynomial of $g$.
\end{corollary}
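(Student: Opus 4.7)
The plan is to chain together the two immediately preceding corollaries. First I would verify that the basic character $\zeta_1$ is downward compatible as a character of $\SC$. By definition, a discrete element of $\SC$ is a product of singleton complexes, hence a simplicial complex whose only faces are singletons; equivalently, $\zeta_1(C) = 1$ if and only if $C$ has no face of cardinality at least two, if and only if the $1$-skeleton $g$ of $C$ has no edges of cardinality two, if and only if $\zeta_1(g) = 1$ as the basic character of $\SG$. So $\zeta_1(C) = \zeta_1(g)$ for every simplicial complex $C$ with $1$-skeleton $g$, i.e.\ $\zeta_1$ is downward compatible.

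Second, I would apply the preceding corollary on downward compatible characters to obtain
\begin{equation*}
    \chi^{\zeta_1}_V(C)(n) = \chi^{\SG,\zeta_1}_V(g)(n).
\end{equation*}
By the remark at the end of the Graphs subsection, the polynomial invariants on simple graphs and on graphs agree on the common objects, so $\chi^{\SG,\zeta_1}_V(g) = \chi^{\G,\zeta_1}_V(g)$. The graphs corollary then identifies the latter with the chromatic polynomial $T_g$, completing the argument.

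There is no real obstacle, only some bookkeeping: one has to make sure that $\zeta_1$ on $\SC$ really corresponds to $\zeta_1$ on $\SG$ under the $1$-skeleton map, which is exactly what downward compatibility encodes. An alternative proof of the same length would skip the downward-compatibility corollary and invoke the preceding proposition specialized to $\zeta = \zeta_1$ directly: a $\zeta_1$-orientation $f$ of $C$ is one for which $f(C)$ consists only of singletons, and such orientations correspond under Lemma~\ref{lemma_sc} bijectively with discrete acyclic orientations of $g$. Since $\scolorings{C,f,n}$ is by definition $\scolorings{g,f,n}$ and the strictly compatible colorings of a discrete acyclic orientation of $g$ are exactly the proper colorings with maximum vertex prescribed by $f$, summing over $f$ gives $T_g(n)$.
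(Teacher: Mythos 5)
Your proposal is correct and follows the route the paper intends (the paper states this corollary without proof, leaving it as the chain you describe: $\zeta_1$ is downward compatible, so the preceding corollary reduces to $\chi^{\SG,\zeta_1}_V(g)$, which equals $T_g$ by the graphs subsection). Your verification that $\zeta_1$ is downward compatible, using the downward-closure of faces to pass from faces of size $\geq 2$ to edges of the 1-skeleton, is exactly the needed bookkeeping, and your alternative via $\zeta_1$-orientations is also sound.
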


%%%%%%%%%%%%%%%%%%%%%%%%%%%%%%%%%%%%%%%%%%%%%%%%%%%%%%%%%%%%%%%%%%%%%%%%%%%%%%%%%%%%%%%%%%%%%%%%%%%%%%%%%%%%%%%%%%%%%%%%%%%%%%%%%%%%%%%%%%%%%%%%%%%%%%%%%%%%%%%%%%%%%%%%%%%%%%
%%%%%%%%%%%%%%%%%%%%%%%%%%%%%%%%%%%%%%%%%%%%%%%%%%%%%%%%%%%%%%%%%%%%%%%%%% building sets    %%%%%%%%%%%%%%%%%%%%%%%%%%%%%%%%%%%%%%%%%%%%%%%%%%%%%%%%%%%%%%%%%%%%%%%%%%%%%%%
%%%%%%%%%%%%%%%%%%%%%%%%%%%%%%%%%%%%%%%%%%%%%%%%%%%%%%%%%%%%%%%%%%%%%%%%%%%%%%%%%%%%%%%%%%%%%%%%%%%%%%%%%%%%%%%%%%%%%%%%%%%%%%%%%%%%%%%%%%%%%%%%%%%%%%%%%%%%%%%%%%%%%%%%%%%%%%
\subsection{Building sets}
Building sets and graphical building sets have been studied in a Hopf algebraic context by Gruji\'c in \cite{gr} where he gave results in link
with the one obtained here.

Building sets were independently introduced by De Concini and Procesi in \cite{DCP} and by Schmitt in \cite{Schmi}. A {\em building set} on 
$V$ is collection $B$ of subsets of $V$, called {\em connected sets}, such that if $I,J\in B$ and $I\cap J\noemp$ then $I\cup J\in B$ and 
for all $v\in V$, $\set{v}\in B$. We denote by $\BS$ the species of building sets. By Proposition 22.3 of \cite{AA} the species $\BS$ 
of building sets is a sub-monoid of $\SHG$.

\begin{definition}[\cite{BS} \cite{Post09}]\label{def_bfor}
    Let $V$ be a finite set and $B$ a building set on $V$. Let $F$ be a forest of rooted trees whose vertices are labelled by the elements of a 
    partition $\pi$ of $V$. We denote by $\leq$ the relation ``is a descendent of'' over $\pi$ implied by $F$ and we denote by
    $F_{\leq p} = \bigsqcup_{q\leq p}q$ for $p\in \pi$. The forest $F$ is then a {\em $B$-forest} if it satisfies
    the three following conditions.
    \begin{enumerate}
        \item If $r$ is a root of $F$, then $F_{\leq r}$ is the set of vertices of the connected component containing $r$.
        \item For any node $p$, $F_{\leq p}\in B$.
        \item For any $k\geq 2$ and pairwise incomparable nodes $p_1,\dots,p_k$, $\bigcup F_{p_i}\not\in B$.
    \end{enumerate}
    We denote by $\bfor{B}$ the set of $B$-forest of $B$ and call {\em $B$-trees} the $B$-forests of $B$ when $B$ is a connected.
\end{definition}

\begin{lemma}\label{lemma_bfor}
    Let $V$ be a finite set and $B$ a connected building set on $V$. Then the $B$-trees also admit the following inductive definition.
    \begin{itemize}
        \item The unique $B$-tree of the building set on a singleton $\set{v}$ is the rooted tree with only its root $\set{v}$.
        \item If $V$ is not a singleton, let $r$ be a subset of $V$ and denote by $V_1,\dots, V_k$ the maximal connected component
        of $B$ which does not intersect with $r$. For $1\leq i\leq k$ let $B_i$ be the connected building set {\em associated to $V_i$},
        which is defined by $B_i=\set{e\in B\,|\, e\subset V_i}$ and let $T_i$ be a $B_i$-tree. Then the tree rooted on $r$ obtained by doing the 
        disjoint union of the $T_i$ and adding an edge between $r$ and the roots of the $T_i$ is a $B$-tree.
    \end{itemize}
\end{lemma}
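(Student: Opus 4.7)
The plan is to prove the equivalence by induction on $\card{V}$. The base case $\card{V}=1$ is immediate: the only tree is the single node labelled $\set{v}$, which trivially satisfies properties 1--3 of Definition~\ref{def_bfor} and is also the tree produced by the base case of the inductive construction. For $\card{V}\geq 2$ I would establish the two directions separately.

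For the direction ``a $B$-tree in the sense of Definition~\ref{def_bfor} arises from the inductive procedure'', consider a $B$-tree $T$ with root $r$, children $p_1,\dots,p_k$, and subtree vertex sets $V_1,\dots,V_k$. The $V_i$ are pairwise disjoint and lie in $B$ by property~2, and their union is $V\setminus r$ by property~1. I would then show that the $V_i$ are exactly the maximal elements of $B$ disjoint from $r$. Suppose some $W\in B$ with $W\cap r=\emptyset$ strictly contained $V_i$; then $W$ also intersects some $V_{j_1},\dots,V_{j_m}$ with $m\geq 1$. Iterating the building-set union axiom on $W$ and each intersecting $V_{j_l}$ produces an element of $B$ which, using $W\subseteq V_i\cup\bigcup_l V_{j_l}$, is exactly $V_i\cup V_{j_1}\cup\dots\cup V_{j_m}$, contradicting property~3 applied to the incomparable children $p_i,p_{j_1},\dots,p_{j_m}$. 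A symmetric argument shows any maximal element of $B$ disjoint from $r$ is contained in some $V_i$ and therefore equals it. Each subtree $T_i$ inherits the three properties of Definition~\ref{def_bfor} for the connected building set $B_i=\set{e\in B\,|\,e\subseteq V_i}$ (which is connected because $V_i\in B$ by property~2), so by induction each $T_i$ is produced by the inductive procedure.

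For the converse, start with a tree $T$ built inductively and verify the three defining properties. Property~1 reduces to $V_1\sqcup\dots\sqcup V_k=V\setminus r$: every $v\in V\setminus r$ satisfies $\set{v}\in B$ with $\set{v}\cap r=\emptyset$ and is therefore contained in some maximal such element, and disjointness of the $V_i$ follows because any intersection would, via the building-set axiom, produce a strictly larger element of $B$ disjoint from $r$. Property~2 at the root uses $V\in B$ by connectedness of $B$, and at any other node follows from the induction hypothesis applied to $T_i$. For property~3, consider pairwise incomparable nodes $p_1,\dots,p_m$. If they all lie in a single $T_i$, the induction hypothesis gives $\bigcup F_{\leq p_l}\notin B_i$, hence $\notin B$ since this union is contained in $V_i$. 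If they span several subtrees $T_{i_1},\dots,T_{i_b}$ with $b\geq 2$, then assuming $U:=\bigcup F_{\leq p_l}\in B$ and iteratively unioning $U$ with the meeting $V_{i_s}$ produces an element of $B$ disjoint from $r$ that strictly contains $V_{i_1}$, contradicting its maximality.

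The main obstacle is the repeated ``union--maximality'' argument: the interplay between property~3 (which forbids certain unions from belonging to $B$) and the building-set union axiom (which forces unions of pairwise intersecting elements of $B$ into $B$) requires careful bookkeeping of which sets are generated by the iterative union and that they remain inside $V\setminus r$. Once this is cleanly isolated in the forward direction, the same technique reappears almost verbatim in the converse, so the conceptual crux lies in understanding when an element of $B$ disjoint from $r$ can be ``inflated'' through intersecting neighbors without escaping $V\setminus r$.
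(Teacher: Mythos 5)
Your proposal is correct and follows essentially the same route as the paper's proof: induction on $\card{V}$, treating the two directions separately, with the crux in both cases being the identification of the subtree vertex sets with the maximal connected sets of $B$ disjoint from the root, obtained by playing the building-set union axiom against property~3 of Definition~\ref{def_bfor}. The only cosmetic difference is that in the converse direction the paper derives the contradiction for property~3 by noting the union must meet $r$, whereas you contradict maximality of a $V_i$ directly; these are interchangeable uses of the same maximality fact.
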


\begin{remark}
    This lemma is in a way the same as saying that there is a bijection between $B$-forest and nested sets of $B$ (\cite{BS}, \cite{Post09}).
    The formulation given here is more adapted to what we need in the sequel.
\end{remark}

\begin{proof}
    We begin by showing that the tree defined in such a way are indeed $B$-trees. We do this by induction over $\card{V}$. If $V$ is a
    singleton then it is obvious. Suppose now $V$ is not a singleton and let $r$, $V_1,\dots, V_k$ and $B_1,\dots, B_k$ be as defined in
    the lemma.
    \begin{enumerate}
        \item To show the first item, we prove that $V_1,\dots, V_k$ is a partition of $V\setminus r$ because then, by denoting by $r_i$
        the root of $T_i$,
        \begin{equation*}
            T_{\leq r}=\sqcup_{p\leq r}p=r\sqcup_{p<r}p=r\sqcup_i\sqcup_{p\leq r_i}=r\sqcup_i T_{i,\leq r_i} = r\sqcup V_i = V,
        \end{equation*}
        where the fourth equality is obtained by induction. Suppose $V_i\cap V_j\noemp.$ Then since $B$ is a building set, $V_i\cup V_j\in B$, 
        and since neither $V_i$ nor $V_j$ intersect with $r$, their union does not either. This contradicts the maximality of $V_i$ and $V_j$ and 
        it is hence not possible. Let now be $v\in V\setminus r$ then since $\set{V}\in V$, there exists a maximal edge connected component not 
        intersecting $V$ which contains $v$. 
        \item We already showed the second case in the case of $r$. Let $p\not=r$ be a node of $T$. Then there exists $i$ such that $p$ is a
        node of $T_i$ and hence by induction $T_{\leq p}\in B_i\subset B$.
        \item Let be $l\geq 2$ and $p_1,\dots, p_l$ be pairwise incomparable nodes of $T$. Suppose that $\bigcup p_i\in B$. Then if there
        exists $j$ such that all the $p_i$ are in $T_j$, all the $p_i$ would be subset of $V_j$ and we would have $\bigcup p_i \in B_j$ which
        is not possible by induction. So there is at least two indices $i,j$ such that $p_i\in T_{k_i}$ and $p_j\in T_{k_j}$ (the $p_i$s can not
        be equal to $r$ since $r$ is comparable to every node). Then $\bigcup p_i$ is in $B$ but in no $B_i$ and hence it intersects with $r$ 
        by definition. This is not possible since no $p_i$ intersects with $r$. This shows that $\bigcup p_i\not\in B$
    \end{enumerate}

    Let now $T$ be a $B$-tree, $r$ be its root, $T_1,\dots, T_k$ be its direct sub-trees and $r_1,\dots, r_k$ be their respective roots.
    For $1\leq i\leq k$, let $V_i=T_{\leq r_i}=T_{i,\leq r_i}$ and $B_i$ be the connected building set associated to $V_i$. Then by definition 
    of $B$-trees, for $1\leq i\leq k$, $T_i$ is a $B_i$-tree. We only need to show that the $V_i$ are the maximal connected sets which do not
    intersect $r$ to conclude. Clearly they do not intersect $r$ since they are union of nodes and all nodes are pairwise disjoint. Suppose $W$
    is a connected set not intersecting $r$ and such that $V_i\subsetneq W$ for one of the $V_i$s. Since $V_1,\dots, V_k$ forms a partition of 
    $V\setminus r$, $W$ must intersect with some $V_{j_1},\dots,V_{j_l}$, $j_m\not = i$ for $1\leq \leq l$. Suppose without loss of generality
    that $j_m =m$ and $i=l+1$. Then $W\bigcup_{1\leq j\leq l} V_j\in B$ since $B$ is a building set. But this is not possible since this is also 
    equal to $\bigcup_{1\leq j \leq l+1} V_j=\bigcup_{1\leq j \leq l+1} T_{\leq r_j}$ and the $r_j$ are pairwise incomparable. Hence the $V_i$
    are maximal.
\end{proof}

\begin{example}
    We represented here an induction step of the construction of a $B$-tree. The set $W$ is in red and the connected sets in blue are
    the maximal connected sets not intersecting $W$.
    \begin{equation*}
        \begin{tikzpicture}
            \node[NodeHyper, inner sep =0pt, draw=white](a)at(0,0){$a$};
            \node[NodeHyper, inner sep =0pt, draw=white, fill=Red!20](b)at(1,0){$b$};
            \node[NodeHyper, inner sep =0pt, draw=white](c)at(2,0){$c$};
            \node[NodeHyper, inner sep =0pt, draw=black!20](d)at(3,0){$d$};
            \node[NodeHyper, inner sep =0pt, draw=white](e)at(4,0){$e$};
            \node[NodeHyper, inner sep =0pt, draw=white, fill=Red!20](f)at(5,0){$f$};
            \node[NodeHyper, inner sep =0pt, draw=white, fill=Red!20](g)at(6,0){$g$};
            \node[NodeHyper, inner sep =0pt, draw=white](h)at(7,0){$h$};
            \node[NodeHyper, inner sep =0pt, draw=white](i)at(8,0){$i$};
            \node[NodeHyper, inner sep =0pt, draw=white](j)at(9,0){$j$};
            \draw[EdgeHyper, Blue](a)circle(0.2);
            \draw[EdgeHyper](b)circle(0.2);
            \draw[EdgeHyper](c)circle(0.2);
            \draw[EdgeHyper](d)circle(0.2);
            \draw[EdgeHyper](e)circle(0.2);
            \draw[EdgeHyper](f)circle(0.2);
            \draw[EdgeHyper](g)circle(0.2);
            \draw[EdgeHyper](h)circle(0.2);
            \draw[EdgeHyper](i)circle(0.2);
            \draw[EdgeHyper](j)circle(0.2);

            \draw[EdgeHyper](0, 0.3)--(2,0.3);
            \draw[EdgeHyper](0, -0.3)--(2,-0.3);
            \draw[EdgeHyper](0,0.3)arc(90:270:0.3);
            \draw[EdgeHyper](2,0.3)arc(90:-90:0.3);

            \draw[EdgeHyper](3, 0.3)--(4,0.3);
            \draw[EdgeHyper](3, -0.3)--(4,-0.3);
            \draw[EdgeHyper](3,0.3)arc(90:270:0.3);
            \draw[EdgeHyper](4,0.3)arc(90:-90:0.3);

            \draw[EdgeHyper](7, 0.3)--(8,0.3);
            \draw[EdgeHyper](7, -0.3)--(8,-0.3);
            \draw[EdgeHyper](7,0.3)arc(90:270:0.3);
            \draw[EdgeHyper](8,0.3)arc(90:-90:0.3);

            \draw[EdgeHyper](2, 0.4)--(3,0.4);
            \draw[EdgeHyper](2, -0.4)--(3,-0.4);
            \draw[EdgeHyper](2,0.4)arc(90:270:0.4);
            \draw[EdgeHyper](3,0.4)arc(90:-90:0.4);

            \draw[EdgeHyper](8, 0.4)--(9,0.4);
            \draw[EdgeHyper](8, -0.4)--(9,-0.4);
            \draw[EdgeHyper](8,0.4)arc(90:270:0.4);
            \draw[EdgeHyper](9,0.4)arc(90:-90:0.4);

            \draw[EdgeHyper](0, 0.5)--(3,0.5);
            \draw[EdgeHyper](0, -0.5)--(3,-0.5);
            \draw[EdgeHyper](0,0.5)arc(90:270:0.5);
            \draw[EdgeHyper](3,0.5)arc(90:-90:0.5);

            \draw[EdgeHyper, Blue](7, 0.5)--(9,0.5);
            \draw[EdgeHyper, Blue](7, -0.5)--(9,-0.5);
            \draw[EdgeHyper, Blue](7,0.5)arc(90:270:0.5);
            \draw[EdgeHyper, Blue](9,0.5)arc(90:-90:0.5);

            \draw[EdgeHyper, Blue](2, 0.6)--(4,0.6);
            \draw[EdgeHyper, Blue](2, -0.6)--(4,-0.6);
            \draw[EdgeHyper, Blue](2,0.6)arc(90:270:0.6);
            \draw[EdgeHyper, Blue](4,0.6)arc(90:-90:0.6);

            \draw[EdgeHyper](6, 0.6)--(9,0.6);
            \draw[EdgeHyper](6, -0.6)--(9,-0.6);
            \draw[EdgeHyper](6,0.6)arc(90:270:0.6);
            \draw[EdgeHyper](9,0.6)arc(90:-90:0.6);

            \draw[EdgeHyper](0, 0.7)--(4,0.7);
            \draw[EdgeHyper](0, -0.7)--(4,-0.7);
            \draw[EdgeHyper](0,0.7)arc(90:270:0.7);
            \draw[EdgeHyper](4,0.7)arc(90:-90:0.7);

            \draw[EdgeHyper](0, 0.8)--(9,0.8);
            \draw[EdgeHyper](0, -0.8)--(9,-0.8);
            \draw[EdgeHyper](0,0.8)arc(90:270:0.8);
            \draw[EdgeHyper](9,0.8)arc(90:-90:0.8);
        \end{tikzpicture}.
    \end{equation*}
    We also represent a $B$-tree obtained by beginning with the above choice:
    \begin{equation*}
        \begin{tikzpicture}[Centering]
            \node[NodeHyper, inner sep =0pt, draw=white, fill=Red!20](f)at(5,0){$f$};

            \node[NodeHyper, inner sep =0pt, draw=white, fill=Red!20](c)at(2,0){$c$};
            \node[NodeHyper, inner sep =0pt, draw=white, fill=Red!20](g)at(6,0){$g$};

            \node[NodeHyper, inner sep =0pt, draw=white, fill=Red!20](a)at(0,0){$a$};
            \node[NodeHyper, inner sep =0pt, draw=white, fill=Red!20](b)at(1,0){$b$};
            \node[NodeHyper, inner sep =0pt, draw=white, fill=Red!20](e)at(4,0){$e$};
            \node[NodeHyper, inner sep =0pt, draw=white, fill=Red!20](i)at(8,0){$i$};

            \node[NodeHyper, inner sep =0pt, draw=white, fill=Red!20](d)at(3,0){$d$};
            \node[NodeHyper, inner sep =0pt, draw=white, fill=Red!20](h)at(7,0){$h$};
            \node[NodeHyper, inner sep =0pt, draw=white, fill=Red!20](j)at(9,0){$j$};

            \draw[EdgeHyper, Blue](a)circle(0.2);
            \draw[EdgeHyper, Blue](b)circle(0.2);

            \draw[EdgeHyper](c)circle(0.2);
            \draw[EdgeHyper](e)circle(0.2);
            \draw[EdgeHyper](f)circle(0.2);
            \draw[EdgeHyper](g)circle(0.2);
            \draw[EdgeHyper](i)circle(0.2);

            \draw[EdgeHyper, Blue](j)circle(0.2);
            \draw[EdgeHyper, Blue](d)circle(0.2);
            \draw[EdgeHyper, Blue](h)circle(0.2);

            \draw[EdgeHyper](0, 0.3)--(2,0.3);
            \draw[EdgeHyper](0, -0.3)--(2,-0.3);
            \draw[EdgeHyper](0,0.3)arc(90:270:0.3);
            \draw[EdgeHyper](2,0.3)arc(90:-90:0.3);

            \draw[EdgeHyper, Blue](3, 0.3)--(4,0.3);
            \draw[EdgeHyper, Blue](3, -0.3)--(4,-0.3);
            \draw[EdgeHyper, Blue](3,0.3)arc(90:270:0.3);
            \draw[EdgeHyper, Blue](4,0.3)arc(90:-90:0.3);

            \draw[EdgeHyper](7, 0.3)--(8,0.3);
            \draw[EdgeHyper](7, -0.3)--(8,-0.3);
            \draw[EdgeHyper](7,0.3)arc(90:270:0.3);
            \draw[EdgeHyper](8,0.3)arc(90:-90:0.3);

            \draw[EdgeHyper](2, 0.4)--(3,0.4);
            \draw[EdgeHyper](2, -0.4)--(3,-0.4);
            \draw[EdgeHyper](2,0.4)arc(90:270:0.4);
            \draw[EdgeHyper](3,0.4)arc(90:-90:0.4);

            \draw[EdgeHyper](8, 0.4)--(9,0.4);
            \draw[EdgeHyper](8, -0.4)--(9,-0.4);
            \draw[EdgeHyper](8,0.4)arc(90:270:0.4);
            \draw[EdgeHyper](9,0.4)arc(90:-90:0.4);

            \draw[EdgeHyper](0, 0.5)--(3,0.5);
            \draw[EdgeHyper](0, -0.5)--(3,-0.5);
            \draw[EdgeHyper](0,0.5)arc(90:270:0.5);
            \draw[EdgeHyper](3,0.5)arc(90:-90:0.5);

            \draw[EdgeHyper, Blue](7, 0.5)--(9,0.5);
            \draw[EdgeHyper, Blue](7, -0.5)--(9,-0.5);
            \draw[EdgeHyper, Blue](7,0.5)arc(90:270:0.5);
            \draw[EdgeHyper, Blue](9,0.5)arc(90:-90:0.5);

            \draw[EdgeHyper](2, 0.6)--(4,0.6);
            \draw[EdgeHyper](2, -0.6)--(4,-0.6);
            \draw[EdgeHyper](2,0.6)arc(90:270:0.6);
            \draw[EdgeHyper](4,0.6)arc(90:-90:0.6);

            \draw[EdgeHyper, Blue](6, 0.6)--(9,0.6);
            \draw[EdgeHyper, Blue](6, -0.6)--(9,-0.6);
            \draw[EdgeHyper, Blue](6,0.6)arc(90:270:0.6);
            \draw[EdgeHyper, Blue](9,0.6)arc(90:-90:0.6);

            \draw[EdgeHyper, Blue](0, 0.7)--(4,0.7);
            \draw[EdgeHyper, Blue](0, -0.7)--(4,-0.7);
            \draw[EdgeHyper, Blue](0,0.7)arc(90:270:0.7);
            \draw[EdgeHyper, Blue](4,0.7)arc(90:-90:0.7);

            \draw[EdgeHyper](0, 0.8)--(9,0.8);
            \draw[EdgeHyper](0, -0.8)--(9,-0.8);
            \draw[EdgeHyper](0,0.8)arc(90:270:0.8);
            \draw[EdgeHyper](9,0.8)arc(90:-90:0.8);

            \draw[EdgeGraph, Red, ->](f)edge[bend right=80](c);
            \draw[EdgeGraph, Red, ->](f)edge[bend left=80](g);

            \draw[EdgeGraph, Red, ->](c)edge[bend left=80](a);
            \draw[EdgeGraph, Red, ->](c)edge[bend left=80](b);
            \draw[EdgeGraph, Red, ->](c)edge[bend right=80](e);
            \draw[EdgeGraph, Red, ->](g)edge[bend right=80](i);

            \draw[EdgeGraph, Red, ->](e)edge[bend right=80](d);
            \draw[EdgeGraph, Red, ->](i)edge[bend right=80](h);
            \draw[EdgeGraph, Red, ->](i)edge[bend left=80](j);
        \end{tikzpicture}.
    \end{equation*}
\end{example}

Let $V$ be a set, $F$ a rooted forest on $V$ and $c$ a coloring of $V$. We say that $F$ and $c$ are (strictly) compatible if $c$ is a
(strictly) increasing map on $F$. Respectively denote by $\scolorings{F,n}$ and $\colorings{F,n}$ the set of colorings strictly compatible
and compatible with $F$.

\begin{lemma}\label{lemma_bs}
    Let $V$ be a finite set and $B$ a building set on $V$. Then there is a bijection $\acyclic_B\cong \bfor{B}$ which preserves (strict)
    compatibility with colorings.
\end{lemma}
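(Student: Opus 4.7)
The plan is to define inverse maps $\Phi : \acyclic_B \to \bfor{B}$ and $\Psi : \bfor{B} \to \acyclic_B$ inductively, exploiting the recursive characterisation of $B$-trees from Lemma~\ref{lemma_bfor}. Both sides decompose as a disjoint union over connected components of $B$, so I may reduce to the case $V \in B$ and argue by induction on $\card{V}$.

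For $\Phi$, given $f \in \acyclic_B$, I would set the root $r := f(V)$, let $V_1, \dots, V_k$ be the maximal elements of $B$ disjoint from $r$ (these partition $V \setminus r$ by Lemma~\ref{lemma_bfor}), and put $B_i := \set{e \in B\,|\, e \subseteq V_i}$. Each restriction $f|_{B_i}$ is an acyclic admissible orientation of $B_i$, so by induction it yields a $B_i$-tree $T_i$; attaching the $T_i$ as the direct subtrees of $r$ produces a $B$-tree $T_f$, the axioms being verified exactly as in Lemma~\ref{lemma_bfor}.

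For $\Psi$, given $T \in \bfor{B}$, I first claim that for every $e \in B$ there is a unique $\leq$-maximal node $p_e$ of $T$ with $p_e \cap e \neq \emptyset$. If $p_1, \dots, p_\ell$ were all such maximal nodes they would be pairwise incomparable with $e \subseteq \bigcup_i F_{\leq p_i}$ (every node meeting $e$ lies weakly below some $p_i$), while each $F_{\leq p_i}$ meets $e$. Iterating the union axiom of building sets with $e$ as a bridge gives $\bigcup_i F_{\leq p_i} = e \cup \bigcup_i F_{\leq p_i} \in B$, contradicting axiom~(3) of Definition~\ref{def_bfor} unless $\ell = 1$. I then define $f_T(e) := e \cap p_e$. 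Admissibility is immediate, and for acyclicity a hypothetical cycle $e_1, \dots, e_k$ would force $p_{e_i} \leq p_{e_{i+1}}$ in the tree order at every step (in both cases of Definition~\ref{orientations}, some $v \in p_{e_i} \cap e_{i+1}$ exists, so $p_{e_i}$ is a node meeting $e_{i+1}$), while the wrap-around condition $f_T(e_k) \cap (e_1 \setminus p_{e_1}) \neq \emptyset$ supplies a strict increase via such a $v \in p_{e_k} \setminus p_{e_1}$, which is impossible in the finite tree order.

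The key ingredient for $\Psi \circ \Phi = \mathrm{id}$ is Lemma~\ref{lemma_inclusion}: for any $e \in B$ with $e \cap f(V) \neq \emptyset$, one obtains $f(e) = e \cap f(V)$, which matches $f_T(e) = e \cap p_e$ because $p_e$ is then the root $f(V)$ of $T_f$; edges disjoint from $f(V)$ lie inside some $V_i$ and are handled inductively. The inversion $\Phi \circ \Psi = \mathrm{id}$ uses that $f_T(V) = V \cap p_V$ coincides with the root of $T$ and that the maximal elements of $B$ disjoint from it are exactly the $F_{\leq r_i}$ attached as children of the root (again Lemma~\ref{lemma_bfor}), so both recursions proceed on the same $V_i$'s. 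Finally, (strict) compatibility with a coloring $c$ is transported through the induction: $f(V) = \max_c V$ together with recursive (strict) compatibility on each $V_i$ is equivalent to $c$ being (strictly) increasing along the tree order of $T_f$. I expect the main obstacle to be the existence/uniqueness of $p_e$ in $\Psi$ and the identification of $f(e)$ with $e \cap p_e$ through Lemma~\ref{lemma_inclusion}; the rest is bookkeeping for the induction on $\card{V}$.
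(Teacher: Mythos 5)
Your proof is correct, and your forward map $\Phi$ (root $f(V)$, children obtained by restricting $f$ to the maximal connected sets of $B$ avoiding $f(V)$, induction on $\card{V}$ after reducing to the connected case) is exactly the map $b$ constructed in the paper. Where you genuinely diverge is the inverse. The paper defines $b^{-1}$ by the mirror induction, sending edges that meet the root to their intersection with the root and delegating the rest to the subtrees, and disposes of acyclicity by claiming a cycle would have to sit entirely inside one block of the partition $r, V_1,\dots,V_k$. You instead give a global, non-recursive description: using the union axiom of building sets chained along $e$ together with axiom~(3) of Definition~\ref{def_bfor}, you show each edge $e$ meets a unique $\leq$-maximal node $p_e$ and set $f_T(e)=e\cap p_e$; acyclicity then follows from the monotonicity $p_{e_i}\leq p_{e_{i+1}}$ along a would-be cycle together with the strict jump forced by the wrap-around condition. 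This buys a cleaner acyclicity argument (the paper's localization claim is terser than what you prove) and makes the identity $\Psi\circ\Phi=\mathrm{id}$ transparent, since your explicit appeal to Lemma~\ref{lemma_inclusion} to get $f(e)=e\cap f(V)$ for edges meeting $f(V)$ is precisely the detail the paper leaves implicit when checking that its two recursions invert each other. The identification of the children's supports via Lemma~\ref{lemma_bfor} and the transport of (strict) compatibility through $f(V)=\max_c(V)$ match the paper's treatment.
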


\begin{proof}
    First remark that for $B$ and $B'$ two building sets over two disjoint sets we have $\acyclic_{B\sqcup B'}=\acyclic_B\times\acyclic_{B'}$ 
    and $\bfor{B\sqcup B'}=\bfor{B}\times\bfor{B'}$, so it is sufficient to prove this lemma on connected building sets and $B$-trees.

    We construct a bijection $b$ between $\acyclic_B$ and $\bfor{B}$ by induction on $\card{V}$. If $V$ is a singleton then there is a unique
    $B$-tree possible and a unique acyclic orientation possible and the bijection is trivial. 
    Suppose now $V$ is not a singleton. Let $f$ be an acyclic orientation of $B$. Let $V_1,\dots V_k$ be the maximal connected sets not intersecting $f(V)$
    and $B_1,\dots,B_k$ their associated connected building sets. Then for $i\in [k]$, $f_{|V_i}$ is an acyclic orientation of $B_i$ and $T_i = b(f_{|V_i})$ 
    is a $B_i$-tree for $1\leq i\leq k$. Define $b(f)$ as the rooted tree in $f(V)$ obtained by doing the disjoint union of the $T_i$ and adding an edge between 
    $f(V)$ and the roots of the $T_i$s. Then $b(f)$ is a $B$-tree by Lemma~\ref{lemma_bfor}.

    Let now $T$ be a $B$-tree and $r$, $V_1,\dots, V_k$ and $B_1,\dots, B_k$ as defined in Lemma~\ref{lemma_bfor}. Let $B_0$ be the set of edges intersecting with
    $r$ and let $b^{-1}(T)$ be the orientation defined by $b^{-1}(T)(e) = b^{-1}(T_i)(e)$ if $e\in B_i$ and $b^{-1}(T)(e)=r$ if $e\in B_0$. Suppose there exists 
    $e_1,\dots,e_k$ a cycle in $b^{-1}(T)$. Since $r,V_1,\dots, V_k$ is partition of $V$, this cycle must be entirely contained in one of the $B_i$. Since by 
    induction this cycle can not be entirely in a $B_i$, it must be contained in $B_0$. But for every edge in $e\in B_0$ $b^{-1}(f)(e)=r$. Hence $b^{-1}(f)$ is acyclic.

    The fact that in the two preceding constructions the root of the $B$-tree is the image of the connected component along with the induction hypothesis 
    enable us to conclude that the $b$ and $b^{-1}$ thus defined are indeed inverse functions. It also gives us that $b$ preserves (strict) compatibility with 
    colorings, since for a (strictly) compatible coloring $c$ of $f$ the color $c(f(V))$ is necessarily the maximal color: 
    $f(V)=\max_c(V)=\set{v\in V\,|\, \text{$c(v)$ is maximal in $V$}}$.
\end{proof}

Let $B$ be a building set and $F$ be a $B$-forest. The {\em $F$-induced building set} is the building set whose connected sets are obtained by taking the 
non empty intersection of each connected set with the maximum node possible in $F$. We denote it by $B\cap F$.
For $\zeta$ a character of $\BS$ we say that $F$ is a {\em $\zeta$-$B$-forest} if $\zeta(B\cap F)\not=0$.

\begin{proposition}
    Let $\zeta$ be a character of $\BS$, $V$ be a finite set and $B\in\BS[V]$ be a building set. We then have:
    \begin{equation*}\begin{split}
        \chi^{\zeta}_V(B)(n) &= \sum_{F\in\bfor{B}} \zeta(B\cap F)\card{\scolorings{F,n}},\\
        \chi^{\zeta}_V(B)(-n) &= \sum_{F\in\bfor{B}} (-1)^{\cc(B\cap F)}\zeta(B\cap F)\card{\colorings{F,n}}.
    \end{split}\end{equation*}
    If $\zeta$ is a characteristic function, then $\chi^{\zeta}(B)(n)$ is the number of strictly compatible pairs of $\zeta$-$B$-forests
    and colorings with $[n]$. Furthermore, if $\zeta$ is odd $(-1)^{\card{V}}\chi^{\zeta}_V(B)(-n)$ is the number of compatible ones. In particular,
    $(-1)^{\card{V}}\chi^{\zeta}_V(B)(-1)$ is the number of $\zeta$-$B$-forest.
\end{proposition}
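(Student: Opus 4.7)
The plan is to mirror the strategy used for simplicial complexes: exploit the fact that $\BS$ is a Hopf sub-monoid of $\SHG$ (Proposition 22.3 of \cite{AA}), then transport the formulas of Proposition~\ref{prop_simple_hypergraphs} along the bijection of Lemma~\ref{lemma_bs}. Concretely, I would first apply Proposition~\ref{prop_simple_hypergraphs} to $B$ viewed as a simple hypergraph, yielding
\begin{equation*}
    \chi^{\zeta}_V(B)(n) = \sum_{f\in\acyclic_B} \zeta(f(B))\card{\scolorings{B,f,n}}
\end{equation*}
and the analogous formula for $\chi^{\zeta}_V(B)(-n)$, where we use that a character of $\BS$ can be restricted from any character of $\SHG$ extending it along the inclusion (or, more cleanly, apply Proposition~\ref{prop_polynomial_invariant} to the inclusion morphism).

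Next I would invoke Lemma~\ref{lemma_bs}, which provides a bijection $b:\acyclic_B \to \bfor{B}$ preserving (strict) compatibility with colorings, so that $\card{\scolorings{B,f,n}}=\card{\scolorings{b(f),n}}$ and $\card{\colorings{B,f,n}}=\card{\colorings{b(f),n}}$. The key identification to verify is that $f(B) = B\cap F$ as simple hypergraphs, where $F = b(f)$; once this is established, $\zeta(f(B)) = \zeta(B\cap F)$ and also $\cc(f(B)) = \cc(B\cap F)$, giving the two displayed formulas. This identification is proved by induction on $\card{V}$ using the recursive construction of $b$ from Lemma~\ref{lemma_bfor}: for the root $r=f(V)$ of the $B$-tree and the maximal connected sets $V_1,\dots,V_k$ of $B$ avoiding $r$, every $e\in B$ is either contained in some $V_i$ (so $f(e)$ lies in $B_i$ and the inductive hypothesis applied to $(B_i,T_i)$ handles it) or meets $r$ (so $f(e) = e\cap r$, and $r$ is precisely the maximum node of $F$ meeting $e$). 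In both cases $f(e)$ coincides with the intersection of $e$ with the highest node of $F$ it meets, which is exactly the definition of $B\cap F$.

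The combinatorial consequences then follow immediately: when $\zeta$ is a characteristic function, each term $\zeta(B\cap F)\card{\scolorings{F,n}}$ is already counting pairs, giving the description of $\chi^{\zeta}(B)(n)$ as the number of strictly compatible pairs of $\zeta$-$B$-forests and colorings with $[n]$. When $\zeta$ is additionally odd, every $\zeta$-$B$-forest $F$ satisfies the parity condition $(-1)^{\cc(B\cap F)} = (-1)^{\card{V}}$ because every connected component of $B\cap F$ has odd size, so the sign factors uniformly and we obtain the interpretation of $(-1)^{\card{V}}\chi^{\zeta}_V(B)(-n)$. Setting $n=1$ gives the count of $\zeta$-$B$-forests. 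The main technical obstacle is the identification $f(B) = B\cap F$; everything else is formal manipulation on top of already-proved results.
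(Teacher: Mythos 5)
Your proposal follows exactly the paper's route: the paper's proof is a two-line remark that the result is "a direct application" of the bijection $\acyclic_B\cong\bfor{B}$ of Lemma~\ref{lemma_bs} (which preserves strict and non-strict compatibility with colorings) together with the observation that $B\cap F=b^{-1}(F)(B)$, which is precisely the identification $f(B)=B\cap F$ you verify by induction. Your write-up simply supplies the details (the inductive check at the root versus the sub-building-sets $B_i$, and the parity argument in the odd case) that the paper leaves implicit, so it is correct and essentially identical in approach.
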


\begin{proof}
    Again, this is a direct application of the previous lemma. We just need to remark from the construction of the bijection of Lemma~\ref{lemma_bs} 
    that with our definition of $B\cap F$ we have $B\cap F=b^{-1}(F)(B)$.
\end{proof}

%%%%%%%%%%%%%%%%%%%%%%%%%%%%%%%%%%%%%%%%%%%%%%%%%%%%%%%%%%%%%%%%%%%%%%%%%%%%%%%%%%%%%%%%%%%%%%%%%%%%%%%%%%%%%%%%%%%%%%%%%%%%%%%%%%%%%%%%%%%%%%%%%%%%%%%%%%%%%%%%%%%%%%%%%%%%%%
%%%%%%%%%%%%%%%%%%%%%%%%%%%%%%%%%%%%%%%%%%%%%%%%%%%%%%%%%%%%%%%%%%%%%%%%%% ripping and sewing %%%%%%%%%%%%%%%%%%%%%%%%%%%%%%%%%%%%%%%%%%%%%%%%%%%%%%%%%%%%%%%%%%%%%%%%%%%%%%%
%%%%%%%%%%%%%%%%%%%%%%%%%%%%%%%%%%%%%%%%%%%%%%%%%%%%%%%%%%%%%%%%%%%%%%%%%%%%%%%%%%%%%%%%%%%%%%%%%%%%%%%%%%%%%%%%%%%%%%%%%%%%%%%%%%%%%%%%%%%%%%%%%%%%%%%%%%%%%%%%%%%%%%%%%%%%%%

\subsection{Graphs, ripping and sewing}
The species $\SG$ of simple graphs admits another Hopf monoid structure than the one given in  
subsection~\ref{cor_graphs}. This Hopf monoid is defined in Definition 23.2 of \cite{AA} and its structure maps are given by
\begin{align*}
    \mu_{V_1,V_2}: \SG[V_1]\otimes \SG[V_2] &\to \SG[V] & \Delta_{V_1,V_2}: \SG[V] &\to \SG[V_1]\otimes \SG[V_2] \\
    g_1\otimes g_2 &\mapsto g_1\sqcup g_2 & g &\mapsto g|_{V_1}\otimes g/_{V_1}, \nonumber
\end{align*}
where $g|_{V_1}$ is the sub-graph of $g$ induced by $V_1$ and $g/_{V_1}$ is the simple graph on $V_2$ with an edge between $v$ and $v'$ if there 
is a path from $v$ to $v'$ in which all the vertices which are not ends are in $V_1$. These two operations are respectively called {\em ripping out $V_1$} 
and {\em sewing through $V_1$}.

Here is an example of co-product with the set $V=\set{a,b,c,d,e,f}$ and $V_1=\set{b,c,e}$ and $V_2=\set{a,d,f}$:
\begin{equation*}
    \begin{tikzpicture}[Centering,scale=1]
        \node[NodeHyper](c)at(0,0){$c$};
        \node[NodeHyper](a)at(-0.5,1.5){$a$};
        \node[NodeHyper](b)at(-1,0.5){$b$};
        \node[NodeHyper](d)at(0.5,2){$d$};
        \node[NodeHyper](e)at(1,1){$e$};
        \node[NodeHyper](f)at(1.5,0){$f$};
        \draw[EdgeHyper](a)--(b);
        \draw[EdgeHyper](b)--(c);
        \draw[EdgeHyper](c)--(e);
        \draw[EdgeHyper](e)--(d);
        \draw[EdgeHyper](e)--(f);
    \end{tikzpicture}
    \enspace \mapsto \enspace
    \begin{tikzpicture}[Centering,scale=1]
        \node[NodeHyper](c)at(0,0){$c$};
        \node[NodeHyper](b)at(-1,0.5){$b$};
        \node[NodeHyper](e)at(1,1){$e$};
        \draw[EdgeHyper](b)--(c);
        \draw[EdgeHyper](c)--(e);
    \end{tikzpicture}
    \enspace \otimes \enspace
        \begin{tikzpicture}[Centering,scale=1]
        \node[NodeHyper](a)at(0,0){$a$};
        \node[NodeHyper](d)at(1,0.5){$d$};
        \node[NodeHyper](f)at(1.5,-0.5){$f$};
        \draw[EdgeHyper](a)--(d);
        \draw[EdgeHyper](d)--(f);
        \draw[EdgeHyper](f)--(a);
    \end{tikzpicture}
\end{equation*}
% \begin{center}
%     \fig[0.7]{hyperAA/ex_ripsew1}.
% \end{center}

\begin{definition}[Definition 23.1 in \cite{AA}]
    Let be $g\in \SG[V]$. A {\em tube} is a subset $W\subseteq V$ such that $g|_W$ is connected. The set of tubes of $g$ is a building set called 
    {\em graphical building set} of $g$ and which we denote $\text{tubes}(g)$.
\end{definition}

By Proposition 23.3 of \cite{AA} we know that $g\mapsto \text{tubes}(g)$ is a Hopf monoid morphism from $\SG$ to $\BS$.

\begin{definition}\label{def_part_tree}
    Let be $V$ be a finite set and $g\in \SG_c[V]$ a connected simple graph. A {\em partitioning tree} of $g$ is a rooted tree whose vertices are 
    labelled by the elements of a partition of $V$ and which is inductively defined by the following.
    \begin{itemize}
        \item If $V$ is a singleton, then the unique partitioning tree is the trivial tree with $V$ as sole vertex.
        \item Else choose $W\subset V$ and a partitioning tree for each connected component of $g_{V\setminus W}$. The tree with root $W$ and direct sub-trees these 
        partitioning trees is then a partitioning tree of $g$.
    \end{itemize}
    If $g$ is not connected anymore, a {\em partitioning forest of $g$} is the disjoint union of partitioning trees of each connected component of $g$.
    We denote by $\text{PF}(g)$ the set of partitioning forest of $g$.
\end{definition}

Let $g$ be a simple graph and $F$ be a partitioning forest of $g$. The graph {\em ripped and sewed through $F$} is the graph $g_T$ obtained
by the following process. Begin with $g_T=\emptyset$ and iteratively repeat the following: for each leaf $V$ of $F$, add $g_{V}$ to $g_F$ and sew $g$
through $V$. Delete all the leaves of $F$ and repeat the previous operation. The process terminates when $F$ is empty.
For $\zeta$ a character of $\SG$, we say that $F$ is a $\zeta$-partitioning forest if $\zeta(g_F)\not=0$.

\begin{proposition}
    Let $\zeta$ be a character of $\SG$, $V$ be a finite set and $g\in\SG[V]$ be a simple graph. We then have:
    \begin{equation*}\begin{split}
        \chi^{\zeta}_V(g)(n) &= \sum_{F\in\text{PF}(g)} \zeta(g_F)\card{\scolorings{F,n}},\\
        \chi^{\zeta}_V(g)(-n) &= \sum_{F\in\text{PF}(g)} (-1)^{\cc(g_F)}\zeta(g_F)\card{\colorings{F,n}}.
    \end{split}\end{equation*}
    If $\zeta$ is a characteristic function, then $\chi^{\zeta}(g)(n)$ is the number of strictly compatible pairs of $\zeta$-partitioning forests
    of $g$ and colorings with $[n]$. Furthermore, if $\zeta$ is odd $(-1)^{\card{V}}\chi^{\zeta}_V(g))(-n)$ is the number of compatible ones. In particular,
    $(-1)^{\card{V}}\chi^{\zeta}_V(g)(-1)$ is the number of $\zeta$-partitioning forests of $g$.
\end{proposition}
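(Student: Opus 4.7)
The plan is to reduce the statement to the previous proposition on building sets using the Hopf monoid morphism $\text{tubes}:\SG\to\BS$ cited from \cite{AA}, in the same spirit as the proofs in the previous subsections.

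More precisely, since $\text{tubes}$ is injective (the singletons and $2$-element tubes of $\text{tubes}(g)$ recover the vertex and edge sets of $g$), any character $\zeta$ of $\SG$ can be extended to a character $\zeta'$ of $\BS$ by setting $\zeta'(B)=\zeta(g)$ whenever $B=\text{tubes}(g)$ and $\zeta'(B)=0$ otherwise. One checks multiplicativity by noting that $\text{tubes}(g_1\sqcup g_2)=\text{tubes}(g_1)\sqcup\text{tubes}(g_2)$, that a graphical building set decomposes as a disjoint union iff the underlying graph does, and that disjoint unions of graphical building sets with disjoint vertex sets that are not both graphical give zero on both sides. By construction $\zeta'\circ\text{tubes}=\zeta$, so Proposition~\ref{prop_polynomial_invariant} gives $\chi^{\SG,\zeta}_V(g)=\chi^{\BS,\zeta'}_V(\text{tubes}(g))$.

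The key combinatorial step is then to match the objects on the two sides. I would prove a bijection $\text{PF}(g)\cong\bfor{\text{tubes}(g)}$ by induction on $\card{V}$, using the inductive description of $B$-trees from Lemma~\ref{lemma_bfor}: the maximal connected sets of $\text{tubes}(g)$ that do not intersect a chosen root $W\subseteq V$ are exactly the vertex sets of the connected components of $g|_{V\setminus W}$ (since by definition tubes are connected induced subgraphs), and the associated connected building sets are precisely $\text{tubes}$ of these components. This matches step by step the inductive construction of partitioning trees in Definition~\ref{def_part_tree}, and clearly preserves the underlying labelled rooted forest, so it preserves (strict) compatibility with colorings.

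The remaining, and main, obstacle is to verify that under this bijection $\text{tubes}(g)\cap F = \text{tubes}(g_F)$, so that $\zeta'(\text{tubes}(g)\cap F)=\zeta(g_F)$ and $\cc(\text{tubes}(g)\cap F)=\cc(g_F)$. Unwinding the definitions, a connected set of $\text{tubes}(g)$ intersected with the maximum possible node of $F$ is a subset $U\cap p$ where $p$ is a node and $g|_U$ is connected with $U$ contained in $\bigcup_{q\leq p} q$; I would show by induction on the depth of $F$ that these are exactly the connected subsets of the graph $g_F$ produced by iteratively adding $g|_p$ for each leaf $p$ and sewing through it, since sewing through $p$ precisely records which vertices become connected via a path through $p$ in the subsequently processed subgraph. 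Once this identification is established, plugging it into the formulas of the previous proposition yields the two stated identities, and the combinatorial interpretations under the characteristic function and odd hypotheses follow immediately (the oddness condition again ensuring that $(-1)^{\card{V}-\cc(g_F)}=1$ whenever $\zeta(g_F)\neq 0$).
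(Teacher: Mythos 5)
Your proposal is correct and follows the same overall reduction as the paper: push $g$ through the Hopf monoid morphism $\text{tubes}:\SG\to\BS$, invoke Proposition~\ref{prop_polynomial_invariant}, identify partitioning forests of $g$ with $\text{tubes}(g)$-forests via the inductive description of Lemma~\ref{lemma_bfor}, and verify $\text{tubes}(g_F)=\text{tubes}(g)\cap F$. (Whether one transports $\zeta$ as $\zeta\circ\text{tubes}^{-1}$ on the image, as the paper does, or extends it by zero to all of $\BS$, as you do, is immaterial: all terms arising in $\chi^{\BS}(\text{tubes}(g))$ stay in the image of $\text{tubes}$, and your multiplicativity check for the extension is sound.) The one place where you genuinely diverge is the step you rightly single out as the main obstacle, namely $\text{tubes}(g_F)=\text{tubes}(g)\cap F$: you propose a hands-on induction on the depth of $F$, tracking which vertex sets become connected after each rip-and-sew. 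That can be made to work, but it is the most delicate part of your plan and remains a sketch. The paper gets this identity formally: listing the nodes of $F$ from the leaves up to the roots yields a decomposition $D$ with $g_F=\mu_D\circ\Delta_D(g)$ by the very definition of ripping and sewing through $F$; since $\text{tubes}$ is a morphism of Hopf monoids, $\text{tubes}(g_F)=\mu_D\circ\Delta_D(\text{tubes}(g))=\max_D(\text{tubes}(g))$, which equals $\text{tubes}(g)\cap F$ by the bijection of Lemma~\ref{lemma_bs} and the identification $\mu_D\circ\Delta_D=\max_D$ already established in the proof of Theorem~\ref{th_chromatic_polynomial}. The formal route costs nothing and eliminates the only fragile induction in your argument; your direct verification buys only independence from that earlier machinery.
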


\begin{proof}
    Since $\chi^{\SG,\zeta}(g)(n) = \chi^{\BS,\zeta\circ\text{tubes}^{-1}}\circ\text{tubes}(g)(n)$, we only must verify that the partitioning forest
    of $g$ are the $\text{tubes}(g)$-forest and that $\text{tubes}(g_F) = \text{tubes}(g)\cap F$. We do this in the case where $g$ is connected, the other
    case being a direct consequence from this one. This is quite straightforward: in the definition of a partitioning tree, the set $W$ is a subset of 
    $V=\text{tubes}(g)$ and if $V_1,\dots, V_k$ are the set of vertices of the connected components of $g_{V\setminus W}$, then there are the maximal
    connected sets of $\text{tubes}(g)$ which does not intersect with $g$. This and Lemma~\ref{lemma_bfor} show that the definition of partitioning trees
    of $g$ is the same than that of $\text{tubes}(g)$-trees and hence they are the same objects.

    Let us now show that $\text{tubes}(g_T)=\text{tubes}(g)\cap T$. Let $V_1,\dots, V_k$ be the nodes of $T$ starting with the leaves and going the way up
    until the root. Denote by $D$ the partition $V_1,\dots, V_k$. Then by definition of $g_T$, $g_T=\mu_D\circ\Delta_D(g)$. Since $\text{tubes}$ is a Hopf 
    monoid morphism, we have that $\text{tubes}(mu_D\circ\Delta_D(g))=\mu_D\circ\Delta_D(\text{tubes}(g))=\max_D(\text{tubes}(g))=b(T)(\text{tubes}(g))
    =\text{tubes}(g)\cap T$, where $\max_D$ is the unique strictly compatible orientation of $D$ and $b$ is the bijection defined in Lemma~\ref{lemma_bs}.
\end{proof}

We have a particular interpretation of this proposition for the basic character.

\begin{corollary}[Corollary 34 in \cite{my1}]
    Let $V$ be a finite set and $g\in\SG[V]$ a simple graph. Then $\chi^{\zeta_1}_V(g)(n)$ is the number of colorings of $V$ with $[n]$ such that every path 
    in $g$ with ends of the same color has a vertex of color strictly greater than the colors of the ends.
\end{corollary}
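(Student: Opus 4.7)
My plan is to invoke the preceding proposition with $\zeta=\zeta_1$, which recasts $\chi^{\zeta_1}_V(g)(n)$ as the number of strictly compatible pairs $(F,c)$ of $\zeta_1$-partitioning forests $F$ of $g$ and colorings $c\colon V\to [n]$; I will then identify such pairs with the colorings satisfying the stated path condition.

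The first key observation is that every $\zeta_1$-partitioning forest of $g$ has only singleton nodes. Each node is a tube of $g$, i.e.\ a set $W$ with $g|_W$ connected; if $|W|\geq 2$ then $g|_W$ contains a direct edge, and because the sewing operation preserves direct edges (a length-one path has no internal vertices), that edge appears in the contribution of $W$ to $g_F=\mu_D\circ\Delta_D(g)$, contradicting discreteness. Consequently, strict compatibility of $c$ with $F$ amounts to requiring that $c$ strictly increase from each vertex to its parent in $F$, or equivalently that the root of every subtree of $F$ be the unique vertex of maximum color in that subtree. In particular, $F$ is uniquely recoverable from $c$ by the recursive procedure: on each connected component of the current induced subgraph, pick the unique maximum-color vertex as root and recurse on the connected components remaining after its deletion. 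So the count of valid pairs $(F,c)$ equals the count of colorings $c$ for which this recursion never encounters a component whose maximum is attained by more than one vertex.

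It remains to match this unique-maximum condition with the path property, which is where the real work lies. Suppose first that the path property fails via some path $u_0 {-} u_1 {-} \cdots {-} u_k$ in $g$ with $c(u_0)=c(u_k)=m$ and $c(u_i)\leq m$ for every $i$, and let $u_p$ be the first among $u_0,\dots,u_k$ to be eliminated by the recursion, at some step $s$. At step $s$ none of the other $u_i$ has yet been removed, so the whole path is present in the current induced subgraph and $u_0$, $u_p$, $u_k$ share a connected component $C$. Then $u_p$ being the maximum of $C$ together with $c(u_0)=c(u_k)=m$ forces $c(u_p)=m$, and the maximum of $C$ is realized by at least two of $u_0,u_p,u_k$, violating uniqueness. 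Conversely, if two maximum-colored vertices appear in some component of a reduced induced subgraph, a connecting path inside that component is also a path in $g$ with all colors at most the common maximum, witnessing failure of the path property. The subtle point is choosing $u_p$ to be the \emph{first eliminated} path vertex rather than some vertex singled out by color or position, as this is what guarantees the whole path still lives in the induced subgraph at step $s$; everything else is routine, and together the two implications close the bijection.
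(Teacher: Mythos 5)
Your overall route is the same as the paper's: reduce via the preceding proposition to counting strictly compatible pairs $(F,c)$ with $F$ a $\zeta_1$-partitioning forest, observe that such $F$ have only singleton nodes so that $F$ is recovered from $c$ by the ``unique maximum in each component, then delete and recurse'' procedure, and match the success of that recursion with the path condition. Both directions of that last equivalence are argued correctly (the paper runs the forward direction through the lowest common ancestor in the tree rather than through your ``first eliminated vertex'', but the content is the same), so the core of the proof is sound.

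The one genuine flaw is in your justification that every node of a $\zeta_1$-partitioning forest is a singleton. You assert that each node $W$ of a partitioning forest is a tube of $g$; this is false. In Definition~\ref{def_part_tree} the chosen set $W$ is an arbitrary subset of the current vertex set, and it is $F_{\leq p}$ (a node together with all of its descendants) that induces a connected subgraph, not the node itself. For instance, in the $4$-cycle $a\!-\!b\!-\!c\!-\!d\!-\!a$ the tree with root $\set{a,c}$ and leaves $\set{b}$, $\set{d}$ is a valid partitioning tree, yet $g|_{\set{a,c}}$ is edgeless, so your ``an edge of $g|_W$ survives sewing'' argument detects nothing there. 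The conclusion is nevertheless true, but it needs the sewing to create an edge rather than merely preserve one: if $\card{p}\geq 2$, connectivity of $g|_{F_{\leq p}}$ gives, for $u\neq v$ in $p$, a path from $u$ to the first vertex $w$ of $p$ that it meets, all of whose internal vertices lie in $F_{\leq p}\setminus p$ and hence have already been sewed through by the time $p$ is processed; this produces an edge $u\!-\!w$ with both ends in $p$, so $g_F$ is not discrete. With that repair your proof goes through and coincides with the paper's.
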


\begin{proof}
    Again we only show this in the case of $g$ connected. Since $\zeta_1$ is the characteristic function of discrete elements, the $\zeta_1$-partitioning trees are 
    the trees where every node is a singleton. Let $T$ be a partitioning tree of $g$ and $c$ a coloring strictly compatible with $T$. Let $\set{v}$ and $\set{v'}$ 
    be two nodes of $T$ of the same color. Then they are incomparable by definition of strict compatibility. Let $\set{a}$ be their lowest common ancestor and $A$
    the set of strict ancestors of $a$ ($a\not\in A$). Again by definition of strict compatibility, $a$ has a color strictly greater than $v$ and $v'$. Every 
    path from $v$ to $v'$ must pass by $a$ since $v$ and $v'$ are in two different connected components of $g_{V\setminus A\cup\set{a}}$ be in a same connected 
    component of $g_{V\setminus A}$. 

    Let now $c$ be a coloring of $V$ such that every path in $g$ with ends of the same color has a vertex of color strictly greater than the colors of the ends.
    Then there is a unique vertex in $c$ which is of maximal color: else, since $g$ is connected, we would have a path between two of them and hence a vertex with a
    color strictly greater. Let $v$ be the unique vertex of maximal color. Then each connected component of $g_{V\setminus v}$ must also have one vertex of maximal
    color. Hence we have an inductive way to construct a tree strictly compatible with $c$. Since this way coincides with the definition of partitioning tree whose
    vertices are singletons, this concludes the proof.
\end{proof}

%%%%%%%%%%%%%%%%%%%%%%%%%%%%%%%%%%%%%%%%%%%%%%%%%%%%%%%%%%%%%%%%%%%%%%%%%%%%%%%%%%%%%%%%%%%%%%%%%%%%%%%%%%%%%%%%%%%%%%%%%%%%%%%%%%%%%%%%%%%%%%%%%%%%%%%%%%%%%%%%%%%%%%%%%%%%%%
%%%%%%%%%%%%%%%%%%%%%%%%%%%%%%%%%%%%%%%%%%%%%%%%%%%%%%%%%%%%%%%%%%%%%%%%%% partitions %%%%%%%%%%%%%%%%%%%%%%%%%%%%%%%%%%%%%%%%%%%%%%%%%%%%%%%%%%%%%%%%%%%%%%%%%%%%%%%
%%%%%%%%%%%%%%%%%%%%%%%%%%%%%%%%%%%%%%%%%%%%%%%%%%%%%%%%%%%%%%%%%%%%%%%%%%%%%%%%%%%%%%%%%%%%%%%%%%%%%%%%%%%%%%%%%%%%%%%%%%%%%%%%%%%%%%%%%%%%%%%%%%%%%%%%%%%%%%%%%%%%%%%%%%%%%%
\subsection{Partitions}
Recall that a {\em partition of $V$} is a set $\set{P_1,\dots, P_k}$ of disjoint non empty sets, called {\em parts}, such that $\sqcup_i P_i = V$. The linear 
species $\Part$ admits a Hopf monoid structure with structure maps:
\begin{align*}
    \mu_{V_1,V_2}: \Part[V_1]\otimes \Part[V_2] &\to \Part[V] & \Delta_{V_1,V_2}: \Part[V] &\to \Part[V_1]\otimes \Part[V_2] \\
    \pi_1\otimes \pi_2 &\mapsto \pi\sqcup \pi_2 & \pi &\mapsto \pi|_{V_1}\otimes \pi|_{V_2}, \nonumber
\end{align*}
where for $\pi=\set{P_1,\dots,P_k}$, $\pi|_{V_1}$ is the partition of $V_1$ obtained by taking the intersection with $V_1$ of each part $P_i$ and forgetting 
the empty parts.

A {\em cliquey graph} is a simple graph which is the disjoint union of cliques. The species morphism which sends a partition $\set{P_1,\dots,P_k}$ on the 
cliquey graph composed of the cliques on $P_1,\dots,P_k$. By Proposition 24.2 of \cite{AA}, this is a Hopf monoid morphism $\Part\to\SG^{cop}$ with the 
ripping and sewing Hopf structure on $\SG$.

We say that a partition $\tau$ refines a partition $\pi$, and denote $\tau\prec\pi$ is the parts of $\pi$ are the unions of the parts of $\tau$.
For $D$ a decomposition, denote by $\pi(D)$ the partition obtained by forgetting the order and the empty parts of $D$. We say that a decomposition $D$ is 
induced by $\pi$ is $\pi(D)\prec\pi$ 
For $\zeta$ a character of $\Part$, we say that a decomposition $D$ is a $\zeta$-decomposition if $\zeta(\pi(D))\not=0$.

\begin{proposition}
    Let $\zeta$ be a character of $\Part$, $V$ be a finite set and $\pi\in\Part[V]$ be a partition. We then have:
    \begin{equation*}\begin{split}
        \chi^{\zeta}_V(\pi)(n) &= \sum_{\tau\prec\pi}\zeta(\tau) \ell(\tau)!\binom{n}{\ell(\tau)}
    \end{split}\end{equation*}
    If $\zeta$ is a characteristic function, then $\chi^{\zeta}(\pi)(n)$ is the number of $\zeta$-decomposition of size $n$ induced by $\pi$. 
\end{proposition}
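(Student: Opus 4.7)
The plan is to prove the formula by direct expansion of the definition~\eqref{polynomial_invariant_formula}, in the spirit of the opening calculation in the proof of Theorem~\ref{th_pol_inv_gp}. For a decomposition $D=(V_1,\dots,V_n)$ of $V$, the Hopf structure on $\Part$ gives $\mu_D\circ\Delta_D(\pi)=\pi|_{V_1}\sqcup\dots\sqcup\pi|_{V_n}$, whose blocks are the non-empty intersections $P\cap V_i$ for $P\in\pi$ and $i\in[n]$; this coincides with the common refinement $\pi\wedge\pi(D)$ of $\pi$ with the partition $\pi(D)$ underlying $D$. Grouping the defining sum by the value $\tau$ of this common refinement then yields
\begin{equation*}
\chi^{\zeta}_V(\pi)(n)=\sum_{\tau\prec\pi}\zeta(\tau)\cdot\card{\set{D\vdash V:\,l(D)=n,\;\pi\wedge\pi(D)=\tau}}.
\end{equation*}

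The main step is to count the decompositions realizing a given $\tau$. A decomposition $D$ realizes $\tau$ exactly when each block $T$ of $\tau$ lies in a single $V_{i(T)}$ and, for each block $P$ of $\pi$, the assignment $T\mapsto i(T)$ is injective on the blocks of $\tau$ contained in $P$, which is what forces each non-empty intersection $P\cap V_i$ to be an entire block of $\tau$ rather than a union of several. In the connected case $\pi=\set{V}$ this reduces to choosing an injection from the $\ell(\tau)$ blocks of $\tau$ into $[n]$, giving exactly $\ell(\tau)!\binom{n}{\ell(\tau)}$ decompositions and producing the formula; the general case then follows by multiplicativity, through Theorem~\ref{th_polynomial_invariant}.2 and the multiplicativity of $\zeta$, applied to the identification of $\pi$ with the product of its single-block factors $\set{P}$ for $P\in\pi$ in $\Part$.

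The stated combinatorial interpretation for a characteristic function $\zeta$ is immediate from the underlying bijection: a decomposition $D$ with $\pi\wedge\pi(D)=\tau$ and $\zeta(\tau)=1$ corresponds precisely to a $\zeta$-decomposition of size $n$ induced by $\pi$. The main obstacle is the combinatorial bookkeeping in computing $\card{\set{D:\pi\wedge\pi(D)=\tau}}$, which hinges on distinguishing merges of parts $V_i$ meeting the same block of $\pi$ (which coarsen the common refinement) from merges of parts meeting different blocks of $\pi$ (which do not).
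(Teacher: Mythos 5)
Your proof is correct on the essential point and takes a genuinely different route from the paper's. The paper proves this proposition by pushing $\pi$ through the Hopf morphism $\Part\to\SG^{cop}$ onto its cliquey graph, observing that the partitioning trees of a clique are exactly the chain trees and hence in bijection with compositions of $V$, counting the strictly compatible colorings of a chain with $k$ nodes as $\binom{n}{k}$, and finally regrouping the $\ell(\tau)!$ compositions sharing a common underlying partition $\tau$. You instead expand Definition~\ref{def_polynomial_invariant} directly: the identification $\mu_D\circ\Delta_D(\pi)=\pi\wedge\pi(D)$ (the meet in the partition lattice) is immediate from the structure maps of $\Part$, and the enumeration of decompositions realizing a fixed $\tau$ is an elementary injection count. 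Your characterization of when $\pi\wedge\pi(D)=\tau$ (each block of $\tau$ monochromatic, with the colour assignment injective separately within each block of $\pi$) is exactly right. Your route is shorter and self-contained, needing none of the building-set and partitioning-tree machinery of the surrounding subsections; the paper's route fits the uniform strategy of Section~\ref{sec_other_hopf} and supplies the chain-tree picture that would also organize the negative-integer evaluation.

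One caveat, which you share with the paper: the displayed formula is literally correct only for $\pi=\set{V}$. Both you and the paper reduce to this connected case by multiplicativity, but the multiplicative extension of the connected-case formula is $\sum_{\tau\prec\pi}\zeta(\tau)\prod_{P\in\pi}\ell(\tau|_P)!\binom{n}{\ell(\tau|_P)}$ rather than $\sum_{\tau\prec\pi}\zeta(\tau)\,\ell(\tau)!\binom{n}{\ell(\tau)}$; already for $\pi$ consisting of two singletons the two expressions differ ($n^{2}$ versus $n(n-1)$). Your own counting criterion, carried out for a general $\pi$ without the detour through multiplicativity, yields precisely the product-over-blocks formula, so your method in fact exposes the needed correction to the statement rather than obscuring it. I would encourage you to state and prove that corrected general formula directly instead of invoking Theorem~\ref{th_polynomial_invariant}.2 at the end.
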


\begin{proof}
    As done in the previous section, since $\chi$ is multiplicative we only need to look at the trivial partition $\pi=\set{V}$. Let $g$ be the image of $\pi$
    by the previously defined morphism \textit{i.e.}  $g$ is the cliquey graph over $V$. For $W\subset V$, it is clear that $g_{V\setminus W}$ is the cliquey graph over
    $V\setminus W$ and the partitioning trees of $g$ are then chain trees. These are in bijection with compositions of $V$. By definition of $g_T$, for $T=T_1,\dots,T_k$ 
    such a tree/decomposition, $g_T$ is the disjoint union of the cliquey graphs over $V_1,\dots,V_k$. A coloring strictly compatible with a line tree with $k$ 
    vertices is an increasing map from $[k]$ to $[n]$. This is equal to $\binom{n}{k}$ and is the number of decomposition of size $n$ which reduce to the line
    tree when forgetting the empty parts. Hence we have $\chi^{\zeta}_V(\set{V})(n) = \sum_{P\vDash V}\zeta(\pi(P))\binom{n}{\ell(P)}$ which is equal to the
    desired formula grouping by partitions $\tau$ such that $\pi(P)=\tau$. 
\end{proof}

We do not give the formula for the negative integers here since the formula for the non negative ones is sufficiently explicit and the objects are simple
enough that the notion of compatible colorings is not particularly revealing. 

%%%%%%%%%%%%%%%%%%%%%%%%%%%%%%%%%%%%%%%%%%%%%%%%%%%%%%%%%%%%%%%%%%%%%%%%%%%%%%%%%%%%%%%%%%%%%%%%%%%%%%%%%%%%%%%%%%%%%%%%%%%%%%%%%%%%%%%%%%%%%%%%%%%%%%%%%%%%%%%%%%%%%%%%%%%%%%
%%%%%%%%%%%%%%%%%%%%%%%%%%%%%%%%%%%%%%%%%%%%%%%%%%%%%%%%%%%%%%%%%%%%%%%%%% set of paths %%%%%%%%%%%%%%%%%%%%%%%%%%%%%%%%%%%%%%%%%%%%%%%%%%%%%%%%%%%%%%%%%%%%%%%%%%%%%%%
%%%%%%%%%%%%%%%%%%%%%%%%%%%%%%%%%%%%%%%%%%%%%%%%%%%%%%%%%%%%%%%%%%%%%%%%%%%%%%%%%%%%%%%%%%%%%%%%%%%%%%%%%%%%%%%%%%%%%%%%%%%%%%%%%%%%%%%%%%%%%%%%%%%%%%%%%%%%%%%%%%%%%%%%%%%%%%

\subsection{Set of paths}
A {\em word} on $V$ is a total ordering of $V$. The {\em paths} on $V$ are the words on $V$ quotiented by the relation $w_1\dots w_{|V|} \sim w_{|V|}\dots w_1$. 
A {\em set of paths} $\alpha$ of $V$ is a partition $\set{V_1,\dots, V_k}$ of $V$ with a path $s_i$ on each part $V_i$ and we will write $\alpha = s_1|\dots|s_l$. 
We denote by $\Path$ the species of set of paths. The species $\Path$ of sets of paths admits a Hopf monoid structure with structure maps
\begin{align*}
    \mu_{V_1,V_2}: \Path[V_1]\otimes \Path[V_2] &\to \Path[V] & \Delta_{V_1,V_2}: \Path[V] &\to \Path[V_1]\otimes \Path[V_2] \\
    \alpha_1\otimes \alpha_2 &\mapsto \alpha_1\sqcup \alpha_2 & \alpha &\mapsto \alpha|_{V_1}\otimes \alpha/_{V_1} \nonumber
\end{align*}
where if $\alpha = s_1|\dots|s_l$, $\alpha|_{V_1}= s_1\cap S|\dots| s_l\cap S$ forgetting the empty parts and $\alpha/_{V_1}$ is the set of paths obtained by 
replacing each occurrence of an element of $V_1$ in $\alpha$ by the separation symbol $|$ and removing the multiplicity of $|$. 

\begin{example}
    For $V=\set{a,b,c,d,e,f,g}$ and $V_1=\set{b,c,e}$ and $V_2=\set{a,d,f,g}$, we have:
    \begin{equation*}
        \Delta_{V_1,V_2}(bfcg|aed) = bc|e\otimes f|g|a|d.
    \end{equation*}
\end{example}

For $\alpha=s_1|\dots|s_l$ a set of path, denote by $l(\alpha)$ the simple graph whose connected components are the paths induced by $s_1,\dots, s_l$. 
By Proposition 25.1 of \cite{AA} we know that $\alpha \mapsto l(\alpha)$ is a morphism of Hopf monoids from $\Path$ to $\SG^{cop}$.

\begin{example}
    For $V=\{a,b,c,d,e,f,g\}$ and $\alpha=bfcg|aed$, $l(\alpha)$ is the following graph:
    \begin{equation*}
        \begin{tikzpicture}[Centering, scale=1]
            \node[NodeHyper, inner sep = 0pt](a)at(0,0){$a$};
            \node[NodeHyper, inner sep = 0pt](b)at(-0.5,1){$b$};
            \node[NodeHyper, inner sep = 0pt](c)at(1.5,1){$c$};
            \node[NodeHyper, inner sep = 0pt](d)at(2,0){$d$};
            \node[NodeHyper, inner sep = 0pt](e)at(1,0){$e$};
            \node[NodeHyper, inner sep = 0pt](f)at(0.5,1){$f$};
            \node[NodeHyper, inner sep = 0pt](g)at(2.5,1){$g$};
            \draw[EdgeHyper](a)--(e);
            \draw[EdgeHyper](e)--(d);
            \draw[EdgeHyper](b)--(f);
            \draw[EdgeHyper](f)--(c);
            \draw[EdgeHyper](c)--(g);
        \end{tikzpicture}.
    \end{equation*}
\end{example}

We only give the interpretation for $\chi^{\zeta_1}$ here.

\begin{proposition}[Corollary 38 in \cite{my1}]
    Let $V$ be a finite set and $\alpha\in\Path[V]$ be a path on $V$. Then $\chi^{\zeta_1}_V(\alpha)(n)$ is the number of strictly compatible pairs of binary trees 
    with $\card{V}$ vertices and colorings with $[n]$ and $\chi^{\zeta_1}_V(\alpha)(-n)$ is the number of compatible pairs of binary trees with $\card{V}$ vertices 
    and colorings with $[n]$. In particular $\chi^{\zeta_1}_V(\alpha)(-1) = C_{\card{V}}$ where $C_n = \frac{1}{n+1}\binom{2n}{n}$ is the $n$-th Catalan number.
\end{proposition}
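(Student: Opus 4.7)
The plan is to reduce the statement to the previously established result on the Hopf monoid $\SG$ with the ripping-and-sewing structure, via the morphism $l:\Path\to\SG^{cop}$ described in Proposition 25.1 of \cite{AA}. First, I would verify that $l$ is compatible with the basic characters: a set of paths $\alpha$ is discrete in $\Path$ exactly when every path has a single vertex, which happens if and only if $l(\alpha)$ is the empty graph on $V$, i.e.\ discrete in $\SG$. Hence $\zeta_1^{\SG}\circ l=\zeta_1^{\Path}$. Since $\chi^{\zeta}$ depends only on the species structure and the set of decompositions summed over (not on the order within a decomposition), we have $\chi^{\SG^{cop},\zeta_1}=\chi^{\SG,\zeta_1}$, so Proposition~\ref{prop_polynomial_invariant} gives $\chi^{\Path,\zeta_1}(\alpha)=\chi^{\SG,\zeta_1}(l(\alpha))$.

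Next, I would apply the corollary on simple graphs stating that $\chi^{\SG,\zeta_1}(g)(n)$ counts colorings of $g$ in which every path with ends of the same color has an internal vertex of strictly larger color, together with the proposition on partitioning forests (the $\zeta_1$-partitioning forests are those whose nodes are all singletons). Since $l(\alpha)$ is a disjoint union of path graphs, it suffices to analyze a single path graph $P_k$ on the vertices of one path of $\alpha$. The inductive definition of partitioning trees (Definition~\ref{def_part_tree}) specializes here to: choose a vertex $v$ of $P_k$ as the root, which cuts $P_k$ into two subpaths, and recurse. Singleton-node partitioning trees of $P_k$ are therefore in canonical bijection with plane rooted binary trees on $|V|$ nodes, the in-order traversal recovering the original linear order.

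Under this bijection, a coloring of $V$ is (strictly) increasing on the partitioning tree if and only if it is (strictly) increasing from root to leaves on the corresponding binary tree, so the sets $\scolorings{F,n}$ and $\colorings{F,n}$ match on both sides. Applying the formulas of the previous subsection therefore shows that $\chi^{\zeta_1}_V(\alpha)(n)$ is the number of strictly compatible pairs, and that the non-positive integer evaluation counts compatible pairs of binary trees on $|V|$ vertices and colorings with $[n]$.

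For the Catalan evaluation, I would plug in $n=1$: there is a unique coloring (the constant one), with which every binary tree is compatible (since only weak inequalities are required along edges). The number of compatible pairs therefore reduces to the number of plane rooted binary trees on $|V|$ nodes, which is $C_{|V|}$. The main technical point will be to write the bijection between singleton-node partitioning trees of path graphs and plane binary trees cleanly enough that compatibility on both sides visibly agrees; once this is done, everything else follows formally from the previous subsection and Proposition~\ref{prop_polynomial_invariant}.
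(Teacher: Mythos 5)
Your proposal is correct and follows essentially the same route as the paper: reduce to $\SG$ via the morphism $l$ using $\chi^{\SG^{cop}}=\chi^{\SG}$ and Proposition~\ref{prop_polynomial_invariant}, then identify the singleton-node partitioning trees of a path graph with binary trees on $\card{V}$ vertices so that (strict) compatibility of colorings is preserved. The extra details you supply (the character compatibility check and the explicit $n=1$ Catalan evaluation) are consistent with, and slightly more explicit than, the paper's own argument.
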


\begin{proof}
    First remark that by definition of $\chi$, we have $\chi^{\SG^{cop}} = \chi^{\SG}$ and so \linebreak[4] $\chi^{\Path,\zeta_1}_V(\alpha)(n) = \chi^{\SG,\zeta_1}_V(l(\alpha))(n)$. 
    Fix one of the two total orderings of $V$ induced by $\alpha$ so that we can consider the left and the right of a vertex $v$ of $l(\alpha)$. Then each 
    vertex of $l(\alpha)$ is totally characterised by the number of vertices on its left (and on its right) and hence the partitioning trees of $l(\alpha)$ 
    with singletons for vertices are exactly the binary trees with $\card{V}$ vertices.
\end{proof}

% \begin{proof}
%     We show by induction on $\card{V}$ that there is a bijection $b$ between tree partitions of $\widetilde{w}$ and partitioning trees of $l(\widetilde{w})$
%     such that $l(\widetilde{w_t}) = l(\widetilde{w})_{b(t)}$ and which conserve the tree structure (\textit{i.e.}  only the labels change) and hence the compatibility with 
%     colorings. The case $\card{V}=1$ is trivial. Let $V$ be a set of size $>1$ and order its vertices following $w=v_1\cdots v_n$. Let $t$ be a tree partition 
%     of $\card{V}$ and let $t_1,\dots, t_k$ its direct sub-trees. Suppose $k=1$ and let $i_1$ be the root of $t$. Then $t_1$ is a tree partition of $\card{w_{i+1,n}}$ 
%     and so by induction $b(t_1)$ is a partitioning tree of $l(\widetilde{w_{i+1,n}})$. Let then $b(t)$ be the tree with root $w_{1,i_1}$ and direct sub-tree $b(t_1)$.
%     It is a partitioning tree of $l(\widetilde{w})$ since $l(\widetilde{w_{i+1,n}})$ is the unique connected component of $l(\widetilde{w})_{w_{i+1,n}}$. Then
%     $l(\widetilde{w_t})=l(\widetilde{w_{1,i_1}}|\widetilde{w_{t_1}})=$
%     Suppose now $k\geq 1$ and let $i_1,\dots i_{k-1}$ be the sequence labeling

\bibliographystyle{plain}
\bibliography{redac}

\end{document}